\documentclass[12pt,letterpaper]{article}
\usepackage[utf8]{inputenc}
\usepackage[english]{babel}
\usepackage[T1]{fontenc}
\usepackage{mathtools}
\usepackage{booktabs}
\usepackage{diagbox}
\usepackage{float}
\usepackage{amsfonts}
\usepackage{apxproof}
\usepackage{enumitem}
\usepackage{bm}
\usepackage{setspace}
\usepackage{mathrsfs}

\usepackage[authoryear]{natbib}
\usepackage{hyperref}

\usepackage[margin=1.25in]{geometry}
\usepackage{amsthm}
\newtheoremrep{theorem}{Theorem}
\newtheorem*{remark}{Remark}

\newtheoremrep{proposition}{Proposition}
\newtheorem{corollary}{Corollary}
\newtheoremrep{lemma}{Lemma}
\newtheoremrep{assumption}{Assumption}

\newcommand{\fnorm}[1]{{\left\| #1 \right\|_F}}

\DeclareMathOperator*{\vecr}{vec}

\newcommand{\argmin}{\mathop{\rm argmin}\limits}

\title{Determining the Structure of Dynamic Factor Models}
\author{Sangmyung Ha\footnote{Department of Economics, Indiana University (Email: ha1@iu.edu). I am grateful to Joon Y. Park, Yoosoon Chang, Keli Xu, and Laura Liu for excellent advice and suggestions. I also thank the participants of the 2023 MEG Meeting, especially Bin Chen, who gave numerous useful suggestions, and the participants of 2024 SETA. All errors are my own.}}
\date{August 12, 2026}

\begin{document}

\maketitle
\begin{abstract}
We propose two procedures for determining the number of dynamic factors, extending \citet{Bai02} and \citet{Ahn-Horenstein-2013} to dynamic factor models where lagged factors may directly influence the observed variables. As an intermediate step, we develop a simple and computationally efficient alternating least squares algorithm that directly estimates the dynamic factors, rather than their static representations. By working with these direct estimates, our approach enables joint determination of the number of factors and the filter length. Our approach does not require the exact finite-order VAR specification maintained by \citet{Bai2007} and \citet{Amengual-Watson-2007}. We apply our procedures to estimate the number of primitive shocks in a large panel of U.S. macroeconomic time series.
\end{abstract}
\vfill

\noindent
JEL classification: C10, C38, C51, C55  \smallskip\\
Key words and phrases: Dynamic Factor Model, Number of Factors, Filter Length, Alternating Least Squares

\newpage
\section{Introduction}
Factor models are a central tool in empirical macroeconomics and finance for summarizing co-movements in large-dimensional time series. Their appeal lies in the idea that a small number of latent factors can explain the dynamics of many observed variables.

A fundamental distinction is between \emph{static} and \emph{dynamic} factor models. In a static factor model, the observables $x_t$ depend only contemporaneously on a low-dimensional vector of factors $f_t$, e.g., $x_t = \lambda f_t + \varepsilon_t$, whereas a dynamic factor model allows lags of the factors to affect the observables, e.g., $x_t = \sum_{k=0}^{m-1} \lambda_k f_{t-k} + \varepsilon_t$. The dynamic specification is more general and economically informative, as it captures delayed responses and propagation mechanisms that are central to macroeconomic and financial dynamics.\footnote{The dynamic factor model studied in this paper allows the factor process $(f_t)$ to exhibit arbitrary serial dynamics, provided that it satisfies the Assumption \ref{pd} to be introduced in Section \ref{sec:model}. This is in contrast to the generalized dynamic factor model of \cite{Forni2000}.} 

While a large body of work has developed methods for estimating the number of static factors—see, for example, \citet{Bai02}, \citet{Onatski-2010}, and \citet{Ahn-Horenstein-2013}—the literature on estimating the number of dynamic factors remains relatively limited. Notable exceptions are \citet{Bai2007} and \citet{Amengual-Watson-2007}. These papers approach the problem by constructing a static representation of the dynamic factor model. Under the assumption that the static representations of the dynamic factors follow a VAR process with a lower-dimensional innovation, \citet{Bai2007} and \citet{Amengual-Watson-2007} develop consistent procedures for estimating the number of dynamic factors. 

Although VAR models provide a flexible and widely used framework for capturing time series dynamics, the literature on the study of the static factor models shows that a VAR specification for the static factors is not necessary for either estimating the factors or determining their number. This raises the question of whether it is possible to consistently estimate the number of dynamic factors without imposing a VAR model on the factors.

This paper proposes a new approach to estimating the number of dynamic factors that does not rely on the assumption that the factors follow a VAR process. Instead, our method requires that the second moment matrix of the static representations of the dynamic factors be positive definite, a substantially weaker condition than imposing a parametric VAR structure. Our approach is enabled by directly estimating the dynamic factors using an alternating least squares algorithm. The procedure iteratively updates factors and loadings via simple least squares steps, making it computationally efficient and scalable to large datasets. When applied to static factor models, the alternating least squares estimator coincides with the principal components estimator commonly used in static factor models. We establish consistency under double asymptotics when both the time-series dimension and the dimension of the observed variable diverge, obtaining the standard convergence rates typically associated with the estimation of static factors. By leveraging consistent estimators of the dynamic factors, we propose two procedures that jointly estimate the number of dynamic factors and the filter length, defined as the length of the horizon over which current and lagged factors exert a direct effect on the observed variables.  

Our approach extends the criteria of \citet{Ahn-Horenstein-2013} and \citet{Bai02} to dynamic factor models. Under general Assumptions to be introduced in Sections \ref{sec:model} and \ref{sec:test}, the number of dynamic factors and the filter length of the dynamic factor model are jointly identified asymptotically independently of any parametric specification for the stochastic process governing the dynamic factors. In particular, if the dynamic factors follow a VAR\((p)\) process, the identification of the filter length is invariant to the choice of the lag order \( p \). We also address the issue of VAR lag order selection using estimated factors when the dynamic factors follow a VAR process.

The eigenvalue ratio test of \citet{Ahn-Horenstein-2013} identifies the number of factors by detecting the point at which the ratio of consecutive singular values of the data matrix diverges. We extend this idea to dynamic factor models by introducing a notion of 'dynamic' singular values, defined as the spectral norm of the residual matrix after removing a given number of estimated dynamic factors. Building on this concept, we propose the dynamic singular value ratio test, which generalizes the eigenvalue ratio test of \citet{Ahn-Horenstein-2013} and coincides with it when applied to a static factor model. Moreover, the proposed framework can be used to determine the filter length of the dynamic factor model by examining the behavior of ratios of consecutive dynamic singular values as the filter length varies for a fixed number of dynamic factors. 

The information criterion of \citet{Bai02} selects the number of static factors by augmenting the residual variance with a penalty term that increases with the number of static factors. Our second procedure extends this approach to dynamic factor models by employing a penalty term that depends on both the number of dynamic factors and the filter length. When we restrict the filter length to one, the proposed criterion coincides with that of \citet{Bai02}, up to a constant factor of two in the penalty term. 

\subsection{Related literature and contributions of this paper}
There is a large literature on dynamic factor models, which can be broadly divided into two classes. The first class comprises generalized dynamic factor models with infinite filter length, which do not admit a finite-dimensional static representation. The second class consists of models with a finite filter length, for which a static representation exists. Our work falls into the second category: the dynamic factor models with a finite filter length. What distinguishes our work from many existing ones is that we work directly with the dynamic factors themselves, rather than the static representations.

The generalized dynamic factor model with infinite filter length was introduced by \citet{Forni2000}, who proposed estimation based on frequency-domain principal components. Subsequent work has developed one-sided representations, estimation methods, and asymptotic theory for these models: \citet{Forni2015}, \citet{Forni-Hallin-Lippi-Zaffaroni-2017}, and \citet{Barigozzi-Hallin-Luciani-Zaffaroni-2023}. Methods for determining the number of dynamic factors in the generalized dynamic factor framework have also been proposed by \citet{Hallin2007} and \citet{Onatski-2010}. However, the greater generality afforded by infinite-length filters typically requires the dynamic factors to be serially uncorrelated processes for identification. We focus our attention on dynamic factor models with finite-length filter in this paper.

A separate strand of the literature studies the static representations of dynamic factor models. \citet{Bai-2003} and \citet{Bai-Li-2016} provide inferential theory for static factors estimated by principal components and quasi–maximum likelihood, respectively. Procedures for determining the number of static factors include information criteria \citep{Bai02} and eigenvalue-ratio tests \citep{Ahn-Horenstein-2013}, among others. Despite this extensive literature, comparatively little attention has been paid to the direct estimation of dynamic factors. An exception is \citet{Bai2015}, who studies Bayesian estimation of dynamic factor models. This paper is most closely related to \citet{Bai2007} and \citet{Amengual-Watson-2007}, who study the estimation of the number of dynamic factors in models with finite filter length. 

To summarize, our contributions are as follows. 

First, we generalize the procedures of \citet{Bai02} and \citet{Ahn-Horenstein-2013} to estimate the number of dynamic factors in dynamic factor models. Our approach is more robust than those of \citet{Bai2007} and \citet{Amengual-Watson-2007}, as it does not require imposing a VAR structure on the static factors. 

Second, our procedure provides a consistent estimator of the filter length, jointly with the number of dynamic factors, which, to the best of our knowledge, is the first in the literature. Identifying the extent to which primitive shocks directly affect the economy is of independent interest. For instance, using the FRED--MD dataset of \citet{McCracken-Ng-2015} over the sample period March 1973 to March 2025, the filter length is estimated to be two.\footnote{The associated number of dynamic factors is found to be four, which increases to seven after the onset of COVID.}

Third, we examine the finite-sample performance of our selection procedures and show that they outperform those of \citet{Bai2007} and \citet{Amengual-Watson-2007} across a range of designs in determining $q$. 

Finally, we establish the consistency of the proposed alternating least squares estimator in estimating the dynamic factors. Furthermore, when the dynamic factors follow a VAR$(p)$ process, we show that the lag order selection can also be done using the information criteria methods, with unobserved factors replaced with their estimates.

The remainder of the paper is organized as follows. Section \ref{sec:model} presents the model and discusses representations of the dynamic factors. Section \ref{sec:estimation} describes the alternating least squares estimation procedure and the consistency of the proposed method. Section \ref{sec:test} develops two methods for determining the number of dynamic factors and the filter length. Section \ref{sec:simulations} reports simulation evidence on the finite-sample performance of the proposed procedures. Section \ref{sec:empirical} provides an empirical application. Section \ref{sec:conclusion} concludes. All technical proofs are collected in the Appendix.

\paragraph{Notations.}
We denote by $\|\cdot\|$ the Euclidean norm when applied to vectors and the associated spectral (operator) norm when applied to matrices. The Frobenius norm is denoted by $\fnorm{\cdot}$. For a symmetric matrix $A$, the notation $A>0$ and $A\geq0$ indicates that $A$ is positive definite and positive semi-definite, respectively. We denote $P_A=A(A'A)^{-1}A'$ and $M_A=I-P_A$ for full column matrix $A$. We write $\lceil x \rceil$ for the smallest integer greater than or equal to $x$ and $\lfloor x\rfloor$ for the largest integer less than or equal to $x$. Throughout, $q$ denotes the number of dynamic factors, $m$ the filter length, and $r = qm$ the corresponding number of static factors. Finally, we denote $\mathbb{Z}$, $\mathbb{N}_0$, and $\mathbb{R}$ the set of integers, nonnegative integers, and real numbers, respectively.

\section{Model, Assumptions, and Representations}\label{sec:model}

\subsection{Model}
We denote the pair $(q,m)$ representing the number of dynamic factors and the filter length as the structure of a dynamic factor model. For a given structure $(q,m)$, we consider the finite-order dynamic factor model
\begin{equation}\label{eq:dynamic_factor_model}
x_{it}=\chi_{it}+\varepsilon_{it},\qquad
\chi_{it}=\sum_{k=0}^{m-1}\lambda_{ik}'f_{t-k},\qquad
i=1,\ldots,N,\quad t=1,\ldots,T,
\end{equation}
where $f_t\in\mathbb R^q$ is the vector of dynamic factors, $\lambda_{ik}\in\mathbb R^q$ is the loading vector associated with lag $k$, and $m$ is the filter length. We assume that the factor process is defined for all $t\in\mathbb Z$, so the required presample lags are well defined.

When $m=1$, \eqref{eq:dynamic_factor_model} coincides with a static factor model. If the finite-order restriction is relaxed and the loading filter is allowed to be infinite under appropriate summability conditions, the model is related to the generalized dynamic factor framework of \cite{Forni2000}. We restrict attention to the finite-filter case $m<\infty$, which can be viewed as a finite-order dynamic extension of the approximate static factor models studied in \cite{Bai-2003} and \cite{Stock02}.

Let $r=qm$ and define the stacked static factor and loading vectors by
\[
g_t=
\begin{pmatrix}
f_t'&f_{t-1}'&\cdots&f_{t-m+1}'
\end{pmatrix}'
\in\mathbb R^r,
\qquad
\gamma_i=
\begin{pmatrix}
\lambda_{i0}'&\lambda_{i1}'&\cdots&\lambda_{i,m-1}'
\end{pmatrix}'
\in\mathbb R^r.
\]
Then \eqref{eq:dynamic_factor_model} has the algebraically equivalent static representation
\begin{equation}\label{eq:static_representation_individual}
x_{it}=\gamma_i'g_t+\varepsilon_{it}.
\end{equation}
For each $k=0,\ldots,m-1$, define the $N\times q$ loading matrix
\[
\lambda_k=
\begin{pmatrix}
\lambda_{1k}'\\
\vdots\\
\lambda_{Nk}'
\end{pmatrix},
\]
and let
\[
x_t=
\begin{pmatrix}
x_{1t}\\
\vdots\\
x_{Nt}
\end{pmatrix},
\qquad
\varepsilon_t=
\begin{pmatrix}
\varepsilon_{1t}\\
\vdots\\
\varepsilon_{Nt}
\end{pmatrix},
\qquad
\Gamma=
\begin{pmatrix}
\lambda_0&\lambda_1&\cdots&\lambda_{m-1}
\end{pmatrix}
\in\mathbb R^{N\times r}.
\]
Equivalently, the $i$-th row of $\Gamma$ is $\gamma_i'$. The vector form of the model is therefore
\begin{equation}\label{eq:static_representation_cross_section}
x_t=\Gamma g_t+\varepsilon_t.
\end{equation}
Stacking the observations over time, define
\[
X=
\begin{pmatrix}
x_1'\\
\vdots\\
x_T'
\end{pmatrix}
\in\mathbb R^{T\times N},
\qquad
E=
\begin{pmatrix}
\varepsilon_1'\\
\vdots\\
\varepsilon_T'
\end{pmatrix}
\in\mathbb R^{T\times N},
\qquad
G=
\begin{pmatrix}
g_1'\\
\vdots\\
g_T'
\end{pmatrix}
\in\mathbb R^{T\times r}.
\]
The model can then be written in static matrix form as
\begin{equation}\label{eq:static_representation_matrix}
X=G\Gamma'+E.
\end{equation}
For $k=0,\ldots,m-1$, let
\[
F_{-k}=
\begin{pmatrix}
f_{1-k}'\\
\vdots\\
f_{T-k}'
\end{pmatrix}
\in\mathbb R^{T\times q}
\]
collect the $k$-th lag of the dynamic factors. It follows that
\[
G=
\begin{pmatrix}
F_0&F_{-1}&\cdots&F_{-(m-1)}
\end{pmatrix},
\]
and hence the dynamic matrix representation is
\begin{equation}\label{eq:dynamic_representation_matrix}
X=\sum_{k=0}^{m-1}F_{-k}\lambda_k'+E.
\end{equation}
For convenience, we write $F=F_0$ for the $T\times q$ matrix collecting the contemporaneous dynamic factors.

Although \eqref{eq:static_representation_matrix} has the algebraic form of a static factor model with $r=qm$ factor coordinates, the columns of $G$ are not unrestricted. In particular, the $m$ blocks of $g_t$ are lagged copies of the same $q$-dimensional process. If $g_{jt}$ denotes the $j$th $q$-dimensional block of $g_t$, then
\[
g_{1t}=f_t,\qquad g_{j+1,t}=g_{j,t-1},\qquad j=1,\ldots,m-1.
\]
Relative to an unrestricted static factor model with $r$ factors, the dynamic parameterization therefore uses only $q$ underlying factor series and imposes cross-time shift restrictions on their stacked representation. This restriction is the source of its parsimony; the number of loading coefficients remains $Nqm$ in both representations.

The common component may admit multiple dynamic representations. To distinguish the true structure from generic candidate structures, we let $q_0$ denote the smallest factor dimension among all exact representations of the common component. Among representations attaining $q_0$, let $m_0$ denote the smallest filter length, and define $r_0=q_0m_0$. We refer to $(q_0,m_0)$ as the true structure. We use $(q,m)$ and $r=qm$ to denote a generic candidate structure. The existence and characterization of alternative observationally equivalent representations are discussed in Section \ref{sec:dynamic_representations}.

\subsection{Assumptions}

Throughout, $q_0$ and $m_0$ are fixed, $r_0=q_0m_0$, and $N,T\to\infty$ jointly. Recall that $G\in\mathbb R^{T\times r_0}$ and $\Gamma\in\mathbb R^{N\times r_0}$ denote the stacked factor and loading matrices associated with the true structure $(q_0,m_0)$. The following two assumptions are used to establish consistency of the estimated factor and loading spaces. Additional assumptions used to establish consistency of the procedures for determining $(q_0, m_0)$ are introduced in Section \ref{sec:test}.

\begin{assumption}[Nondegeneracy and pervasiveness]\label{pd}
The stacked factor and loading matrices associated with the true structure satisfy
\[
\frac{1}{T}G'G\to_p\Sigma_g,
\qquad
\frac{1}{N}\Gamma'\Gamma\to_p\Sigma_\gamma,
\]
as $N,T\to\infty$, where $\Sigma_g$ and $\Sigma_\gamma$ are finite $r_0\times r_0$ positive-definite matrices.
\end{assumption}

\begin{assumption}[Weak idiosyncratic component]
\label{error_norm}
The idiosyncratic-error matrix satisfies
\[
\|E\|
=
O_p\left(\sqrt{\max\{N,T\}}\right).
\]
\end{assumption}

The two parts of Assumption \ref{pd} have distinct roles. The first requires the lag-stacked factor vector
\[
g_{t}
=
\begin{pmatrix}
f_t'&f_{t-1}'&\cdots&f_{t-m_0+1}'
\end{pmatrix}'
\]
to have a nonsingular asymptotic second-moment matrix. The second is the usual pervasiveness condition, requiring each true static-factor direction to have nonvanishing cross-sectional strength.

A convenient sufficient condition for the factor-side restriction is that $(f_t)$ is a covariance-stationary and ergodic process with finite second moments whose spectral density matrix is positive definite on a set of frequencies with positive Lebesgue measure. Under this condition,
\[
\mathbb E(g_{t}g_{t}')>0,
\]
and an ergodic law of large numbers yields
\[
\frac{1}{T}G'G\to_p \mathbb E(g_{t}g_{t}').
\]
This condition includes, for example, stable and invertible VARMA processes driven by innovations with positive-definite covariance. In particular, a stable VAR driven by full-rank innovations has a spectral density matrix that is positive definite at every frequency; see, for example, \cite[Sections 11.3 and 11.8]{BrockwellDavis1991}. The factor-side condition in Assumption \ref{pd} is therefore weaker than requiring a finite-order VAR specification.\footnote{The factor-side condition in Assumption \ref{pd} is discussed in greater detail in Supplemental Appendix \ref{apx:discussion_assumption}.}
Analogous nondegeneracy and pervasiveness conditions are maintained in \cite{Bai2007} and \cite{Amengual-Watson-2007} when estimating the number of static factors. Unlike their fixed-order VAR formulations, however, Assumption \ref{pd} does not require the factors to possess an exact finite-order autoregressive representation. For example, a generic invertible VMA$(1)$ process has a VAR$(\infty)$ representation and can satisfy Assumption \ref{pd}, even though it does not generally admit an exact finite-order VAR representation. \cite{Bai2007} discuss finite-order VAR approximations to such processes but note that their fixed-order theory does not directly cover the MA case.

Assumption \ref{error_norm} places a global bound on the strength of the idiosyncratic component. It permits weak cross-sectional and serial dependence in the errors but rules out an additional idiosyncratic component with singular values of the same order as those generated by the strong common component. Spectral-norm and related eigenvalue restrictions of this type are standard in the factor-model literature; see, for example, \cite{Onatski-2010}, \cite{Ahn-Horenstein-2013}, \cite{Moon_Weidner_2017}, and \cite{Bai-Ng-2023}.

\subsection{Equivalent Dynamic Representations}
\label{sec:dynamic_representations}

Recall that the true common component is
\[
\chi_t=\sum_{k=0}^{m_0-1}\lambda_kf_{t-k},
\qquad
f_t\in\mathbb R^{q_0}.
\]
Throughout this subsection, we restrict attention to exact finite-order causal linear reparameterizations of the true factor process. Specifically, an alternative representation with structure $(q,m)$ is of the form
\[
h_t=H(L)f_t=\sum_{\ell=0}^{d}H_\ell f_{t-\ell},
\qquad
\chi_t=\sum_{k=0}^{m-1}\widetilde \Gamma_kh_{t-k},
\]
for some finite $d$, where $H_\ell\in\mathbb R^{q\times q_0}$ and $\widetilde \Gamma_k\in\mathbb R^{N\times q}$. We interpret $m$ as the maximum candidate filter length, so some $\widetilde \Gamma_k$ may be zero, and the coordinates of $h_t$ may be redundant.

The true representation is minimal within this class: no exact representation exists with factor dimension $q<q_0$, and, among representations with $q=q_0$, no exact representation exists with filter length $m<m_0$. Two dynamic factor representations are called equivalent if they generate the same common component $\chi_t$ for every $t$. When combined with the same idiosyncratic component $\varepsilon_t$, the resulting models for $x_t$ are observationally equivalent.

The true representation $(q_0,m_0)$ always admits the static structure $(r_0,1)$ obtained by stacking the current and lagged values of $f_t$. There may exist other admissible dynamic representations that are observationally equivalent. To illustrate, consider a true model with structure $(q_0,m_0)=(2,3)$:
\[
\chi_t=\lambda_0f_t+\lambda_1f_{t-1}+\lambda_2f_{t-2},
\qquad
f_t\in\mathbb R^2.
\]
Its static representation is
\begin{equation}
\label{eq:example_static}
\chi_t=
\begin{pmatrix}
\lambda_0&\lambda_1&\lambda_2
\end{pmatrix}
g_t,
\qquad
g_t=
\begin{pmatrix}
f_t\\
f_{t-1}\\
f_{t-2}
\end{pmatrix}
\in\mathbb R^6.
\end{equation}
Thus, $(q,m)=(6,1)$ is an admissible structure. The same common component also admits a representation with structure $(q,m)=(3,2)$. Define
\[
h_t=
\begin{pmatrix}
f_{t,1}\\
f_{t,2}+f_{t-1,1}\\
f_{t-1,2}
\end{pmatrix}
\in\mathbb R^3
\]
and let
\[
\widetilde\Gamma_0=
\begin{pmatrix}
\lambda_{0,1}&
\lambda_{0,2}&
\lambda_{1,2}-\lambda_{2,1}
\end{pmatrix},
\qquad
\widetilde\Gamma_1=
\begin{pmatrix}
\lambda_{1,1}-\lambda_{0,2}&
\lambda_{2,1}&
\lambda_{2,2}
\end{pmatrix},
\]
where $\lambda_{k,j}$ denotes the $j$-th column of $\lambda_k$. Then
\begin{equation}
\label{eq:example_dynamic}
\chi_t=\widetilde\Gamma_0h_t+\widetilde\Gamma_1h_{t-1}.
\end{equation}
Hence, $(q,m)=(3,2)$ is also admissible.

The example illustrates the tradeoff between the number of factor coordinates and the filter length. Reducing the filter length below $m_0$ requires additional factor coordinates, while increasing the factor dimension weakly reduces the minimum admissible filter length. The following proposition shows that admissibility is governed by the factor dimension $q$ and the comparison between $r=qm$ and the true static rank $r_0=q_0m_0$.

\begin{proposition}[Characterization of admissible dynamic structures]
\label{alt_spec}
Let $r_0=q_0m_0$. Every pair $(q,m)\in\mathbb N^2$ satisfying
\[
q\geq q_0
\qquad\text{and}\qquad
qm\geq r_0
\]
is an admissible dynamic structure.

Conversely, every admissible dynamic structure satisfies $q\geq q_0$. Moreover, under Assumption \ref{pd}, an exact representation with $qm<r_0$ is ruled out with probability approaching one as $N,T\to\infty$. Thus, admissible dynamic structures are asymptotically characterized by
\[
q\geq q_0
\qquad\text{and}\qquad
qm\geq q_0m_0.
\]
\end{proposition}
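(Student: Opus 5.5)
The argument has two parts: a constructive sufficiency part, and a rank-counting necessity part.

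For sufficiency, fix $(q,m)$ with $q\ge q_0$ and $qm\ge r_0$. Since the class of representations allows zero loadings and redundant coordinates, it suffices to treat the minimal filter length $m=\lceil r_0/q\rceil$ for each $q\ge q_0$. I will reduce the problem to a linear-algebra condition in the lag operator. Write $h_t=H(L)f_t$ with $H(L)=\sum_{\ell=0}^d H_\ell L^\ell$. The $qm$ scalar series in $(h_t',\dots,h_{t-m+1}')'$ are then fixed linear combinations of the lagged coordinates $f_{t-\ell,j}$. It is enough to choose $H(L)$ so that the row span of these $qm$ combinations contains every coordinate $f_{t-k,j}$ with $k<m_0$ and $j\le q_0$. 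In that case $g_t$ is an exact linear function of the stacked $h$'s, and the loadings $\widetilde\Gamma_k$ are read off from $\Gamma$.

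To build $H(L)$ I would generalize the $(2,3)\to(3,2)$ example. List the $r_0$ scalars of $g_t$ in the order $s_{kq_0+j}=f_{t-k,j}$. A one-period lag shifts this index by $q_0$, while a lag of the new factor should shift it by $q$. Cut the index set $\{1,\dots,r_0\}$ into $m$ consecutive windows of length $q$, padding with zeros at the end. Let the $p$-th coordinate of $h_t$ be the sum of all $f_{t-\ell,j}$ whose index is congruent to $p$ modulo $q$ and which lies in the first window after an appropriate shift. I will then verify by a triangular (back-substitution) argument over the windows, starting from the earliest one, that each $f_{t-k,j}$ is recovered as $h_{t-a,p}$ minus previously recovered terms. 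This is exactly how the example recovers $f_{t-1,1}=h_{t,2}-f_{t,2}$.

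This index bookkeeping is the step I expect to be the main obstacle, because the wrap-around terms must cancel exactly. If the explicit construction becomes unwieldy, my fallback is induction on $r_0-q$. If $q\ge r_0$, take $h_t=(g_t',0')'$ and $m=1$. Otherwise, remove one coordinate from the static side by merging two series that differ by one lag, as in the example, and reduce to the case $(q,m-1)$ applied to a shifted problem.

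For the converse, $q\ge q_0$ holds by the definition of $q_0$ as the minimal factor dimension among exact representations. For the rank claim, note that any representation with structure $(q,m)$ gives $\chi=\widetilde G\widetilde\Gamma'$ with $\widetilde G\in\mathbb R^{T\times qm}$, so $\operatorname{rank}(G\Gamma')\le qm$. By Assumption \ref{pd}, $T^{-1}G'G\to_p\Sigma_g>0$ and $N^{-1}\Gamma'\Gamma\to_p\Sigma_\gamma>0$. Hence with probability approaching one $G$ and $\Gamma$ both have full column rank $r_0$, so $\operatorname{rank}(G\Gamma')=r_0$. An exact representation with $qm<r_0$ is therefore ruled out with probability approaching one, and combining this with sufficiency gives the asymptotic characterization.
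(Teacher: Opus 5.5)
\textbf{Verdict: the proof has a genuine gap in the sufficiency part.}

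\textbf{What is correct.} Your converse matches the paper: $q\ge q_0$ holds by minimality, and under Assumption \ref{pd} the matrix $G\Gamma'$ has rank $r_0$ with probability approaching one, while any $(q,m)$ representation has rank at most $qm$. Your reductions for sufficiency are also sound. Zero loadings let you restrict to $m=\lceil r_0/q\rceil$. Requiring the $qm$ stacked combinations to span every $f_{t-k,j}$ with $k<m_0$ is the same criterion the paper uses, namely full column rank of $\mathcal H$ in $\tilde g_t=\mathcal Hg_t$.

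\textbf{The gap.} It is the step you flag yourself: for $m<m_0$ you never specify $H(L)$ or prove the spanning property. That step is the entire content of sufficiency. The stated rule does not determine $h_t$, and its natural reading already fails in the paper's $(2,3)\to(3,2)$ example. Take $s_1,\dots,s_6=f_{t,1},f_{t,2},f_{t-1,1},f_{t-1,2},f_{t-2,1},f_{t-2,2}$; one lag of $h$ shifts every index by $q_0=2$. Read the rule as $h_{t,p}=\sum_a s_{p+a(q-q_0)}$, which is what makes $h_{t-a,p}$ contain $s_{aq+p}$. This gives $h_t=(s_1+s_2,\,s_2+s_3,\,s_3+s_4)$, so $h_{t-1,1}=s_3+s_4=h_{t,3}$, and the six stacked series have rank at most five. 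The example's actual $h_t=(s_1,\,s_2+s_3,\,s_4)$ is not produced by that rule. The fallback induction does not close the gap either. After merging one pair of series, the reduced problem is no longer a lag stack of a $q_0$-dimensional process. So the proposition is unavailable as an induction hypothesis unless you formulate and prove a more general statement.

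\textbf{Why this is more than bookkeeping.} Suppose a construction keeps $f_t$ as a block of $h_t$ and adds $p=q-q_0$ combinations of $f_t,\dots,f_{t-d}$, with $d=m_0-m$. It fails whenever $p<q_0$, as in the example. Among the stacked series, only the $p$ added coordinates of $h_{t-m+1}$ load on $f_{t-m_0+1}$, so the $q_0$ coordinates of $f_{t-m_0+1}$ cannot all lie in the span.

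\textbf{The missing idea.} You must couple coordinates across lags so that the induced shift on the $q_0d$ missing coordinates is a single cycle. The paper does this as follows.
\begin{enumerate}
\item It takes $h_t=\bigl((f_t-\Pi f_{t-d})',\,(\sum_{j=1}^dB_jf_{t-j})'\bigr)'$, where $\Pi$ is a cyclic permutation of $\mathbb R^{q_0}$.
\item Writing $z_k=(y_k',\dots,y_{k+d-1}')'$, the equation $\mathcal Hy=0$ becomes $z_k=Tz_{k+1}$ and $Cz_{k+1}=0$ for $k<m$. Here $C=(B_1,\dots,B_d)$ and $T$ is a permutation consisting of a single cycle of length $q_0d$.
\item Choose the $s=\lceil q_0d/m\rceil\le p$ nonzero rows of $C$ as $e'T^{jm}$. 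Then the matrix stacking $C,CT,\dots,CT^{m-1}$ contains $e',e'T,\dots,e'T^{q_0d-1}$, which is every canonical row.
\item Hence $z_m=0$, so $y=0$, $\mathcal H$ has a left inverse $\mathcal H^\ell$, and $\widetilde\Gamma=\Gamma\mathcal H^\ell$ gives the representation.
\end{enumerate}
Your proposal needs some such explicit construction together with a verified rank argument.
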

Proposition \ref{alt_spec} implies that structures satisfying $qm=r_0$ lie on the boundary of the admissible region and may provide nonredundant representations of the common component. However, the true structure $(q_0,m_0)$ is the most parsimonious representation because its factor dimension is the smallest. Structures satisfying $qm>r_0$ are also admissible, but necessarily contain redundant static factor coordinates or loading directions under Assumption \ref{pd}.

\begin{corollary}[Lower bounds on admissible structures]
\label{cor:lower_bounds}
Suppose that $q_0,m_0\geq1$ and let $r_0=q_0m_0$. Under Assumption
\ref{pd}, the minimum admissible factor dimension for a fixed
$m\geq1$ is
\[
q_{\min}(m)
=
\max\left\{
q_0,
\left\lceil\frac{r_0}{m}\right\rceil
\right\}
=
\begin{cases}
\displaystyle
\left\lceil\frac{q_0m_0}{m}\right\rceil,
&1\leq m<m_0,\\[2mm]
q_0,
&m\geq m_0.
\end{cases}
\]
For a fixed $q\geq q_0$, the minimum admissible filter length is
\[
m_{\min}(q)
=
\left\lceil\frac{r_0}{q}\right\rceil
=
\left\lceil\frac{q_0m_0}{q}\right\rceil.
\]
These lower bounds are asymptotically sharp.
\end{corollary}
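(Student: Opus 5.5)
The corollary follows from Proposition \ref{alt_spec} by elementary integer arithmetic. The proposition says that, with probability approaching one, the admissible set is
\[
\mathcal A=\{(q,m)\in\mathbb N^2:\ q\geq q_0,\ qm\geq r_0\}.
\]
So for each fixed $m$ I will compute the smallest $q$ in the slice of $\mathcal A$, and for each fixed $q\geq q_0$ the smallest $m$. Sharpness means the bound is attained by an admissible structure and every strictly smaller value is inadmissible with probability approaching one. Both facts come directly from the two directions of Proposition \ref{alt_spec}.

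\textbf{Fixed $m$.} For integer $q$, the condition $qm\geq r_0$ is equivalent to $q\geq r_0/m$, hence to $q\geq\lceil r_0/m\rceil$. Combined with $q\geq q_0$, the slice of $\mathcal A$ is $\{q:\ q\geq\max\{q_0,\lceil r_0/m\rceil\}\}$, whose minimum is $q_{\min}(m)$.

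Attainment follows from the sufficiency part of Proposition \ref{alt_spec}: the pair $(q_{\min}(m),m)$ satisfies both inequalities, so it is admissible deterministically. Sharpness follows from the converse part. If $q<q_0$, the structure is never admissible. If $q<\lceil r_0/m\rceil$, then $qm<r_0$, which is ruled out with probability approaching one. Since there are finitely many such $q$, a union bound preserves this.

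For the case split:
\begin{itemize}
\item If $m\geq m_0$, then $r_0/m\leq q_0$, so $\lceil r_0/m\rceil\leq q_0$ and the maximum is $q_0$.
\item If $1\leq m<m_0$, then $r_0/m=q_0m_0/m>q_0$. Hence $\lceil r_0/m\rceil\geq q_0$ and the maximum equals $\lceil r_0/m\rceil$.
\end{itemize}

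\textbf{Fixed $q\geq q_0$.} The constraint $q\geq q_0$ already holds. For integer $m$, $qm\geq r_0$ is equivalent to $m\geq\lceil r_0/q\rceil$, which is at least $1$ because $r_0\geq1$. This gives $m_{\min}(q)$. Attainment and sharpness follow as above: $(q,\lceil r_0/q\rceil)$ is admissible by the sufficiency part, and any smaller $m$ gives $qm<r_0$, which is excluded with probability approaching one.

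The only delicate point is what "asymptotically sharp" means. Attainment holds exactly and non-asymptotically, while the exclusion of smaller values holds only with probability approaching one, inherited from the $qm<r_0$ clause of Proposition \ref{alt_spec}. Apart from that, the argument is routine ceiling arithmetic.
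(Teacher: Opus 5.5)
Your proposal is correct and follows essentially the same route as the paper: it reads off the slice of the admissible region $\{q\geq q_0,\ qm\geq r_0\}$ from Proposition \ref{alt_spec}, converts $qm\geq r_0$ into a ceiling inequality, splits on $m<m_0$ versus $m\geq m_0$, and takes attainment and asymptotic sharpness from the sufficiency and converse parts of that proposition. Your explicit separation of deterministic attainment from the with-probability-approaching-one exclusion of smaller values is a small clarification that the paper leaves implicit.
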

For $1\leq m<m_0$, the minimum number of additional factor coordinates required to reduce the candidate filter length to $m$ is
\[
q_{\min}(m)-q_0
=
\left\lceil\frac{q_0m_0}{m}\right\rceil-q_0.
\]
Moreover,
\[
m_{\min}(q_0)=m_0,
\]
and $m_{\min}(q)$ is weakly decreasing in $q$. Because of the ceiling operator, increasing $q$ need not strictly reduce the minimum filter length at every value of $q$.

Let $q_{max}\geq q_0$ and $m_{max}\geq m_0$ be given, and let
\[
S
=
\{0,\ldots,q_{\max}\}
\times
\{0,\ldots,m_{\max}\}.
\]
Consider the disjoint partition of $S$ given by
\begin{equation}
\begin{split}
\label{partition_set}
S_1
&=
\left\{
(q,m)\in S:
q\geq q_0,\quad
qm\geq r_0
\right\},\\
S_2
&=
\left\{
(q,m)\in S:
qm<r_0
\right\},\\
S_3
&=
\left\{
(q,m)\in S:
q<q_0,\quad
qm\geq r_0
\right\}.
\end{split}
\end{equation}
Proposition \ref{alt_spec} shows that every structure in $S_1$ admits an exact representation of the true common component. Structures in $S_2$ have static dimension smaller than the true static rank $r_0=q_0m_0$. Under Assumption \ref{pd}, the matrix of true common components has rank $r_0$ with probability approaching one, whereas a representation with structure $(q,m)$ has rank at most $qm$. Hence, structures in $S_2$ cannot reproduce the true common component with probability approaching one. Structures in $S_3$ have sufficient static dimension, $qm\geq q_0m_0$, but fewer than $q_0$ dynamic factor coordinates. The static-rank argument therefore does not distinguish these structures from the true model. Their misspecification instead follows from the minimality of $q_0$.

\section{Estimation}\label{sec:estimation}
In the dynamic factor model,
\[
x_t = \sum_{k=0}^{m-1} \lambda_k f_{t-k} + \varepsilon_t, \quad t=1,\ldots,T,
\]
the dynamic factors $(f_t)$ are indexed from $t = 2-m$ to $t = T$ to account for the lagged factors entering the equation. We jointly estimate $(f_t)_{t=2-m}^{T}$ and $(\lambda_k)_{k=0}^{m-1}$ by minimizing the least squares objective
\begin{align}\label{obj_fn}
    \frac{1}{NT} \fnorm{X - \sum_{k=0}^{m-1} F_{-k} \lambda_k'}^2 = \frac{1}{NT}\sum_{t=1}^T\left\|x_t-\sum_{k=0}^{m-1} \lambda_k f_{t-k}\right\|^2
\end{align}
over $(f_t)$ and $(\lambda_k)$. The first-order conditions for the factor loadings $(\lambda_j)$ are
\begin{align}\label{foc_lambda}
    \sum_{k=0}^{m-1} \hat \lambda_k \hat F_{-k}' \hat F_{-j} = X' \hat F_{-j}, \quad j=0,\ldots,m-1.
\end{align}
The first-order conditions for $(f_t)$ are
\begin{align}\label{foc_f}
    \sum_{j=\max(0,1-t)}^{\min(m-1,T-t)}\lambda_j'\left(x_{t+j}-\sum_{k=0}^{m-1}\lambda_kf_{t+j-k}\right) = 0
\end{align}
for $t=2-m,\ldots,T$.

We minimize \eqref{obj_fn} using an alternating least squares procedure. We start with an initial guess for $(\hat \lambda_k)$. Given $(\hat \lambda_k)$, we may update $(f_t)$ by solving \eqref{foc_f}. Given $(\hat f_t)$, we may update $(\lambda_k)$ by solving \eqref{foc_lambda}. We repeat the updating steps until convergence. The first-order conditions \eqref{foc_lambda} define a multivariate regression with rank $qm$, which has a closed-form solution. The first-order condition \eqref{foc_f} can be efficiently solved using the discrete Fourier transform, requiring inversion of only $q \times q$ matrices.\footnote{Derivation of the discrete Fourier transform needed to solve the first order condition is given in the Supplemental Materials \ref{sec:DFF}.} Since each step minimizes a least squares objective, the algorithm guarantees a monotone decrease of the objective function. The minimization problem is not convex, and the alternating least squares algorithm generally does not converge to the global minimum. We try multiple initial values and choose the solution that obtains the smallest objective function value in practice.

\subsection{Consistency}\label{sec:consistency}
Before introducing the selection criteria for the structure $(q,m)$ of the dynamic factor models, we first show that the estimated dynamic factors are consistent.

\begin{proposition}[Factor-space consistency under admissible structures]
\label{consistency_1}
Let Assumptions \ref{pd} and \ref{error_norm} hold. For any fixed $(q,m)\in S_1$, let $(\hat G,\hat\Gamma)$ be a global minimizer of the least-squares objective \eqref{obj_fn} over representations with structure $(q,m)$. Then
\[
\frac{1}{\sqrt T}\fnorm{M_{\hat G}G}
=
O_p\left(
\frac{1}{\sqrt N}+\frac{1}{\sqrt T}
\right),
\]
where $G$ is the true stacked factor matrix.
\end{proposition}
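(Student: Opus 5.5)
The argument follows the standard comparison argument for least-squares factor estimators, carried out entirely in the static representation. Write the fitted common component as $\hat C=\hat G\hat\Gamma'$, where $\hat G\in\mathbb R^{T\times qm}$ stacks the estimated $\hat F_{-k}$ and $\hat\Gamma$ stacks the $\hat\lambda_k$. By Proposition \ref{alt_spec}, any $(q,m)\in S_1$ admits an exact representation of the true common component $C_0=G\Gamma'$. That representation is therefore a feasible point of the structured least-squares problem \eqref{obj_fn}, and global optimality gives
\[
\fnorm{X-\hat C}^2\le \fnorm{X-C_0}^2=\fnorm{E}^2 .
\]
Substituting $X=C_0+E$ and expanding yields the basic inequality
\[
\fnorm{\hat C-C_0}^2\le 2\,\mathrm{tr}\big(E'(\hat C-C_0)\big).
\]

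Next I bound the cross term using only the spectral norm. Both $\hat C$ and $C_0$ have rank at most $qm$, since $r_0\le qm$. Hence $\hat C-C_0$ has rank at most $2qm$, a fixed number. By the duality between the spectral and nuclear norms,
\[
|\mathrm{tr}(E'D)|\le \|E\|\,\|D\|_*\le \|E\|\sqrt{2qm}\,\fnorm{D}.
\]
Combining this with the basic inequality gives
\[
\fnorm{\hat C-C_0}\le 2\sqrt{2qm}\,\|E\|=O_p\big(\sqrt{\max\{N,T\}}\big)
\]
by Assumption \ref{error_norm}. Dividing by $\sqrt{NT}$ gives $(NT)^{-1/2}\fnorm{\hat C-C_0}=O_p(N^{-1/2}+T^{-1/2})$. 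The dynamic shift restrictions on $\hat G$ play no role here: they only shrink the feasible set, and the feasibility of the truth is guaranteed by $S_1$. This is why the argument needs no further care about the structure.

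Finally, I transfer the bound from common components to factor spaces. Since the columns of $\hat C$ lie in the column space of $\hat G$, we have $M_{\hat G}\hat C=0$, so
\[
M_{\hat G}G\Gamma'=M_{\hat G}(C_0-\hat C),
\qquad\text{hence}\qquad
\fnorm{M_{\hat G}G\Gamma'}\le\fnorm{\hat C-C_0}.
\]
By Assumption \ref{pd},
\[
\fnorm{M_{\hat G}G\Gamma'}^2=\mathrm{tr}\big(M_{\hat G}G\,\Gamma'\Gamma\,G'M_{\hat G}\big)\ge \lambda_{\min}(\Gamma'\Gamma)\fnorm{M_{\hat G}G}^2,
\]
and $\lambda_{\min}(\Gamma'\Gamma)\ge cN$ with probability approaching one. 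Therefore
\[
\frac1{\sqrt T}\fnorm{M_{\hat G}G}\le \frac{1}{\sqrt{cNT}}\fnorm{\hat C-C_0}=O_p\big(N^{-1/2}+T^{-1/2}\big).
\]

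There are two points needing care. The first is the presample index range: $\hat G$ involves $\hat f_t$ for $t\ge 2-m$, while the true representation may use $h_t$ built from lags of $f$. Feasibility holds because the minimization is over all sequences $(f_t)_{t=2-m}^T$, so I must check that the admissible representation from Proposition \ref{alt_spec} fits this index set exactly. The second is that $\hat G$ may be rank deficient, so $M_{\hat G}$ should be read as the projection off the column space of $\hat G$; this does not affect the argument. The only genuinely delicate step is the feasibility check, because everything else is a fixed-rank spectral-norm argument.
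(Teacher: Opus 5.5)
Your proof is correct, but it is organized differently from the paper's proof of this proposition.

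\textbf{The paper's route.} The paper never bounds $\hat C-C_0$ here. It starts from the concentrated criterion $\fnorm{M_{\hat G}X}^2\le\fnorm{E}^2$ and splits it into three terms:
\begin{itemize}
\item the signal term $\fnorm{M_{\hat G}G\Gamma'}^2$;
\item the projected noise $\fnorm{P_{\hat G}E}^2\le qm\|E\|^2$, which is where the fixed rank enters;
\item the cross term $\mathrm{tr}[(M_{\hat G}G)'E\Gamma]$, bounded by Cauchy--Schwarz with $\fnorm{E\Gamma}\le\|E\|\fnorm{\Gamma}$.
\end{itemize}
It then closes the resulting quadratic inequality in $x_{NT}=T^{-1/2}\fnorm{M_{\hat G}G}$ with Young's inequality.

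\textbf{Your route.} You first obtain the common-component rate $\fnorm{\hat C-C_0}\le 2\sqrt{2qm}\,\|E\|$ by spectral--nuclear duality. This is exactly the paper's Lemma \ref{small_0_als}, also used in Step 1 of the proof of Proposition \ref{consistency_2}. You then transfer it to the factor space through $M_{\hat G}\hat C=0$. This yields an explicit inequality, valid whenever $\Gamma'\Gamma$ is nonsingular:
\[
T^{-1/2}\fnorm{M_{\hat G}G}\le \frac{2\sqrt{2qm}\,\|E\|}{\sqrt{NT\,\lambda_{\min}(\Gamma'\Gamma/N)}}.
\]

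\textbf{Comparison.} Your route needs only the lower eigenvalue bound on $\Gamma'\Gamma/N$, not $\fnorm{\Gamma}=O_p(\sqrt N)$. It needs no Young's-inequality step, and the common-component lemma does double duty. The paper's route is the classical Bai--Ng self-bounding argument, which avoids nuclear norms and never passes through $\hat C$.

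\textbf{Feasibility.} The feasibility check you flag as delicate is immediate. The minimization is over unrestricted $(f_t)_{t=2-m}^T$, and the process $h_t$ of Proposition \ref{alt_spec} is defined for every $t\in\mathbb Z$. Setting $\hat f_t=h_t$ on that index range therefore reproduces $C_0$ exactly, which is all the paper uses as well.
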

\paragraph{Remark}
Proposition \ref{consistency_1} establishes one-sided factor-space consistency for every fixed admissible structure $(q,m)\in S_1$. Thus, the estimated stacked factor space asymptotically contains the true factor space. When $qm=r_0=q_0m_0$, this result yields the usual consistency of the estimated factor space. When $qm>r_0$, the estimated space may contain additional redundant directions.

Under the restriction $m=1$, the admissible structures are
\[
S_1\cap\{(q,m):m=1\}
=
\{(q,1):q\geq r_0\}.
\]
For this unrestricted static-factor problem, Theorem 1 of \cite{Bai02} establishes average consistency, up to a possibly rank-deficient transformation, of the rescaled principal-component factor estimator for any fixed number of estimated factors. Proposition \ref{consistency_1} provides a dynamic analogue of this static result by allowing every admissible structure $(q,m)\in S_1$, including structures with $m>1$.

For $m>1$, the dynamically constrained least-squares estimator imposes the lag-stacking restriction
\[
\hat g_t=
\begin{pmatrix}
\hat f_t\\
\hat f_{t-1}\\
\vdots\\
\hat f_{t-m+1}
\end{pmatrix}.
\]
Equivalently, the $m$ blocks of $\hat g_t$ are shifted copies of the same estimated $q$-dimensional factor process. An unrestricted principal-component estimator of the static factors does not impose this temporal restriction, so its estimated factor coordinates need not satisfy the displayed relation.

This distinction is relevant for determining the dynamic structure of the model. Unrestricted principal components identifies the static factor space and its dimension $r_0=q_0m_0$, but it does not separately impose a decomposition into the dynamic factor dimension $q_0$ and the filter length $m_0$. The dynamically constrained estimator preserves the temporal structure used by the subsequent procedures for determining these two dimensions.

\section{Determining the Number of Dynamic Factors and Filter Length}\label{sec:test}

\subsection{Selection Based on the Dynamic Singular Value Ratio}
Throughout this section,  $\sigma_j(X)$ denotes the $j$-th largest singular value of $X$. We define a dynamic analogue of the singular-value approximation error by imposing the lag-stacking restrictions of a dynamic factor model on the low-rank approximating matrix. Formally, let $\delta_{q+1,m}(X)$ denote the $(q+1)$-th dynamic singular value of a matrix $X$ with filter length $m$, defined as
\[
\delta_{q+1,m}(X) = \inf_{C\in \mathcal{C}_{q,m}} 
\left\| X - C \right\|,
\]
where
\[
\mathcal{C}_{q,m}=\left\{\sum_{k=0}^{m-1}A_{-k}B_k':B_k\in\mathbb{R}^{N\times q}, A_{-k}=\begin{pmatrix}
    a_{1-k}'\\
    \vdots\\
    a_{T-k}'
\end{pmatrix},a_{2-m},\ldots,a_T\in\mathbb{R}^q\right\}.
\]
with $q \ge 0$ and $m \ge 0$. For $q=0$ or $m=0$, we define $\mathcal C_{q,m}=\{0\}$ so that $\delta_{1,m}(X)$ or $\delta_{q+1,0}(X)$ denotes spectral norm of $X$. 

When $m=1$, $\delta_{q+1,1}(X)$ coincides with the usual $(q+1)$-th singular value of $X$ because $\mathcal{C}_{q,1}=\{C:\text{rank}(C)\leq q\}$. For general $m$, every $C\in\mathcal{C}_{q,m}$ has rank at most $qm$. Therefore,
\begin{align}\label{eq:dynamic_static_singular_bound}
    \delta_{q+1,m}(X) \geq \delta_{qm+1,1}(X)
\end{align}
The inequality reflects the additional temporal restrictions imposed by the dynamic approximation relative to an unrestricted rank-$qm$ static approximation. More generally, given $(q_0,m_0)$ and its associated set $S_1$, we have
\begin{align*}
    \delta_{q+1,m}(X) \leq \delta_{q_0+1,m_0}(X)
\end{align*}
for any integer $(q,m)\in S_1$ because $\mathcal{C}_{q_0,m_0}\subseteq \mathcal{C}_{q,m}$ due to Proposition \ref{alt_spec}.

Let
\[
C_0
=
\sum_{k=0}^{m_0-1}F_{-k}\lambda_k'
=
G\Gamma'
\]
denote the $T\times N$ matrix of true common components.

\begin{assumption}\label{lower_bound_E}
For any $\epsilon>0$, there exists a constant $c_0>0$ such that
\[
\mathbb{P}\!\left( \frac{1}{\sqrt{N+T}} \, \delta_{k_{max}+1,1}(E) < c_0 \right) \le \epsilon
\]
for all sufficiently large $N$ and $T$, where $k_{\max}$ is fixed.
\end{assumption}

\begin{assumption}\label{lower_bound_common}
For any $\epsilon>0$, there exists a constant $c_2>0$ such that
\[
\mathbb P\left(
\min_{\substack{0\leq q<q_0\\1\leq m\leq m_{\max}}}
\frac{1}{\sqrt{NT}}\,
\delta_{q+1,m}(C_0)
<c_2
\right)
\leq\epsilon
\]
and for all sufficiently large $N$ and $T$, where $m_{max}\geq m_0$ is fixed.
\end{assumption}

Assumption \ref{lower_bound_E} requires that the leading singular values of $\frac{1}{\sqrt{\max(N,T)}}E$ be bounded away from zero in probability. Under conditions such as those in \citet{Ahn-Horenstein-2013}, a positive fraction of the singular values of $\frac{E}{\sqrt{\max\{N,T\}}}$ remain bounded away from zero. Assumption \ref{lower_bound_E}, which requires this property only for a fixed number of leading singular values, is correspondingly weaker.

Assumption \ref{lower_bound_common} imposes a corresponding lower bound on the dynamic approximation error of the true common component under every underfactored structure. Part of this condition follows directly from Assumption \ref{pd}. For every $(q,m)$ satisfying $0\leq q<q_0$ and $1\leq m\leq m_0$, we have
\[
qm
\leq
(q_0-1)m_0
<
r_0.
\]
Therefore, by \eqref{eq:dynamic_static_singular_bound},
\[
\delta_{q+1,m}(C_0)
\geq
\sigma_{qm+1}(C_0)
\geq
\sigma_{r_0}(C_0).
\]
The smallest nonzero singular value of $C_0=G\Gamma'$ satisfies
\[
\frac{\sigma_{r_0}^2(C_0)}{NT}
\geq \lambda_{\min}
\left(
\frac{G'G}{T}
\right)
\lambda_{\min}
\left(
\frac{\Gamma'\Gamma}{N}
\right).
\]
Assumption \ref{pd} implies that the right-hand side is bounded away from zero in probability. It follows that
\[
\min_{\substack{0\leq q<q_0\\1\leq m\leq m_0}}
\frac{1}{\sqrt{NT}}\,
\delta_{q+1,m}(C_0)
\]
is bounded away from zero in probability.

Thus, Assumption \ref{pd} already provides the required separation for underfactored structures whose candidate filter length does not exceed $m_0$. For $m>m_0$, however, it is possible to have
\[
qm\geq r_0
\]
even when $q<q_0$. In this case, static rank alone does not provide a lower bound on the approximation error. Assumption \ref{lower_bound_common} strengthens the minimality of $q_0$ by requiring underfactored representations to remain uniformly separated from the true common component at the $\sqrt{NT}$ scale over the fixed candidate range
\[
1\leq m\leq m_{\max}.
\]

Supplemental Appendix \ref{apx:discussion_assumption} provides sufficient conditions for Assumption \ref{lower_bound_common}. In particular, under the assumption that the optimized sample mean-squared approximation criterion converges to its population counterpart, a stable and invertible VARMA factor process driven by innovations with positive-definite covariance satisfies the spectral component of these sufficient conditions. The resulting frequency-domain condition is related to the dynamic-eigenvalue restrictions used in the generalized dynamic factor literature; see, for example, \cite{Forni2000}, \cite{Forni2015}, and \cite{Forni-Hallin-Lippi-Zaffaroni-2017}.

We construct ratio-based selection criteria obtained from residuals obtained from the Frobenius norm least-squares estimator. For each $(q,m)$ with $q,m\geq1$, let $(\hat F_{-k}^{(q,m)},\hat\lambda_k^{(q,m)})_{k=0}^{m-1}$ form any global minimizer of the least-squares objective \eqref{obj_fn}. Define the fitted common component by
\begin{equation*}
\hat C_{q,m}
=
\sum_{k=0}^{m-1}
\hat F_{-k}^{(q,m)}
\hat\lambda_k^{(q,m)\prime}.
\end{equation*}
For $q=0$ or $m=0$, set $\hat C_{q,m}=0$. We define $\hat\delta_{q+1,m}(X)$ as
\[
\hat\delta_{q+1,m}(X)
=
\left\|
X-\hat C_{q,m}
\right\|,
\qquad
q,m\in\mathbb N_0.
\]
It is important to note that the minimizer of the Frobenius norm generally differs from the minimizer of the spectral norm. When $m=1$, the Frobenius minimizer is a best rank-$q$ approximation and its residual spectral norm, i.e., $\hat\delta_{q+1,1}(X)=\delta_{q+1,1}(X)$. For $m>1$, $\hat\delta_{q+1,m}(X)\neq\delta_{q+1,m}(X)$ in general. While dynamic singular values can in principle be computed by directly minimizing the spectral norm, we rely on the Frobenius-norm solution for computational tractability.

It can be shown that under Assumptions \ref{pd}, \ref{error_norm}, \ref{lower_bound_E}, and \ref{lower_bound_common}, there exists $q_{max}$ and $m_{max}$ such that for all $0\leq q\leq q_{max}$ and $0\leq m\leq m_{max}$, $\frac{\hat\delta_{q+1,m}(X)}{\sqrt{N+T}}$ is bounded above and away from zero in probability for $(q,m)\in S_1$, whereas $\frac{\hat\delta_{q+1,m}(X)}{\sqrt{NT}}$ is bounded above and away from zero in probability for $(q,m)\in S_2\cup S_3$. This behavior of $\hat\delta_{q+1,m}(X)$ motivates the dynamic singular value ratio tests given in Theorem \ref{thm:q_consistency} and \ref{thm:m_consistency} below.

In Theorem \ref{thm:q_consistency}, we fix $m$ and estimate the number of dynamic factors using a ratio-type statistic based on adjacent estimated dynamic singular values. 

\begin{theorem}[Consistency of the factor-dimension estimator]
\label{thm:q_consistency}
Suppose that $q_0,m_0\geq1$ and that Assumptions \ref{pd}, \ref{error_norm}, \ref{lower_bound_E}, and \ref{lower_bound_common} hold. Let $q_{\max}$, $m_{\max}$, and $k_{\max}$ be fixed positive integers satisfying
\[
q_0\leq q_{\max},
\qquad
m_0\leq m_{\max},
\qquad
q_0m_0+q_{\max}m_{\max}\leq k_{\max}.
\]
Fix $m\in\{1,\ldots,m_{\max}\}$ and define
\[
q_m^\star
=
\max\left\{
q_0,
\left\lceil\frac{q_0m_0}{m}\right\rceil
\right\}.
\]
Suppose that $q_m^\star\leq q_{\max}$. For $q=1,\ldots,q_{\max}$, define
\[
DR_{q,m}(X)
=
\frac{\hat\delta_{q,m}(X)}
{\hat\delta_{q+1,m}(X)}.
\]
If a denominator is zero, set the corresponding ratio equal to $+\infty$. Using the smallest maximizer as a deterministic tie-breaking rule, define
\[
\hat q(m)
=
\min
\operatorname*{arg\,max}_{1\leq q\leq q_{\max}}
DR_{q,m}(X).
\]
Then
\[
\mathbb P\left\{
\hat q(m)=q_m^\star
\right\}
\to1
\]
as $N,T\to\infty$. Consequently, if $m\geq m_0$, then
\[
\mathbb P\left\{
\hat q(m)=q_0
\right\}
\to1,
\]
whereas if $1\leq m<m_0$, then
\[
\mathbb P\left\{
\hat q(m)
=
\left\lceil\frac{q_0m_0}{m}\right\rceil
\right\}
\to1.
\]
\end{theorem}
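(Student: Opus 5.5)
The plan is to establish two-sided order bounds on $\hat\delta_{q,m}(X)$ for every $q\in\{1,\ldots,q_{\max}+1\}$, uniformly over this finite range, and then read off the ratio behavior. For fixed $m$, the point $(q,m)$ lies in $S_1$ if and only if $q\geq q_m^\star$. The reason is that $q\geq q_0$ together with $qm\geq r_0$ is equivalent to $q\geq\max\{q_0,\lceil r_0/m\rceil\}$. Pairs with $q<q_m^\star$ lie in $S_2\cup S_3$.

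The aim is to show the following, with probability approaching one:
\[
c\sqrt{N+T}\leq \hat\delta_{q+1,m}(X)\leq C\sqrt{N+T}\quad (q\geq q_m^\star),\qquad
c\sqrt{NT}\leq \hat\delta_{q+1,m}(X)\leq C\sqrt{NT}\quad (0\leq q<q_m^\star).
\]
Given these bounds, the ratios behave as follows:
\begin{itemize}
\item For $q<q_m^\star$, $DR_{q,m}=O_p(1)$.
\item At $q=q_m^\star$, $DR_{q,m}\geq c'\sqrt{NT/(N+T)}\to\infty$.
\item For $q>q_m^\star$, $DR_{q,m}=O_p(1)$, since both numerator and denominator are of order $\sqrt{N+T}$.
\end{itemize}
Hence the argmax equals $q_m^\star$ with probability approaching one, and the tie-breaking rule is irrelevant. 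The two corollary statements then follow from the formula for $q_m^\star$ in Corollary \ref{cor:lower_bounds}.

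\textbf{Upper bounds.} For all $q$ we have $\hat\delta_{q+1,m}(X)\leq\fnorm{X-\hat C_{q,m}}\leq\fnorm{X}$. Since $\fnorm{X}\leq\fnorm{G\Gamma'}+\sqrt{\min(N,T)}\,\|E\|=O_p(\sqrt{NT})$ by Assumptions \ref{pd} and \ref{error_norm}, this gives the $\sqrt{NT}$ bound.

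For $(q,m)\in S_1$, I would use Proposition \ref{alt_spec}. The true $C_0$ lies in $\mathcal C_{q,m}$, so Frobenius optimality gives $\fnorm{X-\hat C_{q,m}}\leq\fnorm{E}$. That bound alone is too crude for the spectral norm. Instead I would write $X-\hat C_{q,m}=E+(C_0-\hat C_{q,m})$ and bound $\|C_0-\hat C_{q,m}\|$. The standard argument is:
\begin{itemize}
\item Optimality gives $\fnorm{C_0-\hat C}^2\leq 2\langle E,\hat C-C_0\rangle$.
\item $\hat C-C_0$ has rank at most $qm+r_0\leq k_{\max}$, so $|\langle E,\hat C-C_0\rangle|\leq \sqrt{k_{\max}}\,\|E\|\fnorm{\hat C-C_0}$.
\item Combining, $\|C_0-\hat C\|\leq\fnorm{C_0-\hat C}=O_p(\sqrt{N+T})$.
\end{itemize}
This is where the condition $k_{\max}\geq q_0m_0+q_{\max}m_{\max}$ enters. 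Proposition \ref{consistency_1} is consistent with this but is not needed directly.

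\textbf{Lower bounds.} For $(q,m)\in S_1$, $\hat C_{q,m}$ has rank at most $qm\leq q_{\max}m_{\max}$. Write $X-\hat C=E+(C_0-\hat C)$, where the added matrix has rank at most $k_{\max}$. By Weyl's interlacing inequality,
\[
\|X-\hat C\|\geq\sigma_{k_{\max}+1}(E)\geq c_0\sqrt{N+T}
\]
with high probability, by Assumption \ref{lower_bound_E}.

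For $q<q_m^\star$, I would use
\[
\|X-\hat C\|\geq\|C_0-\hat C\|-\|E\|\geq\delta_{q+1,m}(C_0)-O_p(\sqrt{N+T}),
\]
which holds because $\hat C\in\mathcal C_{q,m}$. The two cases are then:
\begin{itemize}
\item If $q<q_0$, Assumption \ref{lower_bound_common} gives $\delta_{q+1,m}(C_0)\geq c_2\sqrt{NT}$.
\item If $q_0\leq q<q_m^\star$, then $qm<r_0$ and $\delta_{q+1,m}(C_0)\geq\sigma_{qm+1}(C_0)\geq\sigma_{r_0}(C_0)\gtrsim\sqrt{NT}$ by \eqref{eq:dynamic_static_singular_bound} and Assumption \ref{pd}.
\end{itemize}
The case $q=0$ ($\hat C=0$) is covered in the same way.

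\textbf{Main obstacle.} I expect the main difficulty to be the upper bound $\|X-\hat C_{q,m}\|=O_p(\sqrt{N+T})$ for structures in $S_1$. The estimator minimizes the Frobenius norm rather than the spectral norm, so I rely on the rank-restricted inner-product argument above. I need to check that the rank of $\hat C-C_0$ is genuinely bounded by a fixed constant; it is at most $qm+r_0$. I also need uniformity over the finitely many $q$, which is immediate because the range is finite.
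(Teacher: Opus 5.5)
Your proposal is correct and follows essentially the same route as the paper. Like the paper, you prove two-sided bounds at the $\sqrt{N+T}$ scale for $q\ge q_m^\star$, using Eckart--Young/Weyl with Assumption \ref{lower_bound_E} below and the rank-restricted bound $\fnorm{\hat C_{q,m}-C_0}\le 2\sqrt{qm+r_0}\,\|E\|$ above, and two-sided bounds at the $\sqrt{NT}$ scale for $q<q_m^\star$, using $\fnorm{X}$ above and $\delta_{q+1,m}(C_0)-\|E\|$ below via static rank or Assumption \ref{lower_bound_common}, followed by the same ratio argument. The only differences are cosmetic: you split the underfactored cases as $q<q_0$ versus $q_0\le q<q_m^\star$ rather than $S_3$ versus $S_2$, and the condition on $k_{\max}$ is genuinely needed only for the lower bound through Assumption \ref{lower_bound_E}, not for the upper bound as you state.
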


For a fixed $m$, the quantity $q_m^\star$ is the smallest factor dimension for which $(q,m)$ belongs to $S_1$. Hence, at $q=q_m^\star$, the numerator of $DR_{q,m}(X)$ corresponds to the inadmissible structure $(q_m^\star-1,m)$ and is of order $\sqrt{NT}$, whereas the denominator corresponds to the admissible structure $(q_m^\star,m)$ and is of order $\sqrt{N+T}$. The ratio therefore diverges at $q_m^\star$ and remains stochastically bounded at every other candidate value.

When $m\geq m_0$, the admissibility boundary satisfies $q_m^\star=q_0$, so the estimator consistently identifies the true number of dynamic factors. When $m<m_0$, it selects the minimum number of factor coordinates required to represent the true common component with candidate filter length $m$, as characterized by Proposition \ref{alt_spec}.

When $m=1$,
\[
DR_{q,1}(X)
=
\frac{\sigma_q(X)}{\sigma_{q+1}(X)}.
\]
The eigenvalue-ratio statistic of \citet{Ahn-Horenstein-2013} is based on
\[
\frac{\sigma_q^2(X)}{\sigma_{q+1}^2(X)}
=
DR_{q,1}^2(X).
\]
The two criteria therefore have the same maximizer. In this case, Theorem \ref{thm:q_consistency} consistently estimates the true static dimension
\[
r_0=q_0m_0.
\]
The theorem extends this static ratio criterion by allowing the candidate filter length to exceed one and by imposing the lag-stacking restrictions of a dynamic factor model.

We next fix the factor dimension and estimate the minimum admissible filter length.

\begin{theorem}[Consistency of the filter-length estimator]
\label{thm:m_consistency}
Suppose that $q_0,m_0\geq1$ and that Assumptions \ref{pd}, \ref{error_norm}, and \ref{lower_bound_E} hold. Let $q_{\max}$, $m_{\max}$, and $k_{\max}$ be fixed positive integers satisfying
\[
q_0\leq q_{\max},
\qquad
m_0\leq m_{\max},
\qquad
q_0m_0+q_{\max}m_{\max}\leq k_{\max}.
\]
Fix $q\in\{q_0,\ldots,q_{\max}\}$ and define
\[
m_q^\star
=
\left\lceil\frac{q_0m_0}{q}\right\rceil.
\]
For $m=1,\ldots,m_{\max}$, define
\[
MR_{q,m}(X)
=
\frac{\hat\delta_{q+1,m-1}(X)}
{\hat\delta_{q+1,m}(X)}.
\]
If a denominator is zero, set the corresponding ratio equal to $+\infty$. Define
\[
\hat m(q)
=
\min
\operatorname*{arg\,max}_{1\leq m\leq m_{\max}}
MR_{q,m}(X).
\]
Then
\[
\mathbb P\left\{
\hat m(q)=m_q^\star
\right\}
\to1
\]
as $N,T\to\infty$. Consequently, if $q=q_0$, then
\[
\mathbb P\left\{
\hat m(q_0)=m_0
\right\}
\to1,
\]
whereas if $q>q_0$, then
\[
\mathbb P\left\{
\hat m(q)
=
\left\lceil\frac{q_0m_0}{q}\right\rceil
\right\}
\to1.
\]
\end{theorem}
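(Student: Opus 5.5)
The plan is to obtain two-sided stochastic orders for $\hat\delta_{q+1,m}(X)$ at the fixed $q\geq q_0$, for every $m\in\{0,\ldots,m_{\max}\}$. The values of $m$ split into two regimes: $qm<r_0$, where $(q,m)\in S_2$, and $qm\geq r_0$, where $(q,m)\in S_1$ because $q\geq q_0$. The ratios $MR_{q,m}(X)$ are then read off from these orders. Assumption \ref{lower_bound_common} is not needed, because $S_3$ does not arise when $q\geq q_0$: underfitting for $m<m_q^\star$ is detected by static rank alone, as in the discussion following Assumption \ref{lower_bound_common}. Throughout I write $D=\hat C_{q,m}-C_0$, which has rank at most $qm+r_0\leq k_{\max}$.

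\emph{Step 1 (admissible $m$, upper bound).} Take $(q,m)\in S_1$, so $qm\geq r_0$. By Proposition \ref{alt_spec}, $C_0\in\mathcal C_{q,m}$. Global Frobenius optimality of $\hat C_{q,m}$ then gives $\fnorm{E-D}^2\leq\fnorm{E}^2$, that is,
\[
\fnorm{D}^2\leq 2\langle E,D\rangle\leq 2\|E\|\,\|D\|_{*}\leq 2\sqrt{k_{\max}}\,\|E\|\,\fnorm{D},
\]
where $\|D\|_*$ is the nuclear norm, the middle inequality is trace duality, and the last uses the rank bound on $D$. Hence $\|D\|\leq\fnorm{D}\leq 2\sqrt{k_{\max}}\|E\|=O_p(\sqrt{N+T})$ by Assumption \ref{error_norm}. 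Since $X-\hat C_{q,m}=E-D$, it follows that $\hat\delta_{q+1,m}(X)=O_p(\sqrt{N+T})$.

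\emph{Step 2 (admissible $m$, lower bound).} Because $X-\hat C_{q,m}=E-D$ and $\mathrm{rank}(D)\leq k_{\max}$, Weyl's inequality gives
\[
\hat\delta_{q+1,m}(X)\geq\sigma_{k_{\max}+1}(E).
\]
By Assumption \ref{lower_bound_E}, the right-hand side is at least $c_0\sqrt{N+T}$ with probability at least $1-\epsilon$. In particular, all denominators are nonzero with probability approaching one.

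\emph{Step 3 (inadmissible $m$).} Take $qm<r_0$, which includes $m=0$. Every element of $\mathcal C_{q,m}$ has rank at most $qm\leq r_0-1$, so
\[
\hat\delta_{q+1,m}(X)\geq\sigma_{qm+1}(X)\geq\sigma_{r_0}(C_0)-\|E\|.
\]
As shown after Assumption \ref{lower_bound_common}, $\sigma_{r_0}(C_0)/\sqrt{NT}$ is bounded away from zero in probability under Assumption \ref{pd}, while $\|E\|=o_p(\sqrt{NT})$. For the upper bound I use $\hat\delta_{q+1,m}(X)\leq\|X\|+\|\hat C_{q,m}\|$. Since $\hat C_{q,m}$ is a Frobenius least-squares fit of $X$, $\fnorm{X-\hat C_{q,m}}\leq\fnorm{X}$, which gives $\fnorm{\hat C_{q,m}}\leq 2\fnorm{X}$ and hence $\|\hat C_{q,m}\|\leq 2\fnorm{X}$. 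Because $X=C_0+E$, the Frobenius norm of $X$ is at most $\sqrt{r_0}\|C_0\|$ plus the Frobenius norm of $E$. I expect this $E$-term to be the delicate point: Assumption \ref{error_norm} bounds only $\|E\|$, and the Frobenius norm of $E$ can be of order $\sqrt{NT}$. If so, I would replace the crude bound with the fact that the best Frobenius fit is nonexpansive on the rank-$r_0$ signal part plus an $O_p(\sqrt{N+T})$ perturbation. In either case the target is $\hat\delta_{q+1,m}(X)\asymp_p\sqrt{NT}$.

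\emph{Step 4 (ratios).} Note that $m_q^\star\leq m_0\leq m_{\max}$, that $q(m_q^\star-1)<r_0$, and that $qm\geq r_0$ if and only if $m\geq m_q^\star$. Consider the three cases for $1\leq m\leq m_{\max}$.
\begin{itemize}
\item For $m=m_q^\star$, the numerator of $MR_{q,m}$ comes from an $S_2$ structure and the denominator from an $S_1$ structure. Steps 1--3 give $MR_{q,m_q^\star}\gtrsim_p\sqrt{NT/(N+T)}\to\infty$.
\item For $m<m_q^\star$, both numerator and denominator come from $S_2$ structures and are of order $\sqrt{NT}$, so the ratio is $O_p(1)$.
\item For $m>m_q^\star$, both come from $S_1$ structures and are of order $\sqrt{N+T}$, so the ratio is $O_p(1)$.
\end{itemize}
Since $m_{\max}$ is fixed, $\hat m(q)=m_q^\star$ with probability approaching one. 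The specializations to $q=q_0$ and to $q>q_0$ follow because $m_{\min}(q_0)=m_0$ by Corollary \ref{cor:lower_bounds}.

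The main obstacle is Step 1: converting a Frobenius-norm minimizer into a spectral-norm residual bound. The nuclear-norm duality argument handles it, because every candidate fit has rank bounded by $k_{\max}$; this is exactly where the condition $q_0m_0+q_{\max}m_{\max}\leq k_{\max}$ enters.
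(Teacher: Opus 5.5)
Your proposal is correct and follows the paper's proof essentially step for step. Lemmas \ref{small_1_als}--\ref{small_3_als} give the $\sqrt{N+T}$ bounds on $S_1$ by the same nuclear-norm duality and $\sigma_{k_{\max}+1}(E)$ arguments, and the $S_2$ half of Lemma \ref{large_als} gives the $\sqrt{NT}$ bounds; you also correctly observe that $S_3$ never arises for fixed $q\geq q_0$, so Assumption \ref{lower_bound_common}, which that lemma formally lists, is not actually needed.

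The worry you flag in Step 3 is unfounded. Under Assumption \ref{error_norm}, $\fnorm{E}\leq\sqrt{\min\{N,T\}}\,\|E\|=O_p(\sqrt{NT})$, which is exactly the order required; the paper simply uses $\hat\delta_{q+1,m}(X)\leq\fnorm{X-\hat C_{q,m}}\leq\fnorm{X}$. Your crude bound therefore already closes the step, and the nonexpansiveness fallback can be dropped, which is just as well since that property is not available in general for the nonconvex class $\mathcal C_{q,m}$.
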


For a fixed $q\geq q_0$, the quantity $m_q^\star$ is the smallest filter length for which $(q,m)$ belongs to $S_1$. At $m=m_q^\star$, the numerator of $MR_{q,m}(X)$ corresponds to the inadmissible structure $(q,m_q^\star-1)$, while the denominator corresponds to the admissible structure $(q,m_q^\star)$. The ratio therefore diverges only at $m_q^\star$.

Theorem \ref{thm:m_consistency} consistently estimates the admissibility boundary $m_q^\star$ for a fixed factor dimension $q\geq q_0$. In particular, it consistently estimates the true filter length $m_0$ when $q=q_0$. When $q>q_0$, it selects the smallest candidate filter length required to represent the true common component using $q$ factor coordinates.

To recover $(q_0,m_0)$ using Theorem \ref{thm:q_consistency} and \ref{thm:m_consistency}, we may start with any $m$ sufficiently large so that $m\geq m_0$. Using $m\geq m_0$, we may recover $q_0$ using Theorem \ref{thm:q_consistency}. Then, using Theorem \ref{thm:m_consistency}, we may recover $m_0$ using $q=q_0$.

\subsection{Selection Based on Information Criteria}
Let $V(q,m)$ denote the mean squared errors when the structure $(q,m)$ is used to estimate a dynamic factor model, i.e.,
\begin{align*}
V(q,m) 
= \frac{1}{NT}
\fnorm{X - \hat C_{q,m}}^2,
\end{align*}
where $\hat C_{q,m}=\sum_{k=0}^{m-1} \hat F_{-k}^{(q,m)} \hat\lambda_k^{(q,m)\prime}$ is the least squares estimator obtained under the structure $(q,m)$.

Following \citet{Bai02}, we augment the residual variance with a penalty term. The key is to construct a penalty sequence that vanishes asymptotically, yet dominates the difference in the residual sum of squares between the true model and any overparameterized alternative. In this spirit, we define the following two information criteria:
\begin{align*}
PC(q,m) &= V(q,m) + (r+q) g(N,T),\\
DC(q,m) &= \frac{1}{NT} \hat\delta_{q+1,m}^2(X) + (r+q) g(N,T),
\end{align*}
where $r=qm$ and $g(N,T)$ is a vanishing penalty function. Compared to \citet{Bai02}, the main differences are: (i) model fit is evaluated after estimating the dynamic factor model, not its static representation, and (ii) the penalty accounts for both the number of static factors $r=qm$, and the number of dynamic factors $q$. 

\begin{theorem}\label{test_IC}
Suppose Assumptions \ref{pd}, \ref{error_norm}, and \ref{lower_bound_common} hold.  Let $q_{max}\geq q_0$ be a fixed integer and let $m_{max}\geq m_0$ be given as in Assumption \ref{lower_bound_common}. Let $\hat q$ and $\hat m$ be given by
\[
(\hat q,\hat m) = \argmin_{0\leq q\leq q_{max},0\leq m\leq m_{max}}PC(q,m)
\]
or
\[
(\hat q,\hat m) = \argmin_{0\leq q\leq q_{max},0\leq m\leq m_{max}}DC(q,m).
\]
If $g(N,T)\to0$ and $\left(\frac{NT}{N+T}\right)g(N,T)\to\infty$ as $N,T\to\infty$, we have
\[
\mathbb{P}\{(\hat q,\hat m)= (q_0,m_0)\} \to 1
\]
as $N,T\to\infty$.
\end{theorem}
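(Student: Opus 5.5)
We follow the standard information-criterion argument of \citet{Bai02}: show that each $(q,m)\neq(q_0,m_0)$ in $S$ loses to $(q_0,m_0)$ with probability approaching one. Since $S$ is finite, a union bound then gives the result. We split $S\setminus\{(q_0,m_0)\}$ into the underfitted set $S_2\cup S_3$ and the overfitted set $S_1\setminus\{(q_0,m_0)\}$.

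\emph{Fit of admissible structures.} First we bound the fit for every $(q,m)\in S_1$. By Proposition \ref{alt_spec}, $C_0\in\mathcal C_{q,m}$, so the global least-squares minimizer satisfies $\fnorm{X-\hat C_{q,m}}^2\le \fnorm{E}^2$. We also need a lower bound. Every $C\in\mathcal C_{q,m}$ has rank at most $qm\le q_{\max}m_{\max}$, and $C-C_0$ has rank at most $qm+r_0$. Writing $X-\hat C=E-(\hat C-C_0)$ and applying Eckart--Young,
\[
\fnorm{E}^2-\fnorm{X-\hat C_{q,m}}^2\le \sum_{j\le qm+r_0}\sigma_j^2(E)\le (qm+r_0)\|E\|^2=O_p(N+T)
\]
by Assumption \ref{error_norm}. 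Hence $V(q,m)-V(q_0,m_0)=O_p\big((N+T)/(NT)\big)$ uniformly over $S_1$. For $DC$ we use the analogous spectral bounds: $\hat\delta_{q+1,m}\le\|E\|+\|\hat C_{q,m}-C_0\|$, and $\|\hat C-C_0\|\le\fnorm{\hat C-C_0}$, which is $O_p(\sqrt{N+T})$ from the Frobenius inequality above (the cross term is controlled by the low rank of $\hat C-C_0$ and $\|E\|$). This gives $\hat\delta^2_{q+1,m}/(NT)=O_p\big((N+T)/(NT)\big)$ on $S_1$.

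\emph{Overfitting.} Any $(q,m)\in S_1$ other than $(q_0,m_0)$ has $r+q>r_0+q_0$. If $q>q_0$ this holds because $qm\ge r_0$ and $q\ge q_0+1$. If $q=q_0$, then $m>m_0$, so $r>r_0$. The penalty gap is therefore at least $g(N,T)$, while the fit gap is $O_p\big((N+T)/(NT)\big)$. The condition $\frac{NT}{N+T}g\to\infty$ makes the penalty dominate, so $PC(q,m)>PC(q_0,m_0)$ with probability approaching one, and likewise for $DC$.

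\emph{Underfitting.} Take $(q,m)\in S_2\cup S_3$, including $q=0$ or $m=0$, where $\hat C=0$. We need $V(q,m)$ and $\hat\delta^2_{q+1,m}/(NT)$ to be bounded below by a positive constant with probability approaching one. The penalty difference is $O(g)\to0$ and $V(q_0,m_0)=O_p(1)\cdot\fnorm{E}^2/(NT)$, so such a lower bound suffices. The cleanest route is through the spectral norm, which also handles $DC$. Since $\hat C_{q,m}\in\mathcal C_{q,m}$,
\[
\hat\delta_{q+1,m}(X)\ge \delta_{q+1,m}(X)\ge \delta_{q+1,m}(C_0)-\|E\|.
\]
For $q<q_0$ and $m\le m_{\max}$, Assumption \ref{lower_bound_common} gives $\delta_{q+1,m}(C_0)\ge c_2\sqrt{NT}$, while $\|E\|=o_p(\sqrt{NT})$. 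For $(q,m)\in S_2$ with $q\ge q_0$, the rank argument preceding Assumption \ref{lower_bound_common} gives $\delta_{q+1,m}(C_0)\ge\sigma_{r_0}(C_0)\gtrsim\sqrt{NT}$. So $\hat\delta^2/(NT)$ is bounded away from zero, and $V(q,m)\ge\hat\delta^2_{q+1,m}/(NT^{\,})\cdot 1/\min(N,T)$ is too weak for $V$. For $V$ we therefore argue directly with a Frobenius bound. When $qm<r_0$, Eckart--Young gives $\fnorm{X-\hat C}\ge(\sum_{j>qm}\sigma_j^2(C_0))^{1/2}-O_p(\sqrt{(N+T)})$, which is of order $\sqrt{NT}$. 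For $S_3$ we need a Frobenius version of Assumption \ref{lower_bound_common}. Because $\fnorm{\cdot}\ge\|\cdot\|$, we get $\fnorm{C_0-C}\ge\delta_{q+1,m}(C_0)$ for all $C\in\mathcal C_{q,m}$. Combining this with $\fnorm{X-\hat C}\ge\fnorm{C_0-\hat C}-\fnorm{P E}$, where $P$ projects onto a space of rank $O(1)$ so that the error is $O_p(\sqrt{N+T})$, yields $V(q,m)\ge c>V(q_0,m_0)$ plus a margin. Making this last inequality rigorous is the main obstacle. Bounding the interaction of $E$ with $\hat C-C_0$ requires the low-rank projection trick rather than $\fnorm{E}$, since $\fnorm{E}^2$ is of order $NT$. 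The plan is to expand $\fnorm{C_0+E-\hat C}^2-\fnorm{E}^2=\fnorm{C_0-\hat C}^2+2\,\mathrm{tr}\{E'(C_0-\hat C)\}$ and bound the trace term by $\sqrt{r_0+q_{\max}m_{\max}}\,\|E\|\fnorm{C_0-\hat C}$. This gives $V(q,m)-V(q_0,m_0)\ge c_2^2-o_p(1)$, and the penalty difference vanishes, which completes the argument.
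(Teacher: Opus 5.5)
Your argument is correct and is essentially the paper's proof (Lemma \ref{V_rates} together with Lemmas \ref{small_2_als} and \ref{large_als}). The only variation is in the $O_p\bigl((N+T)/(NT)\bigr)$ overfit gap: you get it by sandwiching $NT\,V(q,m)$ between $\|E\|_F^2-(qm+r_0)\|E\|^2$ (Eckart--Young) and $\|E\|_F^2$, whereas the paper expands $V(q_0,m_0)-V(q,m)$ using $\|\hat C_{q,m}-C_0\|_F\le 2\sqrt{qm+r_0}\,\|E\|$ from Lemma \ref{small_0_als}; your final underfit step is exactly the paper's trace expansion. You should discard the intermediate claims that a positive lower bound on $V(q,m)$ alone suffices, or that $V(q,m)\ge c>V(q_0,m_0)$: since $V(q_0,m_0)=\|E\|_F^2/(NT)+O_p\bigl((N+T)/(NT)\bigr)$ need not be small, only the lower bound on $V(q,m)-\|E\|_F^2/(NT)$ that the trace expansion delivers does the job.
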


The key to determining the true structure $(q_0,m_0)$ is penalizing both the number of static factors $r=qm$ and the number of dynamic factors $q$. Penalizing only the number of static factors would select the static representation of the dynamic factor model, since among the structures $(q,m)$ with the same static factors, the structure $(qm,1)$ has the smallest squared error. Including $q$ as a separate term in the penalty ensures recovery of the most parsimonious dynamic structure. In fact, we may use any penalty function of the form $\varphi(q,m)g(N,T)$ as long as $g(N,T)\to0$, $\left(\frac{NT}{N+T}\right)g(N,T)\to\infty$ as $N,T\to\infty$, and 
\begin{align*}
    \varphi(q_0,m_0) \leq \varphi(q,m) 
\end{align*}
for all $(q,m)\in S_1$ with strict inequality when $(q,m)\neq (q_0,m_0)$. For example, any function of the form $\varphi(q,m) = \alpha qm+\beta q$ with $\alpha,\beta>0$ can also be used to consistently recover $(q_0,m_0)$. 

Theorem \ref{test_IC} shows that Assumptions \ref{pd}, \ref{error_norm}, and \ref{lower_bound_common} suffice to asymptotically identify $(q,m)$. The result does not require a particular finite-dimensional parametric model, such as a finite-order VAR, for the factor process. Assumptions \ref{pd} and \ref{error_norm} are commonly used to identify the number of static factors, e.g., \cite{Chamberlain_Rothschild_1983}, \cite{Bai02}, \cite{Bai-Ng-2023}. To recover the number of dynamic factors, along with the corresponding filter length, the only additional assumption required is the Assumption \ref{lower_bound_common}. 

Note that when $q_0=0$ or $m_0=0$, so that there does not exist a common component with factor structure, the information criterion consistently recovers $(q_0,m_0)=(0,0)$.

\begin{assumption}[Nondegenerate average idiosyncratic variance]
\label{average_error_variance}
There exists a constant
\[
\sigma_\varepsilon^2\in(0,\infty)
\]
such that
\[
\frac{1}{NT}\|E\|_F^2
\to_p
\sigma_\varepsilon^2.
\]
\end{assumption}

\begin{corollary}\label{IC}
Suppose Assumptions \ref{pd}, \ref{error_norm}, \ref{lower_bound_common}, and \ref{average_error_variance} hold. Let $q_{max}$ be a fixed integer and let $m_{max}$ be given as in Assumption \ref{lower_bound_common}. Let
\[
(\hat q, \hat m) = \argmin_{0 \le q \le q_{\max}, 0 \le m \le m_{\max}} IC(q,m),
\]
where
\[
IC(q,m) = \log V(q,m) + (r+q) g(N,T).
\]
If $g(N,T) \to 0$ and $\left(\frac{NT}{N+T}\right)g(N,T)\to\infty$, then
\[
\mathbb{P}\{(\hat q,\hat m) = (q_0,m_0)\} \to 1.
\]
\end{corollary}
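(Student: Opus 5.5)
The plan is to reduce Corollary \ref{IC} to the two rate facts behind Theorem \ref{test_IC}, using the assumed nondegeneracy of the idiosyncratic variance to linearize the logarithm. Theorem \ref{test_IC} is assumed, so I take its two underlying facts as available. First, uniformly over the finitely many $(q,m)\in S_1$,
\[
V(q,m)-V(q_0,m_0)=O_p\!\left(\tfrac{N+T}{NT}\right).
\]
This follows from Proposition \ref{consistency_1} and Assumption \ref{error_norm}: both $V(q,m)$ and $V(q_0,m_0)$ lie between $\frac{1}{NT}\|M E\|_F^2$-type quantities, with the gap bounded by $\frac{1}{NT}\cdot r\|E\|^2$ terms. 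Second, for $(q,m)\in S_2\cup S_3$ there is $c>0$ with
\[
\mathbb P\{V(q,m)-V(q_0,m_0)\ge c\}\to1 ,
\]
which comes from Assumption \ref{lower_bound_common} together with the rank argument for $S_2$. I will restate these as the two lemmas used in the proof of Theorem \ref{test_IC}.

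Next I identify the level of $V(q_0,m_0)$. By minimality, $V(q_0,m_0)\le \frac{1}{NT}\|E\|_F^2$, since the true $(F,\lambda)$ is feasible. Conversely, $V(q_0,m_0)\ge \frac{1}{NT}\|E\|_F^2-O_p(\frac{N+T}{NT})$. To see this, write $X-\hat C=E+(C_0-\hat C)$, where $C_0-\hat C$ has rank at most $2r_0$. Then
\[
\tfrac{1}{NT}\|X-\hat C\|_F^2\ge \tfrac{1}{NT}\|E\|_F^2-\tfrac{1}{NT}\sum_{j\le 2r_0}\sigma_j^2(E),
\]
by the variational characterization: subtracting a rank-$k$ matrix removes at most the top $k$ squared singular values. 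By Assumption \ref{error_norm}, the subtracted term is $O_p((N+T)/(NT))=o_p(1)$. Hence Assumption \ref{average_error_variance} gives $V(q_0,m_0)\to_p\sigma_\varepsilon^2>0$. The same lower bound holds for every $V(q,m)$ with $qm$ bounded, so all $V(q,m)$ are bounded below by $\sigma_\varepsilon^2/2$ w.p.a.1. They are also bounded above by $V(0,0)=\frac{1}{NT}\|X\|_F^2=O_p(1)$, using Assumption \ref{pd}.

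Now I compare $IC(q,m)-IC(q_0,m_0)=\log\frac{V(q,m)}{V(q_0,m_0)}+[\varphi(q,m)-\varphi(q_0,m_0)]g$, with $\varphi(q,m)=qm+q$. There are two cases.

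\emph{Case $S_1\setminus\{(q_0,m_0)\}$.} Here the penalty difference is a positive integer, because $q\ge q_0$ and $qm\ge q_0m_0$ with one of them strict. By the mean value theorem and the lower bound on $V$, the log ratio is $O_p\big(|V(q,m)-V(q_0,m_0)|\big)=O_p(\frac{N+T}{NT})$. This is $o_p(g)$ since $\frac{NT}{N+T}g\to\infty$, so the difference is positive w.p.a.1.

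\emph{Case $S_2\cup S_3$.} Here $V(q,m)/V(q_0,m_0)\ge 1+c/V(q,m)\ge 1+c'$ w.p.a.1, using the $O_p(1)$ upper bound. The log ratio therefore exceeds $\log(1+c')>0$, while the penalty difference is $O(g)\to0$. This case includes $q=0$ or $m=0$, where $V=\frac{1}{NT}\|X\|_F^2$.

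A union over the finite set $S$ finishes the argument. The main obstacle is the uniform lower bound on $V(q,m)$, which is needed to linearize the log. I handle it via the singular-value deletion bound above rather than through consistency of the estimator.
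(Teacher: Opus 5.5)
Your proof is correct and follows the paper's argument. On $S_1$, the mean value theorem with $V\to_p\sigma_\varepsilon^2$ shows the $O_p\bigl(\frac{N+T}{NT}\bigr)$ fit gain is dominated by the strictly positive penalty increment. On $S_2\cup S_3$, the $V$-gap from Lemma \ref{V_rates} keeps the log ratio bounded away from zero while the penalty difference vanishes. The one variation is that you get the bounds $\frac{1}{NT}\|E\|_F^2-\frac{r_0+qm}{NT}\|E\|^2\le V(q,m)\le\frac{1}{NT}\|E\|_F^2$ on $S_1$ directly from Eckart--Young rather than from the expansion behind Lemma \ref{V_rates}, which is a slightly cleaner route to the same uniform bounds.
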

As in \citet{Bai02}, we use three candidate penalty terms:
\begin{align*}
    g_1(N,T) &= \left(\frac{N+T}{NT}\right)\log\left(\frac{NT}{N+T}\right)\\
    g_2(N,T) &= \left(\frac{N+T}{NT}\right)\log\left(\min(N,T)\right)\\
    g_3(N,T) &= \left(\frac{1}{\min(N,T)}\right)\log\left(\min(N,T)\right).
\end{align*}
We also scale the penalty term with $\hat\sigma_{PC}^2 = V(q_{max},m_{max})$ and $\hat d_{DC}^2 = \frac{1}{NT}\hat\delta_{1,0}^2(X)$ for PC and DC respectively. Under Assumptions \ref{error_norm} and \ref{average_error_variance}, we may show that 
\[
\hat\sigma^2_{PC}\to_p \sigma_\varepsilon^2
\]
as $N,T\to\infty$. Also, we have
\[
\left|\frac{\|X\|}{\sqrt{NT}}-\frac{\|C_0\|}{\sqrt{NT}}\right|\leq \frac{\|E\|}{\sqrt{NT}}
\]
so that 
\[
\hat d_{DC}\to_p\lambda_{max}(\Sigma_\gamma^{1/2}\Sigma_g\Sigma_\gamma^{1/2})
\]
as $N,T\to\infty$ when $q_0,m_0\geq1$. The DC criterion with the scale $\hat d_{DC}^2 = \frac{1}{NT}\hat\delta_{1,0}^2(X)$ does not depend on $(q_{max},m_{max})$ through the scale term. However, it may not recover the true structure under $(q_0,m_0)=(0,0)$. Under the null structure, this scale is only of order $\frac{N+T}{NT}$, making the effective penalty too small to dominate the noise-scale reduction from fitting spurious factors.

Both $PC$ and $IC$ can be viewed as dynamic extensions of the original criteria in \citet{Bai02}: if $m=1$, they reduce numerically to the criteria in \cite{Bai02}, up to a multiplicative factor of two in the penalty term.

\subsection{Determining the Lag Order of the VAR Model}\label{sec:var}
Although our method does not require the dynamic factors \((f_t)\) to follow a VAR process, it may still be useful in practice to model their dynamics using a VAR model. For instance, one may be interested in estimating the dynamic responses of macroeconomic variables to structural shocks identified using residuals of a VAR model for \((f_t)\), as in \cite{Forni2009}.

In this context, modeling \((f_t)\) directly using a VAR is substantially more parsimonious than working with the static representation \((g_t)\). As shown by \citet{Bai2007} and \citet{Forni2009}, if the dynamic factors \((f_t)\) follow a VAR\((p)\) process, then the associated static factors \((g_t)\) follow a VAR\((\tilde p)\) process, where \(\tilde p = 1\) if \(m \ge p\) and \(\tilde p = p - m + 1\) if \(m < p\).

When \(m \ge p\), estimation based on the static representation requires fitting a VAR\((1)\) in dimension \(qm\), involving \(q^2 m^2\) parameters. In contrast, direct estimation of \((f_t)\) requires fitting a VAR\((p)\) in dimension \(q\), involving only \(q^2p\) parameters. Since \(m \ge p\), and \(m > 1\) in genuine dynamic factor models, the reduction in dimensionality can be substantial. When \(m < p\), the estimation of the static representation requires \((p - m + 1) q^2 m^2\) parameters, which again strictly exceed \(p q^2\) whenever \(m > 1\). This highlights substantial reductions in dimension and parameter count by directly modeling the dynamic factors.

In this section, we assume $(f_t)$ follows a VAR$(p)$ process, and show that the lag order $p$ can be selected consistently using the estimated factors.

\begin{assumption}[Nondegeneracy of the augmented factor stack]
\label{pd_extra}
Let
\[
J_0=
\begin{pmatrix}
I_{T-1}&0_{(T-1)\times1}
\end{pmatrix},
\qquad
J_1=
\begin{pmatrix}
0_{(T-1)\times1}&I_{T-1}
\end{pmatrix},
\]
so that $J_0$ and $J_1$ select the first and last $T-1$ rows, respectively. Define
\[
\overline G^{(m+1)}
=
J_1
\begin{pmatrix}
F_0&F_{-1}&\cdots&F_{-m}
\end{pmatrix}
\in
\mathbb R^{(T-1)\times q(m+1)}.
\]
Then
\[
\frac{1}{T-1}
{\overline G^{(m+1)}}'
\overline G^{(m+1)}
\to_p
\Sigma_g^{(m+1)}
\succ0.
\]
\end{assumption}

\begin{assumption}[VAR lag-order regularity]
\label{var_order_regularity}
The dynamic factor process follows a stable VAR$(p_0)$ model,
\[
f_t
=
\sum_{j=1}^{p_0}A_jf_{t-j}
+
u_t,
\qquad
0\leq p_0\leq p_{\max},
\]
where $q$, $p_0$, and $p_{\max}$ are fixed. The order $p_0$ is minimal, and the innovation sequence satisfies
\[
\mathbb E(u_t\mid\mathcal F_{t-1})=0,
\qquad
\mathbb E(u_tu_t')=\Sigma_u\succ0,
\]
together with uniformly bounded fourth moments and the weak-dependence conditions needed for the usual fixed-dimensional VAR laws of large numbers and central limit theorems.

For every $\ell=1,\ldots,p_{\max}$, the population covariance matrix of
\[
z_{\ell,t}
=
\begin{pmatrix}
f_{t-1}'&\cdots&f_{t-\ell}'
\end{pmatrix}'
\]
is positive definite. Moreover, for every $\ell=p_0+1,\ldots,p_{\max}$, the population covariance matrix of the additional lag regressors after projecting them on $f_{t-1},\ldots,f_{t-p_0}$ is positive definite.
\end{assumption}

Assumption \ref{pd_extra} rules out an exact linear relation among $m+1$ consecutive factor vectors. It is used to identify a common $q\times q$ rotation across the $m$ lag blocks of the static factor representation.

Proposition \ref{consistency_1} only gives consistency of static representations $g_t$ of dynamic factors $f_t$, i.e., consistency up to a $qm\times qm$ dimensional invertible matrix. Assumption \ref{pd_extra} is used to identify and consistently estimate the dynamic factors $f_t$ up to a $q\times q$ dimensional matrix. \cite{bai_wang_2013} uses a similar assumption for identification. After establishing consistency of the dynamic factors $f_t$, we use the result to establish consistency of the VAR lag order selection in the following Proposition \ref{test_var}.

\begin{proposition}[Consistency of the dynamic factors and lag loadings]
\label{consistency_2}
Let $(q,m)=(q_0,m_0)$ be the true structure, with $q$ and $m$ fixed. Let
\[
\hat G
=
\begin{pmatrix}
\hat F_0&
\hat F_{-1}&
\cdots&
\hat F_{-(m-1)}
\end{pmatrix},
\qquad
\hat\Gamma
=
\begin{pmatrix}
\hat\lambda_0&
\hat\lambda_1&
\cdots&
\hat\lambda_{m-1}
\end{pmatrix}
\]
be constructed from a global minimizer of \eqref{obj_fn}. Choose a normalized representative satisfying
\[
\frac{1}{N}
\sum_{k=0}^{m-1}
\hat\lambda_k'
\hat\lambda_k
=
I_q, \quad \frac{1}{N}
\sum_{k=0}^{m-1}
\lambda_k'\lambda_k
=
I_q
\]
Suppose that Assumptions \ref{pd}, \ref{error_norm}, and \ref{pd_extra} hold and
\[
\lambda_{\max}
\left(
\frac{1}{T}
\hat G'\hat G
\right)
=
O_p(1).
\]
Define
\[
H_f'
=
\left(
\sum_{k=0}^{m-1}
\lambda_k'\hat\lambda_k
\right)
\left(
\sum_{k=0}^{m-1}
\hat\lambda_k'\hat\lambda_k
\right)^{-1}.
\]
Then $H_f$ is nonsingular with probability approaching one,
\[
\|H_f\|+\|H_f^{-1}\|=O_p(1),
\]
and
\[
\frac{1}{\sqrt T}
\left\|
\hat F-FH_f'
\right\|_F
=
O_p\left(
\frac{1}{\sqrt N}+\frac{1}{\sqrt T}
\right).
\]
Moreover, for each $k=0,\ldots,m-1$,
\[
\frac{1}{\sqrt N}
\left\|
\hat\lambda_k-\lambda_kH_f^{-1}
\right\|_F
=
O_p\left(
\frac{1}{\sqrt N}+\frac{1}{\sqrt T}
\right).
\]
\end{proposition}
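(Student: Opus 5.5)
The plan is to first obtain consistency of the whole stacked factor space with an unrestricted $r_0\times r_0$ rotation. I then use the lag-stacking (shift) restriction, together with Assumption \ref{pd_extra}, to show that this rotation is asymptotically block diagonal with one common $q\times q$ block. The last step identifies that block with $H_f'$.

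\textbf{Step 1 (common component and stacked rotation).} Since the global minimizer does at least as well as the truth, $\|X-\hat G\hat\Gamma'\|_F^2\le\|E\|_F^2$. Expanding, and using that $\hat G\hat\Gamma'-G\Gamma'$ has rank at most $2r_0$, gives
\[
\frac{1}{\sqrt{NT}}\fnorm{\hat G\hat\Gamma'-G\Gamma'}\lesssim \frac{\sqrt{2r_0}\,\|E\|}{\sqrt{NT}}=O_p\!\left(\frac{1}{\sqrt N}+\frac{1}{\sqrt T}\right)
\]
by Assumption \ref{error_norm}. Set $Q=(G'G)^{-1}G'\hat G$. Then $\hat G=GQ+M_G\hat G$. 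Proposition \ref{consistency_1} gives $T^{-1/2}\fnorm{M_{\hat G}G}=O_p(N^{-1/2}+T^{-1/2})$. Combining this with Assumption \ref{pd} and the hypothesis $\lambda_{\max}(\hat G'\hat G/T)=O_p(1)$, I plan to show three things: $T^{-1/2}\fnorm{\hat G-GQ}=O_p(N^{-1/2}+T^{-1/2})$, $Q$ is nonsingular w.p.a.1, and $\|Q\|+\|Q^{-1}\|=O_p(1)$. In the square case $qm=r_0$, $\sigma_{\min}(\hat G/\sqrt T)$ must be bounded below, since otherwise $G$ could not be nearly contained in $\operatorname{col}(\hat G)$. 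Then $\hat\Gamma'\approx Q^{-1}\Gamma'$ follows by regressing the fitted common component on $\hat G$.

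\textbf{Step 2 (the shift restriction forces block structure).} This is the main step. Write $Q=(Q_{jk})_{j,k=0}^{m-1}$ in $q\times q$ blocks, so that $\hat F_{-k}\approx\sum_j F_{-j}Q_{jk}$. The lag-stacking of both $\hat G$ and $G$ gives the exact identities $J_1\hat F_{-k}=J_0\hat F_{-(k-1)}$ and $J_1F_{-j-1}=J_0F_{-j}$. Subtracting the two approximations for $k=1,\ldots,m-1$ yields
\[
J_1\begin{pmatrix}F_0&\cdots&F_{-m}\end{pmatrix}D_k=O_p\!\left(\sqrt T\Big(\tfrac{1}{\sqrt N}+\tfrac{1}{\sqrt T}\Big)\right)
\]
in Frobenius norm. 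Here $D_k$ stacks the blocks $Q_{0k}$, $Q_{jk}-Q_{j-1,k-1}$ for $1\le j\le m-1$, and $-Q_{m-1,k-1}$. Assumption \ref{pd_extra} bounds the smallest singular value of $J_1(F_0,\ldots,F_{-m})/\sqrt{T-1}$ away from zero, so $\|D_k\|=O_p(N^{-1/2}+T^{-1/2})$. Iterating the recursion, all off-diagonal blocks become negligible: the top row vanishes and the last row of the previous column vanishes, so the zeros propagate along diagonals. The diagonal blocks all agree with $Q_0:=Q_{00}$ up to $O_p(N^{-1/2}+T^{-1/2})$. The hardest bookkeeping is that only finitely many ($m$) steps occur, so the rates do not deteriorate. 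Invertibility of $Q$ then transfers to $Q_0$.

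\textbf{Step 3 (identifying $H_f$ and loadings).} From Step 1, $\hat\Gamma\approx\Gamma(Q^{-1})'$, which is approximately block diagonal with blocks $Q_0^{-1}$. Hence $N^{-1/2}\fnorm{\hat\lambda_k-\lambda_k(Q_0^{-1})'}=O_p(N^{-1/2}+T^{-1/2})$. Plugging this into the definition of $H_f'$, and using both normalizations $N^{-1}\sum_k\hat\lambda_k'\hat\lambda_k=I_q=N^{-1}\sum_k\lambda_k'\lambda_k$, gives $H_f'=Q_0+O_p(N^{-1/2}+T^{-1/2})$. So $H_f$ is nonsingular w.p.a.1 with $\|H_f\|+\|H_f^{-1}\|=O_p(1)$. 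Finally, $\hat F=\hat F_0\approx FQ_0$ together with the bound $\|FQ_0-FH_f'\|_F\le\|F\|\,\|Q_0-H_f'\|$ and $\|F\|=O_p(\sqrt T)$ yields the stated factor rate, and the loading rate follows likewise with $H_f^{-1}$ in place of $(Q_0^{-1})'$.
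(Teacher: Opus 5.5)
Your architecture is the paper's. The stacked rotation $Q=(G'G)^{-1}G'\hat G$ is the paper's $R_{NT}$, and your $D_k$ is its $K_b$, inverted through Assumption \ref{pd_extra} and propagated along the block diagonals. The identification $H_f'=Q_{00}+O_p(N^{-1/2}+T^{-1/2})$ via both normalizations is its Step 4. Obtaining $\hat\Gamma'$ by regressing $\hat G\hat\Gamma'$ on $\hat G$, rather than from the first-order condition $\hat\Gamma=X'\hat G(\hat G'\hat G)^{-1}$, is a harmless variant.

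\textbf{The gap.} One load-bearing step has an invalid justification. In Step 1 you claim that $\sigma_{\min}(\hat G)/\sqrt T$ is bounded away from zero ``since otherwise $G$ could not be nearly contained in $\operatorname{col}(\hat G)$.'' Containment, i.e.\ smallness of $M_{\hat G}G$, is invariant to rescaling $\hat G$. If $(\hat G,\hat\Gamma)$ is a global minimizer, so is $(\epsilon_{NT}\hat G,\hat\Gamma/\epsilon_{NT})$, obtained by rescaling $\hat f_t$. This pair has the same $M_{\hat G}$ and still satisfies $\lambda_{\max}(\hat G'\hat G/T)=O_p(1)$. Yet $\sigma_{\min}(\epsilon_{NT}\hat G)/\sqrt T\to0$ whenever $\epsilon_{NT}\to0$.

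The only hypothesis that excludes this is the loading normalization, which you invoke only in Step 3. Without the lower bound you cannot get $\|Q^{-1}\|=O_p(1)$. You then lose invertibility of $Q_{00}$ and of $H_f$. Your loading step also fails, since it needs $\|(\hat G'\hat G)^{-1}\hat G'\|=O_p(T^{-1/2})$.

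\textbf{The fix} is the paper's argument:
\begin{itemize}
\item By Assumption \ref{pd}, $\sigma_{r_0}(G\Gamma')\geq c\sqrt{NT}$ with probability approaching one.
\item Your first display gives $\|\hat G\hat\Gamma'-G\Gamma'\|=o_p(\sqrt{NT})$, so Weyl's inequality yields $\sigma_{r_0}(\hat G\hat\Gamma')\geq \frac{c}{2}\sqrt{NT}$.
\item The normalization gives $\|\hat\Gamma\|\leq\fnorm{\hat\Gamma}=\sqrt{Nq}$. Hence $\sigma_{r_0}(\hat G)\geq \sigma_{r_0}(\hat G\hat\Gamma')/\|\hat\Gamma\|\geq \frac{c}{2\sqrt q}\sqrt T$.
\item Then $\sigma_{\min}(GQ)\geq\sigma_{\min}(\hat G)-\|M_G\hat G\|$ with $\|M_G\hat G\|=o_p(\sqrt T)$. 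Together with $\sigma_{\min}(GQ)\leq\|G\|\,\sigma_{\min}(Q)$ and $\|G\|=O_p(\sqrt T)$, this gives $\|Q^{-1}\|=O_p(1)$.
\end{itemize}
With this in place, the rest of your argument goes through as written.
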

Suppose we use the common effective sample
\[
t=p_{\max}+1,\ldots,T
\]
for all candidate lag orders in the VAR regression. Define
\[
Y
=
\begin{pmatrix}
f_{p_{\max}+1}'\\
\vdots\\
f_T'
\end{pmatrix}
\in\mathbb R^{n\times q},
\qquad
\hat Y
=
\begin{pmatrix}
\hat f_{p_{\max}+1}'\\
\vdots\\
\hat f_T'
\end{pmatrix}.
\]
For $\ell\geq1$, define
\[
Z_\ell
=
\begin{pmatrix}
F_{-1}&\cdots&F_{-\ell}
\end{pmatrix}
\in\mathbb R^{n\times q\ell},
\qquad
\hat Z_\ell
=
\begin{pmatrix}
\hat F_{-1}&\cdots&\hat F_{-\ell}
\end{pmatrix},
\]
where all matrices are restricted to the common effective sample. For $\ell=0$, the regressor matrix is empty. For each $\ell=0,\ldots,p_{\max}$, define the minimized trace criterion
\[
\hat v_\ell
=
\frac{1}{T-p_{\max}}
\min_{B\in\mathbb R^{q\times q\ell}}
\left\|
\hat Y-\hat Z_\ell B'
\right\|_F^2,
\]
with
\[
\hat v_0
=
\frac{1}{T-p_{\max}}
\|\hat Y\|_F^2.
\]
Let $g(N,T)>0$ be a deterministic penalty sequence satisfying
\begin{equation}
\label{eq:VAR_penalty_conditions}
g(N,T)\to0,
\qquad
\frac{NT}{N+T}g(N,T)\to\infty.
\end{equation}
For a fixed constant $\kappa>0$, define
\[
\operatorname{PIC}_{\mathrm{VAR}}(\ell)
=
\hat v_\ell
+
\kappa q^2\ell\,g(N,T),
\qquad
\ell=0,\ldots,p_{\max}.
\]
\begin{theorem}[Consistency of trace-based VAR lag-order selection]
\label{test_var}
Suppose that the true dynamic structure $(q_0,m_0)$ is used to estimate the factors and that Assumptions \ref{pd}, \ref{error_norm}, \ref{pd_extra}, and \ref{var_order_regularity} hold. Let $g(N,T)$ satisfy 
\begin{equation*}
g(N,T)\to0,
\qquad
\frac{NT}{N+T}g(N,T)\to\infty
\end{equation*}
and define
\[
\hat p
=
\min
\operatorname*{arg\,min}_{0\leq\ell\leq p_{\max}}
\operatorname{PIC}_{\mathrm{VAR}}(\ell).
\]
Then
\[
\mathbb P\{\hat p=p_0\}\to1
\]
as $N,T\to\infty$. No restriction on the relative rates at which $N$ and $T$ diverge is required.
\end{theorem}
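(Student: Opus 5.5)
The plan is to compare the feasible criterion $\hat v_\ell$ with its infeasible counterpart built from the rotated true factors. By Proposition \ref{consistency_2}, $\hat F_{-k}$ is close to $F_{-k}H_f'$ for every lag block, because all blocks are shifts of the same estimated sequence $(\hat f_t)$. I will define
\[
\tilde v_\ell=\frac{1}{T-p_{\max}}\min_B\|YH_f'-Z_\ell(I_\ell\otimes H_f')B'\|_F^2 .
\]
Since $H_f$ is nonsingular with $\|H_f\|+\|H_f^{-1}\|=O_p(1)$, the regression of $YH_f'$ on $Z_\ell(I_\ell\otimes H_f')$ spans the same column space as the regression on $Z_\ell$. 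Hence $\tilde v_\ell=\operatorname{tr}(H_f\,S_\ell\,H_f')$, where $S_\ell=(T-p_{\max})^{-1}Y'M_{Z_\ell}Y$ is the usual infeasible residual moment matrix.

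The first step is the uniform approximation $\max_{0\le\ell\le p_{\max}}|\hat v_\ell-\tilde v_\ell|=O_p(N^{-1/2}+T^{-1/2})$. Write $\hat Y=YH_f'+D_Y$ and $\hat Z_\ell=Z_\ell(I\otimes H_f')+D_Z$. The factor-consistency rate of Proposition \ref{consistency_2}, applied to each shifted block over the common sample, gives $T^{-1/2}\|D\|_F=O_p(\delta_{NT}^{-1})$, where $\delta_{NT}=\min\{\sqrt N,\sqrt T\}$. The residual norm $\min_B\|\hat Y-\hat Z_\ell B'\|_F$ is $1$-Lipschitz in $\hat Y$ for a fixed regressor space. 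It is also stable under perturbations of $\hat Z_\ell$, provided $T^{-1}\hat Z_\ell'\hat Z_\ell$ has eigenvalues bounded away from zero. That holds with probability approaching one by Assumption \ref{var_order_regularity} (positive definite covariance of $z_{\ell,t}$), the bound on $\|H_f^{-1}\|$, and the approximation. Combining these, $|\sqrt{\hat v_\ell}-\sqrt{\tilde v_\ell}|=O_p(\delta_{NT}^{-1})$. Since $\tilde v_\ell=O_p(1)$, this yields the first-order bound.

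This first-order rate is enough for underfitting, but not for overfitting. So the key remaining obstacle is a sharper second-order bound on the \emph{differences} $(\hat v_\ell-\hat v_{p_0})-(\tilde v_\ell-\tilde v_{p_0})$ for $\ell>p_0$. I aim to show these differences are $O_p(\delta_{NT}^{-2})$, which equals $O_p((N+T)/(NT))$, plus $O_p(T^{-1})$. With the rate only from the norm, one gets $O_p(\delta_{NT}^{-1})$, and that is not dominated by $g(N,T)$ under the stated condition.

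To get the sharper bound, I expand $\hat v_\ell-\hat v_{p_0}=-(T-p_{\max})^{-1}\operatorname{tr}(\hat Y'(P_{\hat Z_\ell}-P_{\hat Z_{p_0}})\hat Y)$. This is the squared norm of the projection of the residual $M_{\hat Z_{p_0}}\hat Y$ onto the extra lags after partialling out the first $p_0$. In population, that residual is approximately $UH_f'$ plus the estimation error $D$. Cross terms between $U$ and the extra lags are $O_p(T^{-1})$ by the martingale-difference CLT of Assumption \ref{var_order_regularity}. The terms involving $D$ enter quadratically, giving $O_p(\delta_{NT}^{-2})$, except for cross products $T^{-1}D'Z$ and $T^{-1}D'U$. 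For those I will try first to use the bound $T^{-1}\|D\|_F\|Z\|=O_p(\delta_{NT}^{-1})$. That bound is multiplied by the $O_p(T^{-1/2})$ population projection coefficient on the redundant lags, which is zero in population because $A_\ell=0$ for $\ell>p_0$. If this refinement fails, I would instead borrow the Bai--Ng style expansion of $\hat f_t-H_ff_t$ to show $T^{-1}\sum_t(\hat f_t-H_ff_t)w_t'=O_p(\delta_{NT}^{-2})$. Since $(NT/(N+T))g\to\infty$ and $T\ge (N+T)/2\cdot$const on the relevant scale, the quantity $T^{-1}+\delta_{NT}^{-2}$ is $o(g(N,T))$. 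Hence for $\ell>p_0$, $\operatorname{PIC}_{\mathrm{VAR}}(\ell)-\operatorname{PIC}_{\mathrm{VAR}}(p_0)\ge\kappa q^2(\ell-p_0)g-o_p(g)>0$ with probability approaching one.

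For underfitting, $\ell<p_0$, the argument is straightforward. Minimality of $p_0$ and the positive definiteness in Assumption \ref{var_order_regularity} imply that $S_\ell-S_{p_0}$ converges in probability to a nonzero positive semidefinite matrix. Indeed, $A_{p_0}\neq0$ and the covariance of $z_{p_0,t}$ is positive definite, so omitting lags leaves a strictly positive trace gap. Because $\lambda_{\min}(H_fH_f')$ is bounded away from zero, $\tilde v_\ell-\tilde v_{p_0}\ge c>0$ with probability approaching one. By the first step, $\hat v_\ell-\hat v_{p_0}\ge c/2$ with probability approaching one, and this dominates the vanishing penalty difference $\kappa q^2(p_0-\ell)g\to0$.

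Combining the finitely many cases $\ell\neq p_0$ then gives $\mathbb P\{\hat p=p_0\}\to1$. No rate restriction between $N$ and $T$ is needed, because every error term is expressed through $\delta_{NT}^{-2}\asymp (N+T)/(NT)$.
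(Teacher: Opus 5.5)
Your proposal is correct and follows the paper's proof. You compare $\hat v_\ell$ with the rotated infeasible criterion $\tilde v_\ell$, reject underfitting via the first-order $O_p(\delta_{NT}^{-1})$ approximation and the strictly positive population trace gap, and reject overfitting via Frisch--Waugh--Lovell, where $n(\hat v_{p_0}-\hat v_\ell)$ is the squared Frobenius norm of an $O_p(1)$ true-model projection of the innovations plus an $O_p(\sqrt{n}\,\delta_{NT}^{-1})$ perturbation.

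Your first-choice refinement already suffices, because the cross terms are $O_p(T^{-1/2}\delta_{NT}^{-1})=O_p(\delta_{NT}^{-2})$ and $T^{-1}\le (N+T)/(NT)$ holds trivially, so no comparison of $T$ with $N+T$ is needed. The Bai--Ng fallback, which would require conditions beyond the theorem's assumptions, is therefore unnecessary. The paper sidesteps this cross-term bookkeeping by applying the triangle inequality to the unsquared norm of the projected residual before squaring.
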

\begin{remark}
Existing structural-factor and FAVAR applications generally select the VAR lag order using conventional AIC or BIC applied to estimated factors. \cite{Forni-Gambetti-2010}, for example, choose two lags as the midpoint between the orders selected by AIC and BIC applied to estimated factors.

For the conventional time-series BIC penalty $g_{\mathrm{BIC}}(N,T)=\frac{\log T}{T}$ to yield consistent model selection, we need an additional rate condition
\[
\frac{T}{N\log T}\to0
\]
to hold so that the squared factor estimation error $O_p\left(\frac{1}{N}+\frac{1}{T}\right)$ is negligible compared to the penalty term.

The panel-adaptive penalties in Theorem \ref{test_var} do not require this relative-growth condition.
\end{remark}

\section{Simulations}\label{sec:simulations}
We conduct Monte Carlo simulations of dynamic factor models to evaluate the finite-sample performance of estimators for the number of dynamic factors and the filter length. Specifically, we simulate the model
\[
x_{t} = \sum_{k=0}^{m-1} \lambda_k f_{t-k} + \varepsilon_t,
\]
where the idiosyncratic errors satisfy
\[
\varepsilon_{it} = \sqrt{\frac{\theta(1-\rho^2)}{1+2J\beta^2}}\, e_{it}, \qquad
e_{it} = \rho e_{i,t-1} + v_{it} + \sum_{1 \le |j| \le J} \beta v_{i-j,t}.
\]
The innovations \(v_{it}\) and the factor loadings \(\lambda_k\) are drawn independently from a standard normal distribution. This data-generating process is similar to those considered in \cite{Bai02} and \cite{Ahn-Horenstein-2013}, except that we allow the filter length \(m\) to exceed one. The common factors \(f_t\) follow a VARMA\((1,1)\) process,
\[
f_t = A f_{t-1} + u_t + \Theta u_{t-1},
\]
where $u_t$ follows independent standard normal under the following four specifications:
\begin{enumerate}
    \item \(\rho = \beta = J = 0\) and \(A = \Theta = 0\).
    \item \(\rho = 0.3\), \(\beta = 0.1\), \(J = 10\), and \(A = \Theta = 0\).
    \item \(\rho = \beta = J = 0\), \(A = \mathrm{Diag}(0.7, 0.5, 0.3)\), and \(\Theta = 0\).
    \item \(\rho = \beta = J = 0\), \(A = 0\), and \(\Theta = \mathrm{Diag}(0.7, 0.5, 0.3)\).
\end{enumerate}

By construction, the variance of \(\varepsilon_{it}\) equals \(\theta\). The variance of the common component \(\sum_{k=0}^{m-1} \lambda_k f_{t-k}\) is given by \(m \cdot \mathrm{tr}(\Sigma_f)\), where \(\Sigma_f = \mathbb{E}[f_t f_t']\). We therefore set \(\theta = m \cdot \mathrm{tr}(\Sigma_f)\), ensuring that the expected signal-to-noise ratio equals one in all specifications.

The first specification corresponds to an exact dynamic factor model, in which both the dynamic factors and idiosyncratic errors are serially uncorrelated. The second specification represents an approximate dynamic factor model, allowing for serial and cross-sectional dependence in the idiosyncratic errors. The third and fourth specifications examine settings in which the dynamic factors follow VAR\((1)\) and VMA\((1)\) processes, respectively.

\subsection{Selection Based on the Dynamic Singular Value Ratio}
We use the four specifications described in the previous section with $(q_0,m_0)=(3,3)$. 

\begin{table}[H]
\centering
\small
\setlength{\tabcolsep}{2pt} 
\renewcommand{\arraystretch}{0.9}

\begin{tabular}{|ll|ccc|ccc|ccc|ccc|}
\hline
\hline
& & \multicolumn{3}{c|}{DGP1} & \multicolumn{3}{c|}{DGP2} & \multicolumn{3}{c|}{DGP3} & \multicolumn{3}{c|}{DGP4}\\ 
\hline
 N&T & $m=2$ & $m=3$ & $m=4$ & $m=2$ & $m=3$ & $m=4$& $m=2$ & $m=3$ & $m=4$& $m=2$ & $m=3$ & $m=4$\\ 
\hline
$100$ & $100$ & 0.92 & 0.99 & 0.96 & 0.17 & 0.28 & 0.23 & 0.00 & 0.79 & 0.47 & 0.12 & 0.93 & 0.81  \\
$100$ & $200$ & 0.99 & 0.98 & 1.00 & 0.17 & 0.40 & 0.24 & 0.04 & 0.98 & 0.86 & 0.36 & 1.00 & 1.00  \\
$100$ & $300$ & 1.00 & 1.00 & 1.00 & 0.20 & 0.55 & 0.29 & 0.09 & 1.00 & 0.98 & 0.58 & 1.00 & 1.00  \\
$200$ & $100$ & 1.00 & 1.00 & 1.00 & 0.52 & 0.88 & 0.61 & 0.10 & 0.88 & 0.66 & 0.43 & 1.00 & 0.88  \\
$200$ & $200$ & 1.00 & 1.00 & 1.00 & 0.89 & 0.99 & 0.93 & 0.43 & 1.00 & 0.99 & 0.90 & 1.00 & 1.00  \\
$200$ & $300$ & 1.00 & 1.00 & 1.00 & 0.93 & 1.00 & 0.95 & 0.82 & 1.00 & 1.00 & 0.98 & 1.00 & 1.00  \\
$300$ & $100$ & 1.00 & 0.99 & 1.00 & 0.85 & 0.99 & 0.86 & 0.18 & 1.00 & 0.89 & 0.72 & 1.00 & 0.98  \\
$300$ & $200$ & 1.00 & 1.00 & 1.00 & 0.99 & 0.99 & 1.00 & 0.81 & 1.00 & 1.00 & 0.99 & 1.00 & 1.00  \\
$300$ & $300$ & 1.00 & 1.00 & 1.00 & 1.00 & 1.00 & 1.00 & 0.93 & 1.00 & 1.00 & 1.00 & 1.00 & 1.00  \\
\hline
\end{tabular}
\caption{Monte Carlo frequency with which $\hat q(m)=q_m^\star$, for each fixed candidate filter length $m$.}
\label{tab:DR_q}
\end{table}

Table \ref{tab:DR_q} reports the Monte Carlo frequency with which $\hat q(m)=q_m^\star$. For $m=2$, we have $q_m^\star=\lceil\frac{q_0m_0}{m}\rceil = 5$. For $m=3,4$, we have $q_m^\star=q_0=3$. The estimator performs very well in DGP1. DGP3 and DGP4 also perform strongly when $m\geq m_0$ as the panel dimensions increase. DGP2 is more challenging, particularly when $N$ is small, reflecting the serial and cross-sectional dependence in the idiosyncratic component. The short-filter case $m=2$ is also difficult under persistent factor dynamics.

\begin{table}[H]
\centering
\small
\setlength{\tabcolsep}{3pt} 
\renewcommand{\arraystretch}{0.9}
\begin{tabular}{|ll|ccc|ccc|ccc|ccc|}
\hline
\hline
& & \multicolumn{3}{c|}{DGP1} & \multicolumn{3}{c|}{DGP2} & \multicolumn{3}{c|}{DGP3} & \multicolumn{3}{c|}{DGP4}\\ 
\hline
 N&T & $q=3$ & $q=4$ & $q=5$ & $q=3$ & $q=4$ & $q=5$ & $q=3$ & $q=4$ & $q=5$ & $q=3$ & $q=4$ & $q=5$\\ 
\hline
$100$ & $100$ & 0.99 & 0.54 & 0.97 & 0.16 & 0.05 & 0.30 & 0.02 & 0.00 & 0.02 & 0.49 & 0.00 & 0.41  \\
$100$ & $200$ & 0.98 & 0.94 & 1.00 & 0.17 & 0.07 & 0.28 & 0.10 & 0.00 & 0.04 & 0.88 & 0.01 & 0.76  \\
$100$ & $300$ & 1.00 & 0.99 & 1.00 & 0.29 & 0.06 & 0.38 & 0.26 & 0.00 & 0.07 & 0.97 & 0.14 & 0.95  \\
$200$ & $100$ & 1.00 & 0.98 & 1.00 & 0.73 & 0.20 & 0.81 & 0.21 & 0.00 & 0.13 & 0.91 & 0.05 & 0.79  \\
$200$ & $200$ & 1.00 & 1.00 & 1.00 & 1.00 & 0.77 & 0.99 & 0.49 & 0.00 & 0.26 & 1.00 & 0.39 & 0.98  \\
$200$ & $300$ & 1.00 & 1.00 & 1.00 & 1.00 & 0.89 & 1.00 & 0.76 & 0.00 & 0.56 & 1.00 & 0.79 & 1.00  \\
$300$ & $100$ & 0.98 & 1.00 & 1.00 & 0.97 & 0.69 & 0.95 & 0.29 & 0.00 & 0.15 & 0.97 & 0.15 & 0.94  \\
$300$ & $200$ & 1.00 & 1.00 & 1.00 & 0.99 & 0.98 & 1.00 & 0.83 & 0.02 & 0.61 & 1.00 & 0.70 & 1.00  \\
$300$ & $300$ & 1.00 & 1.00 & 1.00 & 1.00 & 1.00 & 1.00 & 0.95 & 0.12 & 0.90 & 1.00 & 0.91 & 1.00  \\
\hline
\end{tabular}
\caption{Monte Carlo frequency with which $\hat m(q)=m_q^\star$, for each fixed factor dimension $q$.}
\label{tab:DR_m}
\end{table}
In Table \ref{tab:DR_m}, we use DR test to estimate the filter length $m$ given $q$. For $q=3,4$, we have $m_q^\star=\lceil\frac{q_0m_0}{q}\rceil = 3$, whereas for $q=5$, we have $m_q^\star=\lceil\frac{q_0m_0}{q}\rceil = 2$.


The particularly difficult case is $q=4$ in DGP3. The candidate structure $(4,2)$ has static dimension eight, only one below the true static rank $q_0m_0=9$. When the factors follow a VAR($1$) process, factor persistence can make the final required static direction relatively weak. Consequently, the dynamic approximation error under $(4,2)$ can be small in finite samples, limiting the separation between $m=2$ and $m=3$.

\subsection{Selection Based on Information Criteria}
We see finite sample performance of the PC, DC, IC criteria in selecting $(q,m)$. We compare the results with the two existing methods given by \cite{Bai2007} and \cite{Amengual-Watson-2007}\footnote{In both \cite{Bai2007} and \cite{Amengual-Watson-2007}, we used IC test whenever it was needed to test for the number of static factors. When the lag order of the VAR($p$) model needed to be specified, we used $p=1$ for the first three specifications and $p=2$ for the fourth specification. For \cite{Bai2007}, we used the default parameters $(\delta,m)=(0.1,1)$ with covariance matrix method. For \cite{Amengual-Watson-2007}, we used the residuals from direct regression, i.e., $\hat Y_t^B$ in \cite{Amengual-Watson-2007}.}. 

In Table \ref{tab:dgp12} and \ref{tab:dgp34}, PC, DC, and IC denote the information criterion methods proposed in this paper. BN and AW denote methods proposed in \cite{Bai2007} and \cite{Amengual-Watson-2007}, respectively. To illustrate the attainable performance of each procedure, the tables report the best-performing penalty variant among $g_1,g_2,g_3$ for each method. In particular, PC and AW use $g_1(N,T) = (\frac{N+T}{NT})\log(\frac{NT}{N+T})$, $BN$ uses $g_2(N,T) = (\frac{N+T}{NT})\log(\min(N,T))$, and $DC, IC$ use $g_3(N,T) = (\frac{1}{\min(N,T)})\log(\min(N,T))$. Results for selecting filter length are missing in  BN and AW, as they only select the number of dynamic factors.
{
\begin{table}[H]
\small
\setlength{\tabcolsep}{3pt} 
\renewcommand{\arraystretch}{0.9}
\begin{tabular}{|ll|cc|cc|cc|c|c|cc|cc|cc|c|c|}
\hline
\hline
& & \multicolumn{8}{c|}{$DGP1$}& \multicolumn{8}{c|}{$DGP2$}\\ 
\hline
& & \multicolumn{2}{c|}{$PC$} & \multicolumn{2}{c|}{$DC$} & \multicolumn{2}{c|}{$IC$} & $BN$ & $AW$&\multicolumn{2}{c|}{$PC$} & \multicolumn{2}{c|}{$DC$} & \multicolumn{2}{c|}{$IC$} & $BN$ & $AW$\\ 
\hline
N&T & $q$ & $m$ & $q$ & $m$& $q$ & $m$ & $q$ & $q$ & $q$ & $m$ & $q$ & $m$& $q$ & $m$ & $q$ & $q$ \\
\hline
$100$ & $100$ & 0.97 & 0.99 & 0.98 & 0.98 & 0.97 & 0.98 & 0.34 & 0.23 & 0.47 & 0.50 & 0.15 & 0.18 & 0.26 & 0.26 & 0.04 & 0.45  \\
$100$ & $200$ & 0.99 & 0.99 & 0.98 & 0.99 & 0.98 & 0.99 & 0.55 & 0.89 & 0.91 & 0.91 & 0.18 & 0.18 & 0.93 & 0.93 & 0.00 & 0.01  \\
$100$ & $300$ & 1.00 & 1.00 & 1.00 & 1.00 & 1.00 & 1.00 & 0.89 & 0.99 & 0.98 & 0.98 & 0.20 & 0.22 & 0.99 & 0.99 & 0.00 & 0.00  \\
$200$ & $100$ & 1.00 & 1.00 & 1.00 & 1.00 & 1.00 & 1.00 & 0.82 & 0.98 & 0.97 & 0.99 & 0.57 & 0.66 & 0.98 & 0.99 & 0.45 & 0.94  \\
$200$ & $200$ & 1.00 & 1.00 & 1.00 & 1.00 & 1.00 & 1.00 & 1.00 & 1.00 & 1.00 & 1.00 & 0.98 & 0.98 & 0.75 & 0.75 & 0.33 & 0.99  \\
$200$ & $300$ & 1.00 & 1.00 & 1.00 & 1.00 & 1.00 & 1.00 & 1.00 & 1.00 & 1.00 & 1.00 & 1.00 & 1.00 & 1.00 & 1.00 & 0.01 & 0.68  \\
$300$ & $100$ & 0.99 & 1.00 & 0.98 & 0.98 & 0.99 & 1.00 & 0.97 & 0.98 & 1.00 & 0.99 & 0.91 & 0.94 & 1.00 & 1.00 & 0.98 & 1.00  \\
$300$ & $200$ & 1.00 & 1.00 & 1.00 & 1.00 & 1.00 & 1.00 & 1.00 & 1.00 & 0.99 & 0.99 & 1.00 & 1.00 & 0.99 & 0.99 & 0.93 & 1.00  \\
$300$ & $300$ & 1.00 & 1.00 & 1.00 & 1.00 & 1.00 & 1.00 & 1.00 & 1.00 & 1.00 & 1.00 & 0.99 & 0.99 & 1.00 & 1.00 & 0.79 & 1.00  \\
\hline
\end{tabular}
\caption{Marginal Monte Carlo selection frequencies under DGP1 and DGP2. For PC, DC, and IC, the columns labeled $q$ and $m$ report $\Pr(\hat q=q_0)$ and $\Pr(\hat m=m_0)$, respectively. BN and AW estimate only $q_0$.}
\label{tab:dgp12}
\end{table}}
Table \ref{tab:dgp12} reports marginal selection frequencies under DGP1 and DGP2. All three proposed selectors perform very well in DGP1, even in the smallest panel, and their factor-dimension selection frequencies are substantially higher than those of BN and AW in the smaller samples. DGP2 is more challenging because the idiosyncratic component is serially and locally cross-sectionally dependent. PC is the most robust criterion in this design and improves rapidly with the panel dimensions. DC and IC are more conservative in small panels but also become accurate as both $N$ and $T$ increase.

{
\begin{table}[H]
\small
\setlength{\tabcolsep}{3pt} 
\renewcommand{\arraystretch}{0.9}
\begin{tabular}{|ll|cc|cc|cc|c|c|cc|cc|cc|c|c|}
\hline
\hline
& & \multicolumn{8}{c|}{$DGP3$}& \multicolumn{8}{c|}{$DGP4$}\\
\hline
& & \multicolumn{2}{c|}{$PC$} & \multicolumn{2}{c|}{$DC$} & \multicolumn{2}{c|}{$IC$} & $BN$ & $AW$&\multicolumn{2}{c|}{$PC$} & \multicolumn{2}{c|}{$DC$} & \multicolumn{2}{c|}{$IC$} & $BN$ & $AW$\\ 
\hline
N&T & $q$ & $m$ & $q$ & $m$& $q$ & $m$ & $q$ & $q$ & $q$ & $m$ & $q$ & $m$& $q$ & $m$ & $q$ & $q$ \\
\hline
$100$ & $100$ & 0.90 & 0.34 & 0.31 & 0.08 & 0.91 & 0.49 & 0.51 & 0.01 & 0.94 & 0.82 & 0.72 & 0.55 & 0.99 & 0.92 & 0.62 & 0.02  \\
$100$ & $200$ & 0.96 & 0.18 & 0.39 & 0.11 & 0.92 & 0.21 & 0.10 & 0.58 & 0.98 & 0.66 & 0.94 & 0.86 & 0.97 & 0.71 & 0.67 & 0.80  \\
$100$ & $300$ & 0.99 & 0.07 & 0.51 & 0.22 & 0.93 & 0.02 & 0.08 & 0.88 & 1.00 & 0.70 & 1.00 & 0.98 & 0.99 & 0.63 & 0.84 & 0.99  \\
$200$ & $100$ & 0.96 & 0.82 & 0.41 & 0.29 & 0.82 & 0.61 & 0.60 & 0.44 & 1.00 & 0.97 & 0.87 & 0.82 & 0.97 & 0.94 & 0.96 & 0.68  \\
$200$ & $200$ & 1.00 & 0.97 & 0.99 & 0.98 & 1.00 & 1.00 & 0.32 & 0.98 & 1.00 & 1.00 & 1.00 & 1.00 & 1.00 & 1.00 & 0.99 & 0.99  \\
$200$ & $300$ & 1.00 & 0.99 & 1.00 & 1.00 & 1.00 & 1.00 & 0.54 & 1.00 & 1.00 & 1.00 & 1.00 & 1.00 & 1.00 & 1.00 & 1.00 & 1.00  \\
$300$ & $100$ & 0.99 & 0.98 & 0.34 & 0.18 & 0.86 & 0.70 & 0.86 & 0.88 & 1.00 & 1.00 & 0.95 & 0.93 & 0.98 & 0.99 & 1.00 & 0.94  \\
$300$ & $200$ & 1.00 & 1.00 & 1.00 & 1.00 & 1.00 & 1.00 & 0.87 & 0.99 & 1.00 & 1.00 & 1.00 & 1.00 & 1.00 & 1.00 & 1.00 & 1.00  \\
$300$ & $300$ & 1.00 & 1.00 & 1.00 & 1.00 & 1.00 & 1.00 & 0.82 & 1.00 & 1.00 & 1.00 & 1.00 & 1.00 & 1.00 & 1.00 & 1.00 & 1.00  \\
\hline
\end{tabular}
\caption{Marginal Monte Carlo selection frequencies under DGP3 and DGP4. For PC, DC, and IC, the columns labeled $q$ and $m$ report $\Pr(\hat q=q_0)$ and $\Pr(\hat m=m_0)$, respectively. BN and AW estimate only $q_0$.}
\label{tab:dgp34}
\end{table}}

Table \ref{tab:dgp34} reports results under serially dependent factor processes. For the common task of estimating $q_0$, PC has a higher or equal selection frequency than BN and AW in all reported designs. DC and IC also perform strongly in many moderately large panels, although DC does not uniformly dominate the benchmark procedures in unbalanced panels. Filter-length selection is generally more difficult than factor-dimension selection, particularly for PC under DGP3 when $N$ is small relative to $T$. Because BN and AW do not estimate $m_0$, comparisons with these methods concern only the selection of $q_0$.

\section{Empirical Analysis: Shocks in the United States}\label{sec:empirical}
In this section, we estimate the factor structure $(q,m)$ of dynamic factor models estimated using a large panel of U.S. macroeconomic variables. The Supplemental Materials~\ref{sec:apd-mp} additionally reports the estimated impulse responses of these variables to monetary policy shocks identified within the dynamic factor model framework.

The empirical analysis is based on the FRED--MD dataset of \cite{McCracken-Ng-2015}, covering the period from March 1959 to March 2025. The sample comprises $T = 793$ observations on $N = 126$ macroeconomic time series. All variables are transformed in accordance with the procedures recommended by \cite{McCracken-Ng-2015}.

To determine the factor structure, we implement the PC2, DC2, and IC2 criteria using a rolling window of fixed length equal to 10 years. The evaluation period begins in March 1969 and extends through March 2025. For the March of each year in this period, the criteria are computed using the information set consisting of the preceding 10 years of observations.

\begin{figure}[H]
    \centering
    \includegraphics[width=1\linewidth]{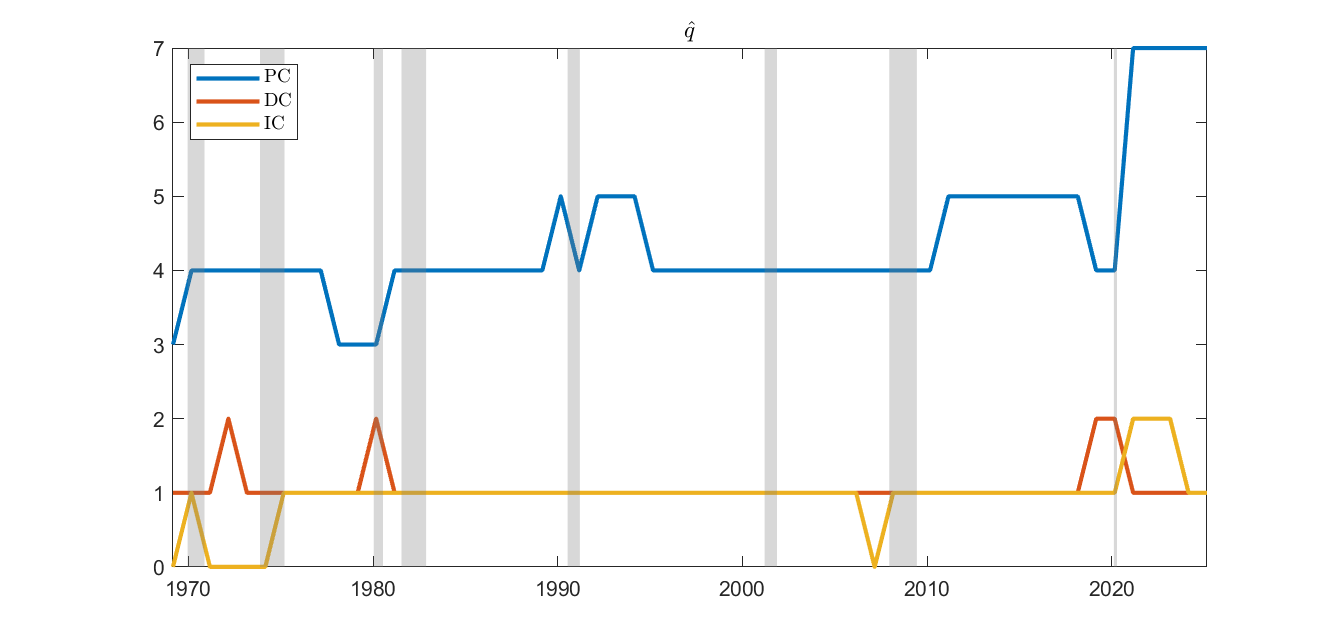}
    \caption{Estimated $q$ using PC2, DC2, and IC2.}
    \label{fig:qhat}
\end{figure}
The PC2 criterion tends to select a larger number of factors, whereas the DC2 and IC2 criteria typically yield a more parsimonious specification. Based on the PC2 results, we find evidence in favor of four dynamic factors prior to the COVID-19 period. In contrast, the post-COVID-19 period is characterized by an increase in the estimated number of factors, with the criterion indicating the presence of seven dynamic factors.

\begin{figure}[H]
    \centering
    \includegraphics[width=1\linewidth]{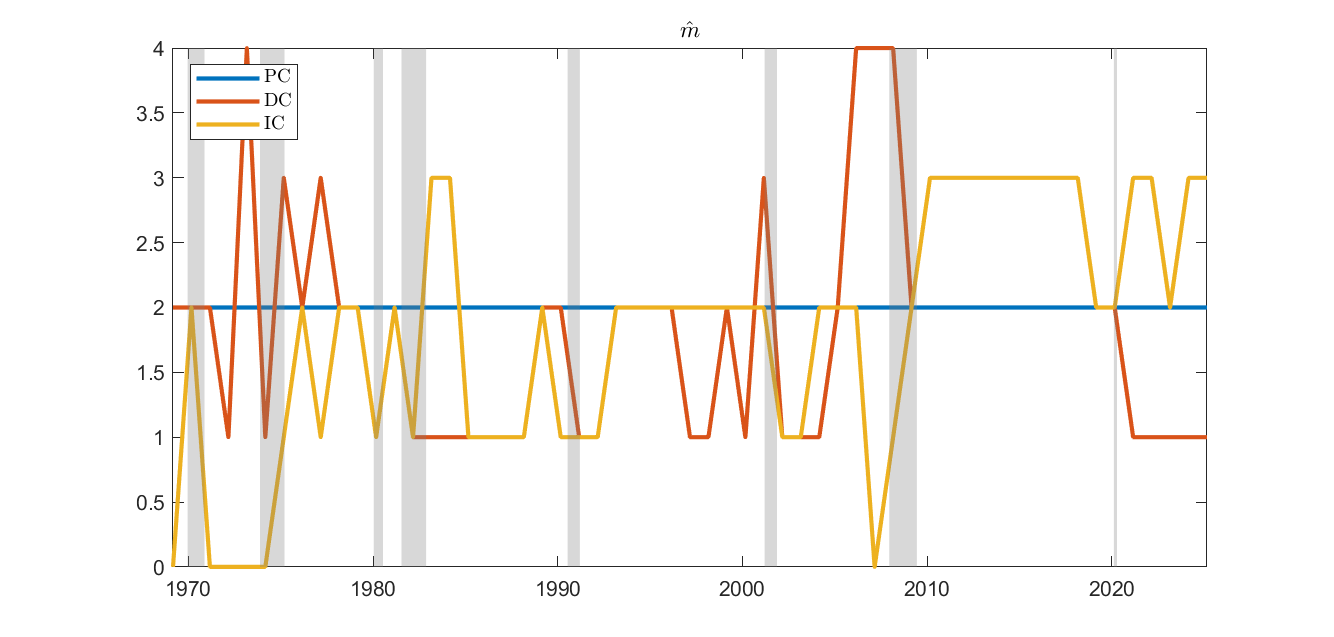}
    \caption{Estimated $m$ Using PC2, DC2, and IC2.}
    \label{fig:mhat}
\end{figure}
The PC2 criterion consistently selects $m = 2$ across the sample. The DC2 and IC2 criteria fluctuates around selecting $m = 1$ to $m=3$.

\begin{figure}[H]
    \centering
    \includegraphics[width=1\linewidth]{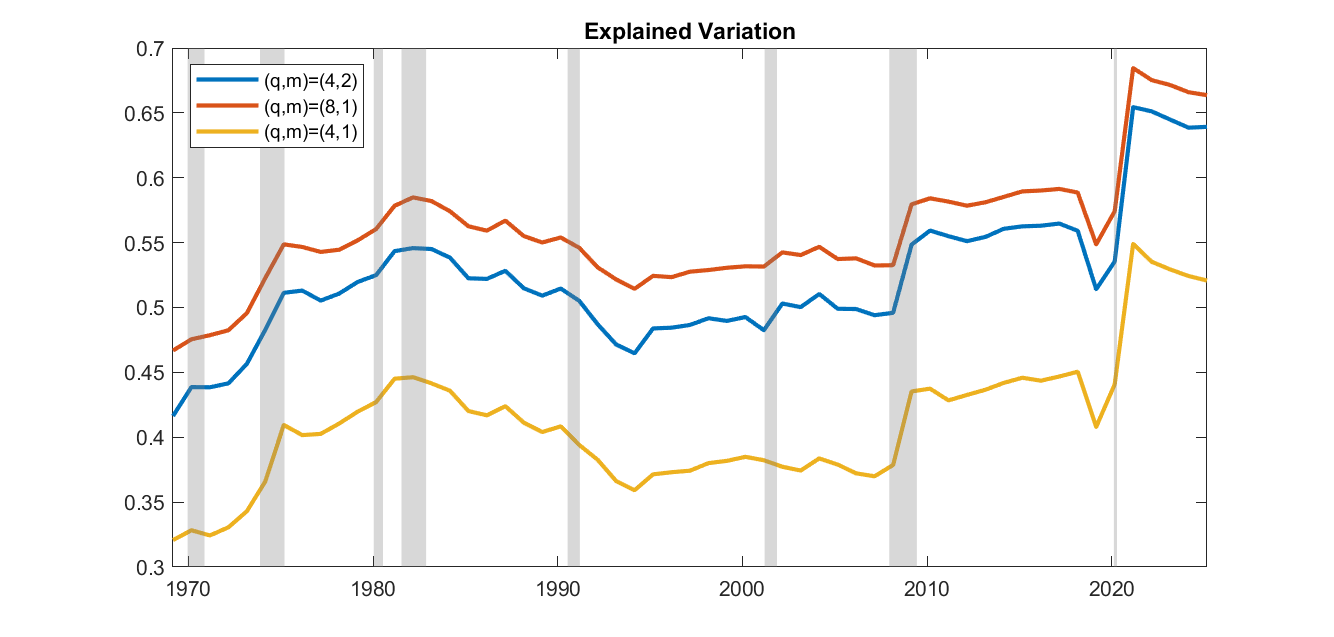}
    \caption{Explained variation of the common component.}
    \label{fig:EV}
\end{figure}
Based on the PC2 criterion, we find evidence supporting the specification $(q,m) = (4,2)$ for the majority of the sample period. The corresponding common component explains approximately 52\% of the total variation in average, with this share increasing to about 65\% in the post-COVID-19 period. A similar, but less drastic, increase in explained variation is also observed during the 2008 global financial crisis.

When adopting the static specification, $(q,m) = (8,1)$, the explained variation increases by approximately 3.6 percentage points. This gain can be interpreted as reflecting improved in-sample fit due to the greater flexibility of the model. 

In contrast, holding the number of factors fixed while reducing the filter length to $(q,m) = (4,1)$ leads to a substantial decline in explained variation, on the order of 11.1 percentage points. This deterioration can be interpreted as evidence of misspecification.

\section{Conclusion}\label{sec:conclusion}
This paper develops methods for determining the structure of dynamic factor models with finite filter length. As an intermediate step, we propose a computationally efficient alternating least squares (ALS) estimator that directly targets the dynamic factors and establish its consistency.

We introduce two procedures for jointly selecting the number of dynamic factors and filter length: a dynamic singular value ratio (DR) test extending \citet{Ahn-Horenstein-2013}, and an information criterion method similar to \cite{Bai02} that penalizes both the number of dynamic factors and its implied static dimensions. Both procedures are shown to be consistent.

Empirical results using the FRED-MD dataset indicate evidence of filter length $m=2$ with four dynamic factors prior to COVID-19 period, which increase to seven post-COVID-19.

\newpage
\bibliography{references.bib}

@article{Eichenbaum-Evans-1995,
    author = {Eichenbaum, Martin and Evans, Charles L.},
    title = {Some Empirical Evidence on the Effects of Shocks to Monetary Policy on Exchange Rates*},
    journal = {The Quarterly Journal of Economics},
    volume = {110},
    number = {4},
    pages = {975-1009},
    year = {1995},
    month = {11},
    issn = {0033-5533},
    doi = {10.2307/2946646},
    url = {https://doi.org/10.2307/2946646},
    eprint = {https://academic.oup.com/qje/article-pdf/110/4/975/5290735/110-4-975.pdf},
}

@article{Forni-Gambetti-2010,
title = {The dynamic effects of monetary policy: A structural factor model approach},
journal = {Journal of Monetary Economics},
volume = {57},
number = {2},
pages = {203-216},
year = {2010},
issn = {0304-3932},
doi = {https://doi.org/10.1016/j.jmoneco.2009.11.009},
url = {https://www.sciencedirect.com/science/article/pii/S0304393209001597},
author = {Mario Forni and Luca Gambetti}
}

@article{Bai-Ng-2023,
title = {Approximate factor models with weaker loadings},
journal = {Journal of Econometrics},
volume = {235},
number = {2},
pages = {1893-1916},
year = {2023},
issn = {0304-4076},
doi = {https://doi.org/10.1016/j.jeconom.2023.01.027},
url = {https://www.sciencedirect.com/science/article/pii/S030440762300060X},
author = {Jushan Bai and Serena Ng}
}

@article{Bai-Li-2016,
    author = {Bai, Jushan and Li, Kunpeng},
    title = {Maximum Likelihood Estimation and Inference for Approximate Factor Models of High Dimension},
    journal = {The Review of Economics and Statistics},
    volume = {98},
    number = {2},
    pages = {298-309},
    year = {2016},
    month = {05},
    issn = {0034-6535},
    doi = {10.1162/REST_a_00519},
    url = {https://doi.org/10.1162/REST_a_00519},
    eprint = {https://direct.mit.edu/rest/article-pdf/98/2/298/1918134/rest_a_00519.pdf},
}

@article{Bai-2003,
author = {Bai, Jushan},
title = {Inferential Theory for Factor Models of Large Dimensions},
journal = {Econometrica},
volume = {71},
number = {1},
pages = {135-171},
doi = {https://doi.org/10.1111/1468-0262.00392},
url = {https://onlinelibrary.wiley.com/doi/abs/10.1111/1468-0262.00392},
eprint = {https://onlinelibrary.wiley.com/doi/pdf/10.1111/1468-0262.00392},
year = {2003}
}

@article{Onatski-2010,
 ISSN = {00346535, 15309142},
 URL = {http://www.jstor.org/stable/40985808},
 author = {Alexei Onatski},
 journal = {The Review of Economics and Statistics},
 number = {4},
 pages = {1004--1016},
 publisher = {The MIT Press},
 title = {DETERMINING THE NUMBER OF FACTORS FROM EMPIRICAL DISTRIBUTION OF EIGENVALUES},
 urldate = {2025-12-17},
 volume = {92},
 year = {2010}
}

@article{Barigozzi-Hallin-Luciani-Zaffaroni-2023,
title = {Inferential theory for generalized dynamic factor models},
journal = {Journal of Econometrics},
volume = {239},
number = {2},
pages = {105422},
year = {2024},
issn = {0304-4076},
doi = {https://doi.org/10.1016/j.jeconom.2023.02.003},
url = {https://www.sciencedirect.com/science/article/pii/S0304407623000593},
author = {Matteo Barigozzi and Marc Hallin and Matteo Luciani and Paolo Zaffaroni}
}

@article{Forni-Hallin-Lippi-Zaffaroni-2017,
title = {Dynamic factor models with infinite-dimensional factor space: Asymptotic analysis},
journal = {Journal of Econometrics},
volume = {199},
number = {1},
pages = {74-92},
year = {2017},
issn = {0304-4076},
doi = {https://doi.org/10.1016/j.jeconom.2017.04.002},
url = {https://www.sciencedirect.com/science/article/pii/S0304407617300477},
author = {Mario Forni and Marc Hallin and Marco Lippi and Paolo Zaffaroni}
}

@TechReport{McCracken-Ng-2015,
type={Working Papers},
institution={Federal Reserve Bank of St. Louis},
author={Michael W. McCracken and Serena Ng},
title={FRED-MD: A Monthly Database for Macroeconomic Research},
year={2015},
month={Jun},
number={2015-12},
doi={10.20955/wp.2015.012},
url={https://ideas.repec.org/p/fip/fedlwp/2015-012.html},
}

@article{Ahn-Horenstein-2013,
author = {Ahn, Seung C. and Horenstein, Alex R.},
title = {Eigenvalue Ratio Test for the Number of Factors},
journal = {Econometrica},
volume = {81},
number = {3},
pages = {1203-1227},
doi = {https://doi.org/10.3982/ECTA8968},
url = {https://onlinelibrary.wiley.com/doi/abs/10.3982/ECTA8968},
eprint = {https://onlinelibrary.wiley.com/doi/pdf/10.3982/ECTA8968},
year = {2013}
}

@article{Chamberlain_Rothschild_1983,
 ISSN = {00129682, 14680262},
 URL = {http://www.jstor.org/stable/1912275},
 author = {Gary Chamberlain and Michael Rothschild},
 journal = {Econometrica},
 number = {5},
 pages = {1281--1304},
 publisher = {[Wiley, Econometric Society]},
 title = {Arbitrage, Factor Structure, and Mean-Variance Analysis on Large Asset Markets},
 urldate = {2025-10-17},
 volume = {51},
 year = {1983}
}

@article{Amengual-Watson-2007,
 ISSN = {07350015},
 URL = {http://www.jstor.org/stable/27638909},
 author = {Dante Amengual and Mark W. Watson},
 journal = {Journal of Business \& Economic Statistics},
 number = {1},
 pages = {91--96},
 publisher = {[American Statistical Association, Taylor & Francis, Ltd.]},
 title = {Consistent Estimation of the Number of Dynamic Factors in a Large N and T Panel},
 urldate = {2025-10-14},
 volume = {25},
 year = {2007}
}

@article{Forni2000,
    author = {Forni and Reichlin, Lucrezia and Hallin, Marc and Lippi, Marco},
    year = {2000},
    month = {02},
    pages = {540-554},
    title = {The Generalized Dynamic-Factor Model: Identification And Estimation},
    volume = {82},
    journal = {The Review of Economics and Statistics},
    doi = {10.1162/003465300559037}
}

@article{Stock02,
 ISSN = {01621459},
 URL = {http://www.jstor.org/stable/3085839},
 author = {James H. Stock and Mark W. Watson},
 journal = {Journal of the American Statistical Association},
 number = {460},
 pages = {1167--1179},
 publisher = {[American Statistical Association, Taylor & Francis, Ltd.]},
 title = {Forecasting Using Principal Components from a Large Number of Predictors},
 volume = {97},
 year = {2002}
}

@article{Bai02,
author = {Bai, Jushan and Ng, Serena},
title = {Determining the Number of Factors in Approximate Factor Models},
journal = {Econometrica},
volume = {70},
number = {1},
pages = {191-221},
doi = {https://doi.org/10.1111/1468-0262.00273},
url = {https://onlinelibrary.wiley.com/doi/abs/10.1111/1468-0262.00273},
eprint = {https://onlinelibrary.wiley.com/doi/pdf/10.1111/1468-0262.00273},
year = {2002}
}

@article{Forni2009,
 ISSN = {02664666, 14694360},
 URL = {http://www.jstor.org/stable/40388590},
 author = {Mario Forni and Domenico Giannone and Marco Lippi and Lucrezia Reichlin},
 journal = {Econometric Theory},
 number = {5},
 pages = {1319--1347},
 publisher = {Cambridge University Press},
 title = {Opening the Black Box: Structural Factor Models with Large Cross Sections},
 urldate = {2023-03-29},
 volume = {25},
 year = {2009}
}

@article{Hallin2007,
 ISSN = {01621459},
 URL = {http://www.jstor.org/stable/27639890},
 author = {Marc Hallin and Roman Liška},
 journal = {Journal of the American Statistical Association},
 number = {478},
 pages = {603--617},
 publisher = {[American Statistical Association, Taylor \& Francis, Ltd.]},
 title = {Determining the Number of Factors in the General Dynamic Factor Model},
 urldate = {2023-03-29},
 volume = {102},
 year = {2007}
}

@article{Bai2007,
 ISSN = {07350015},
 URL = {http://www.jstor.org/stable/27638906},
 author = {Jushan Bai and Serena Ng},
 journal = {Journal of Business \& Economic Statistics},
 number = {1},
 pages = {52--60},
 publisher = {[American Statistical Association, Taylor \& Francis, Ltd.]},
 title = {Determining the Number of Primitive Shocks in Factor Models},
 urldate = {2023-03-29},
 volume = {25},
 year = {2007}
}

@article{Bai2015,
author = {Jushan Bai and Peng Wang},
title = {Identification and Bayesian Estimation of Dynamic Factor Models},
journal = {Journal of Business \& Economic Statistics},
volume = {33},
number = {2},
pages = {221-240},
year  = {2015},
publisher = {Taylor & Francis},
doi = {10.1080/07350015.2014.941467},
URL = { 
        https://doi.org/10.1080/07350015.2014.941467
},
eprint = { 
        https://doi.org/10.1080/07350015.2014.941467
}
}

@article{Forni2015,
title = {Dynamic factor models with infinite-dimensional factor spaces: One-sided representations},
journal = {Journal of Econometrics},
volume = {185},
number = {2},
pages = {359-371},
year = {2015},
issn = {0304-4076},
doi = {https://doi.org/10.1016/j.jeconom.2013.10.017},
url = {https://www.sciencedirect.com/science/article/pii/S0304407614002693},
author = {Mario Forni and Marc Hallin and Marco Lippi and Paolo Zaffaroni}
}

@article{bai_wang_2013,
title = {Identification theory for high dimensional static and dynamic factor models},
journal = {Journal of Econometrics},
volume = {178},
number = {2},
pages = {794-804},
year = {2014},
issn = {0304-4076},
doi = {https://doi.org/10.1016/j.jeconom.2013.11.001},
url = {https://www.sciencedirect.com/science/article/pii/S0304407613002315},
author = {Jushan Bai and Peng Wang}
}

@article{Moon_Weidner_2017, title={DYNAMIC LINEAR PANEL REGRESSION MODELS WITH INTERACTIVE FIXED EFFECTS}, volume={33}, DOI={10.1017/S0266466615000328}, number={1}, journal={Econometric Theory}, author={Moon, Hyungsik Roger and Weidner, Martin}, year={2017}, pages={158–195}}

@book{BrockwellDavis1991,
  author    = {Brockwell, Peter J. and Davis, Richard A.},
  title     = {Time Series: Theory and Methods},
  edition   = {2},
  year      = {1991},
  publisher = {Springer},
  address   = {New York},
  doi       = {10.1007/978-1-4419-0320-4}
}
\newpage

\section{Appendix}
Proof of Proposition \ref{alt_spec} and Proposition \ref{test_var} are in Supplemental Materials.

\subsection{Proof of Proposition \ref{consistency_1}}

\begin{proof}[Proof of Proposition \ref{consistency_1}]
Let
\[
C=G\Gamma'
\]
denote the true common-component matrix, so that $X=C+E$, and set
\[
a_{NT}
=
\frac{1}{\sqrt N}+\frac{1}{\sqrt T},
\qquad
x_{NT}
=
\frac{1}{\sqrt T}\fnorm{M_{\hat G}G}.
\]
Note that the minimization problem can be concentrated as
\begin{align*}
    \min_{(G,\Gamma)}\frac{1}{NT}\fnorm{X-G\Gamma'}^2 &= \min_{G}\frac{1}{NT}\fnorm{M_GX}^2.
\end{align*}
Because $(q,m)\in S_1$, Proposition \ref{alt_spec} implies that $C$ has an exact representation with structure $(q,m)$. Hence, by the optimality of $(\hat G,\hat\Gamma)$ and the definition of $P_{\hat G}$,
\[
\frac{1}{NT}\fnorm{M_{\hat G}X}^2
\leq
\frac{1}{NT}\fnorm{X-\hat G\hat\Gamma'}^2
\leq
\frac{1}{NT}\fnorm{E}^2.
\]
Using $X=G\Gamma'+E$ and the orthogonal decomposition $E=P_{\hat G}E+M_{\hat G}E$, we obtain
\begin{align}
0
\leq{}&
-\frac{1}{NT}\fnorm{M_{\hat G}G\Gamma'}^2
+\frac{1}{NT}\fnorm{P_{\hat G}E}^2 \notag\\
&-
\frac{2}{NT}
\operatorname{tr}
\left[
(M_{\hat G}G)'E\Gamma
\right].
\label{eq:basic_consistency}
\end{align}
For the first term,
\begin{align*}
\frac{1}{NT}\fnorm{M_{\hat G}G\Gamma'}^2
&=
\operatorname{tr}
\left[
\left(\frac{\Gamma'\Gamma}{N}\right)
\left(\frac{G'M_{\hat G}G}{T}\right)
\right]\\
&\geq
\lambda_{\min}
\left(
\frac{\Gamma'\Gamma}{N}
\right)
x_{NT}^2.
\end{align*}
By Assumption \ref{pd}, there exists a constant
$c>0$ such that
\[
\lambda_{\min}
\left(
\frac{\Gamma'\Gamma}{N}
\right)
\geq c
\]
with probability approaching one. Next, letting $r=qm$, we have $\operatorname{rank}(P_{\hat G})\leq r$. Because $(q,m)$ is fixed,
\begin{align*}
0
\leq
\frac{1}{NT}\fnorm{P_{\hat G}E}^2
&\leq
\frac{r}{NT}\|E\|^2\\
&=
O_p\left(\frac{1}{N}+\frac{1}{T}\right)\\
&=
O_p(a_{NT}^2),
\end{align*}
where $\|\cdot\|$ denotes the spectral norm and the second equality follows from Assumption \ref{error_norm}.

For the cross term, the Frobenius Cauchy--Schwarz inequality gives
\begin{align*}
\left|
\frac{2}{NT}
\operatorname{tr}
\left[
(M_{\hat G}G)'E\Gamma
\right]
\right|
&\leq
\frac{2}{NT}
\fnorm{M_{\hat G}G}
\fnorm{E\Gamma}\\
&\leq
\frac{2}{NT}
\fnorm{M_{\hat G}G}
\|E\|\fnorm{\Gamma}.
\end{align*}
Assumption \ref{pd} implies $\fnorm{\Gamma}=O_p(\sqrt N)$, while Assumption \ref{error_norm} implies $\|E\|=O_p(\sqrt{\max\{N,T\}})$. Therefore,
\begin{align*}
\left|
\frac{2}{NT}
\operatorname{tr}
\left[
(M_{\hat G}G)'E\Gamma
\right]
\right|
&=
O_p\left(
\frac{\sqrt{\max\{N,T\}}}{\sqrt{NT}}
\right)x_{NT}\\
&=
O_p(a_{NT})x_{NT}.
\end{align*}

Combining these bounds with \eqref{eq:basic_consistency}, there exist nonnegative random variables $A_{NT}=O_p(1)$ and $B_{NT}=O_p(1)$ such that, with probability approaching one,
\[
c x_{NT}^2
\leq
A_{NT}a_{NT}^2
+
B_{NT}a_{NT}x_{NT}.
\]
By Young's inequality,
\[
B_{NT}a_{NT}x_{NT}
\leq
\frac{c}{2}x_{NT}^2
+
\frac{B_{NT}^2}{2c}a_{NT}^2.
\]
It follows that
\[
x_{NT}^2
\leq
\left(
\frac{2A_{NT}}{c}
+
\frac{B_{NT}^2}{c^2}
\right)a_{NT}^2
=
O_p(a_{NT}^2).
\]
Consequently,
\[
\frac{1}{\sqrt T}\fnorm{M_{\hat G}G}
=
x_{NT}
=
O_p\left(
\frac{1}{\sqrt N}+\frac{1}{\sqrt T}
\right).
\]
\end{proof}

\subsection{Proof of Theorem \ref{thm:q_consistency} and \ref{thm:m_consistency}}

In the following lemmas, let
\[
C_0
=
\sum_{k=0}^{m_0-1}F_{-k}\lambda_k'
=
G\Gamma'
\]
denote the matrix of true common components, and let
\[
S
=
\{0,\ldots,q_{\max}\}
\times
\{0,\ldots,m_{\max}\},
\]
where $q_{\max}$ and $m_{\max}$ are fixed and $(q_0,m_0)\in S$. We represent the absence of a common component by
$(q_0,m_0)=(0,0)$.

For each $(q,m)\in S$ with $q,m\geq1$, let
\[
\left(
\hat F_{-k}^{(q,m)},
\hat\lambda_k^{(q,m)}
\right)_{k=0}^{m-1}
\]
be any global minimizer of the least-squares objective
\eqref{obj_fn}, where $\hat F_0^{(q,m)},\ldots,\hat F_{-(m-1)}^{(q,m)}$ are lag-stacked matrices generated by a common estimated $q$-dimensional factor sequence. Define
\begin{equation}
\label{common_hat}
\hat C_{q,m}
=
\sum_{k=0}^{m-1}
\hat F_{-k}^{(q,m)}
\hat\lambda_k^{(q,m)\prime}.
\end{equation}
For $q=0$ or $m=0$, set
\[
\hat C_{q,m}=0.
\]
We also define the spectral-norm residual
\[
\hat\delta_{q+1,m}(X)
=
\|X-\hat C_{q,m}\|.
\]
Because $\hat C_{q,m}\in\mathcal C_{q,m}$,
\[
\hat\delta_{q+1,m}(X)
\geq
\delta_{q+1,m}(X)
\]
in general; equality need not hold because $\hat C_{q,m}$ minimizes the Frobenius-norm criterion rather than the spectral-norm criterion.

Let $r_0=q_0m_0$. Consider the disjoint partition of $S$ given by
\begin{equation}
\begin{split}
\label{partition_set_finite}
S_1
&=
\left\{
(q,m)\in S:
q\geq q_0,\quad
qm\geq r_0
\right\},\\
S_2
&=
\left\{
(q,m)\in S:
qm<r_0
\right\},\\
S_3
&=
\left\{
(q,m)\in S:
q<q_0,\quad
qm\geq r_0
\right\}.
\end{split}
\end{equation}
Thus, $S_1$ contains the admissible structures, $S_2$ contains structures whose static dimension is smaller than the true static rank, and $S_3$ contains underfactored structures whose static dimension is at least $r_0$.

\begin{lemma}[Lower bound on the fitted residual]
\label{small_1_als}
For every $(q,m)\in S$,
\[
\hat\delta_{q+1,m}(X)
\geq
\delta_{r_0+qm+1,1}(E)
=
\sigma_{r_0+qm+1}(E).
\]
\end{lemma}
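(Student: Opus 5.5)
The plan is to write the fitted residual as $X-\hat C_{q,m}=E+(C_0-\hat C_{q,m})$ and read this as $E$ plus a matrix of low rank. That reduces the lemma to Weyl's inequality for singular values, which says that a low-rank perturbation can lower the singular values of $E$ only by shifting their index.

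First I bound the rank of the perturbation. Under Assumption \ref{pd}, $C_0=G\Gamma'$ has rank at most $r_0=q_0m_0$; in the null case $(q_0,m_0)=(0,0)$ we have $C_0=0$ and the bound holds trivially. For $q,m\geq1$ the fitted matrix $\hat C_{q,m}=\sum_{k=0}^{m-1}\hat F_{-k}^{(q,m)}\hat\lambda_k^{(q,m)\prime}$ factors as $\hat G\hat\Gamma'$ with $\hat G\in\mathbb R^{T\times qm}$, so it has rank at most $qm$. For $q=0$ or $m=0$ it is zero. Hence
\[
D:=C_0-\hat C_{q,m},\qquad \operatorname{rank}(D)\leq r_0+qm .
\]

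Next I apply Weyl's inequality in the form $\sigma_{i+j-1}(A+B)\leq\sigma_i(A)+\sigma_j(B)$. Take $A=X-\hat C_{q,m}$, $B=-D$, $i=1$ and $j=r_0+qm+1$. Then $A+B=E$, and $\sigma_{r_0+qm+1}(-D)=0$ because of the rank bound. This gives
\[
\sigma_{r_0+qm+1}(E)\leq\sigma_1(X-\hat C_{q,m})=\|X-\hat C_{q,m}\|=\hat\delta_{q+1,m}(X).
\]
The statement's identification $\delta_{r_0+qm+1,1}(E)=\sigma_{r_0+qm+1}(E)$ is just the Eckart--Young characterization already noted in the text: $\mathcal C_{k,1}$ is the set of matrices of rank at most $k$, and the best spectral-norm approximation error of rank $k$ equals the $(k+1)$-th singular value. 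Equivalently, I can avoid Weyl and argue directly: $D\in\mathcal C_{r_0+qm,1}$, so $\|E-D\|\geq\delta_{r_0+qm+1,1}(E)$.

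The argument is deterministic and holds for any global minimizer; it does not even use optimality. The only point needing care is the bookkeeping of degenerate cases. When $q=0$, $m=0$, or $(q_0,m_0)=(0,0)$, the corresponding rank term is zero. If the index $r_0+qm+1$ exceeds $\min(N,T)$, that singular value is defined as zero and the inequality is trivial. I therefore do not expect any real obstacle.
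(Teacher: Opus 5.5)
Your proof is correct and matches the paper's: both bound $\operatorname{rank}(\hat C_{q,m}-C_0)\le r_0+qm$ and then conclude via Eckart--Young, and your Weyl-inequality version is an equivalent packaging of the same fact. One small point: $\operatorname{rank}(C_0)\le r_0$ follows immediately from $C_0=G\Gamma'$ with $G\in\mathbb R^{T\times r_0}$, so it does not need Assumption \ref{pd}.
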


\begin{proof}
Since
\[
\operatorname{rank}(C_0)\leq r_0
\qquad\text{and}\qquad
\operatorname{rank}(\hat C_{q,m})\leq qm,
\]
we have
\[
\operatorname{rank}
\left(
\hat C_{q,m}-C_0
\right)
\leq
r_0+qm.
\]
Using $X=C_0+E$, it follows that
\begin{align*}
\hat\delta_{q+1,m}(X)
&=
\left\|
E-
\left(
\hat C_{q,m}-C_0
\right)
\right\|\\
&\geq
\inf_{\operatorname{rank}(A)\leq r_0+qm}
\|E-A\|\\
&=
\sigma_{r_0+qm+1}(E)\\
&=
\delta_{r_0+qm+1,1}(E),
\end{align*}
where the third equality follows from the Eckart--Young theorem.
\end{proof}

To apply Assumption \ref{lower_bound_E} to the preceding bound
uniformly over $S$, it is sufficient to choose
\[
k_{\max}
\geq
r_0+q_{\max}m_{\max}.
\]

\begin{lemma}[Consistency of the fitted common component]
\label{small_0_als}
Let Assumption \ref{error_norm} hold. Then
\[
\max_{(q,m)\in S_1}
\frac{1}{\sqrt{NT}}
\fnorm{
\hat C_{q,m}-C_0
}
=
O_p\left(
\sqrt{\frac{N+T}{NT}}
\right).
\]
\end{lemma}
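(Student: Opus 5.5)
The argument is a basic-inequality comparison of the fitted and true common components. It is done separately for each fixed $(q,m)\in S_1$, and the maximum over the finite set $S_1\subseteq S$ is then immediate. Fix $(q,m)\in S_1$ and write $\Delta=\hat C_{q,m}-C_0$.

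By Proposition \ref{alt_spec}, $C_0$ admits an exact representation with structure $(q,m)$, so $C_0\in\mathcal C_{q,m}$. It is therefore a feasible value of the objective \eqref{obj_fn}, and the global optimality of $\hat C_{q,m}$ gives
\[
\fnorm{X-\hat C_{q,m}}^2\leq \fnorm{X-C_0}^2=\fnorm{E}^2 .
\]
Substituting $X=C_0+E$ turns the left side into $\fnorm{E-\Delta}^2$. Expanding and cancelling $\fnorm{E}^2$ yields
\[
\fnorm{\Delta}^2\leq 2\operatorname{tr}(\Delta'E).
\]

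The key step is to bound the cross term using the spectral norm of $E$ rather than its Frobenius norm, since $\fnorm{E}$ is of order $\sqrt{NT}$ and would give nothing useful. The rank bounds $\operatorname{rank}(C_0)\leq r_0$ and $\operatorname{rank}(\hat C_{q,m})\leq qm$ give $\operatorname{rank}(\Delta)\leq r_0+qm$. Applying trace duality and then the nuclear-to-Frobenius bound for a matrix of rank at most $r_0+qm$,
\[
|\operatorname{tr}(\Delta'E)|\leq \|E\|\,\|\Delta\|_*\leq \|E\|\sqrt{r_0+qm}\,\fnorm{\Delta}.
\]
Dividing by $\fnorm{\Delta}$ (the case $\Delta=0$ being trivial) gives
\[
\fnorm{\Delta}\leq 2\sqrt{r_0+q_{\max}m_{\max}}\,\|E\| .
\]
An equivalent route is to note that $P_V E$ has rank at most $r_0+qm$, where $V$ is the row space of $\Delta$.

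By Assumption \ref{error_norm}, $\|E\|=O_p(\sqrt{\max\{N,T\}})$, which is $O_p(\sqrt{N+T})$. Dividing by $\sqrt{NT}$ gives the claimed rate $O_p(\sqrt{(N+T)/(NT)})$. The constant $2\sqrt{r_0+q_{\max}m_{\max}}$ does not depend on $(q,m)$, so the bound holds uniformly over $S_1$, which is a finite set.

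The one point needing care is the very first step: feasibility of $C_0$ in the $(q,m)$ problem requires an exact, not merely asymptotic, representation. For $S_1$ this is exactly the constructive direction of Proposition \ref{alt_spec}, which holds deterministically, so no probabilistic argument is needed there. Everything else is deterministic linear algebra plus Assumption \ref{error_norm}; Assumption \ref{pd} is not needed for this lemma.
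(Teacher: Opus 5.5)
Your proposal is correct and follows the paper's proof essentially step for step. It uses the same ingredients: feasibility of $C_0$ via the constructive direction of Proposition~\ref{alt_spec}, the basic inequality $\|\hat C_{q,m}-C_0\|_F^2\le 2|\operatorname{tr}(E'(\hat C_{q,m}-C_0))|$, spectral--nuclear duality with the rank bound $qm+r_0$, and the uniform constant $2\sqrt{q_{\max}m_{\max}+r_0}$ combined with Assumption~\ref{error_norm}.
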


\begin{proof}
Fix $(q,m)\in S_1$ and define
\[
\Delta_{q,m}
=
\hat C_{q,m}-C_0.
\]
By Proposition \ref{alt_spec}, $C_0$ admits an exact representation
with structure $(q,m)$ and is therefore feasible in the
least-squares problem defining $\hat C_{q,m}$. Global
optimality gives
\[
\fnorm{X-\hat C_{q,m}}^2
\leq
\fnorm{X-C_0}^2
=
\fnorm{E}^2.
\]
Since
\[
X-\hat C_{q,m}
=
E-\Delta_{q,m},
\]
we obtain
\[
\fnorm{\Delta_{q,m}}^2
\leq
2\left|
\operatorname{tr}
\left(
E'\Delta_{q,m}
\right)
\right|.
\]
By spectral--nuclear norm duality,
\[
\left|
\operatorname{tr}
\left(
E'\Delta_{q,m}
\right)
\right|
\leq
\|E\|
\|\Delta_{q,m}\|_*,
\]
where $\|\cdot\|_*$ denotes the nuclear norm. Moreover,
\[
\|\Delta_{q,m}\|_*
\leq
\sqrt{
\operatorname{rank}(\Delta_{q,m})
}
\fnorm{\Delta_{q,m}},
\]
and
\[
\operatorname{rank}(\Delta_{q,m})
\leq
\operatorname{rank}(\hat C_{q,m})
+
\operatorname{rank}(C_0)
\leq
qm+r_0.
\]
Therefore, if $\Delta_{q,m}\neq0$,
\[
\|\Delta_{q,m}\|_F
\leq
2\sqrt{qm+r_0}\,\|E\|.
\]
The same inequality holds trivially when $\Delta_{q,m}=0$.

Since
\[
qm+r_0
\leq
q_{\max}m_{\max}+r_0
\]
uniformly over $(q,m)\in S_1$, it follows that
\[
\max_{(q,m)\in S_1}
\|\hat C_{q,m}-C_0\|_F
\leq
2\sqrt{
q_{\max}m_{\max}+r_0
}\,\|E\|.
\]
Assumption \ref{error_norm} gives
\[
\|E\|
=
O_p\left(
\sqrt{\max\{N,T\}}
\right).
\]
Consequently,
\begin{align*}
\max_{(q,m)\in S_1}
\frac{1}{\sqrt{NT}}
\|\hat C_{q,m}-C_0\|_F
&=
O_p\left(
\sqrt{
\frac{\max\{N,T\}}{NT}
}
\right)\\
&=
O_p\left(
\sqrt{
\frac{N+T}{NT}
}
\right),
\end{align*}
which proves the result.
\end{proof}


\begin{lemma}[Upper bound on the fitted residual]\label{small_2_als}
Suppose Assumptions \ref{pd} and \ref{error_norm} hold. We have
\begin{align*}
\max_{(q,m)\in S_1}\frac{1}{\sqrt{NT}}\hat\delta_{q+1,m}(X)= O_p\left(\sqrt{\frac{N+T}{NT}}\right).
\end{align*}
\end{lemma}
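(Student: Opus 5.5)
The bound follows from the triangle inequality combined with Lemma \ref{small_0_als}. For any $(q,m)\in S_1$, write $X-\hat C_{q,m}=E-(\hat C_{q,m}-C_0)$, so that
\[
\hat\delta_{q+1,m}(X)
\leq
\|E\|+\|\hat C_{q,m}-C_0\|
\leq
\|E\|+\fnorm{\hat C_{q,m}-C_0},
\]
using that the spectral norm is dominated by the Frobenius norm. The first term does not depend on $(q,m)$. Because $S_1$ is finite, with cardinality at most $(q_{\max}+1)(m_{\max}+1)$, the maximum over $S_1$ can be taken term by term:
\[
\max_{(q,m)\in S_1}\frac{1}{\sqrt{NT}}\hat\delta_{q+1,m}(X)
\leq
\frac{\|E\|}{\sqrt{NT}}
+
\max_{(q,m)\in S_1}\frac{1}{\sqrt{NT}}\fnorm{\hat C_{q,m}-C_0}.
\]

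For the first term, Assumption \ref{error_norm} gives $\|E\|/\sqrt{NT}=O_p(\sqrt{\max\{N,T\}/(NT)})$. Since $\max\{N,T\}\leq N+T$, this is $O_p(\sqrt{(N+T)/(NT)})$. For the second term, Lemma \ref{small_0_als} directly gives $O_p(\sqrt{(N+T)/(NT)})$. That lemma relies on Proposition \ref{alt_spec} to guarantee that $C_0$ is feasible for every $(q,m)\in S_1$, and this is where Assumption \ref{pd} enters through the admissibility characterization. Adding the two bounds proves the claim.

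There is no substantial obstacle, but one point needs care: the estimator $\hat C_{q,m}$ minimizes the Frobenius criterion, not the spectral norm. One cannot simply argue that $\hat\delta_{q+1,m}(X)\le\|X-C_0\|=\|E\|$ by optimality, since that comparison holds only in Frobenius norm. For this reason the argument goes through Lemma \ref{small_0_als}, which transfers Frobenius optimality into a Frobenius bound on $\hat C_{q,m}-C_0$. Passing through that Frobenius distance loses only the constant $2\sqrt{qm+r_0}$, which is harmless because $q_{\max}$ and $m_{\max}$ are fixed.
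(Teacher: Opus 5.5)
Your proof is correct and follows the paper's own argument: bound $\hat\delta_{q+1,m}(X)\le\|E\|+\|\hat C_{q,m}-C_0\|_F$ by the triangle inequality, control the first term with Assumption \ref{error_norm}, and control the second with Lemma \ref{small_0_als}. One small correction: feasibility of $C_0$ for $(q,m)\in S_1$ comes from the purely algebraic sufficiency part of Proposition \ref{alt_spec}, and Lemma \ref{small_0_als} assumes only Assumption \ref{error_norm}. So Assumption \ref{pd} does not actually enter at the step where you attribute it.
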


\begin{proof}
Let
\[
C_0=\sum_{k=0}^{m_0-1}F_{-k}\lambda_k'.
\]
For every $(q,m)\in S_1$,
\begin{align*}
\hat\delta_{q+1,m}(X)
&=
\|X-\hat C_{q,m}\|\\
&\leq
\|E\|+\|\hat C_{q,m}-C_0\|\\
&\leq
\|E\|+\|\hat C_{q,m}-C_0\|_F.
\end{align*}
Assumption \ref{error_norm} gives
\[
\frac{\|E\|}{\sqrt{NT}}
=
O_p\left(
\sqrt{\frac{N+T}{NT}}
\right),
\]
while Lemma \ref{small_0_als} gives
\[
\max_{(q,m)\in S_1}
\frac{\|\hat C_{q,m}-C_0\|_F}{\sqrt{NT}}
=
O_p\left(
\sqrt{\frac{N+T}{NT}}
\right).
\]
The result follows.
\end{proof}

\begin{lemma}\label{small_3_als}
Suppose Assumptions \ref{error_norm} and Assumption \ref{lower_bound_E} hold with $q_0m_0+q_{max}m_{max}\leq k_{max}$. Let $(q,m)\in S_1$. Then, for all $\epsilon>0$, there exists $c_0,c_1>0$ such that
\begin{align*}
\mathbb{P}\left\{c_0\leq\frac{1}{\sqrt{N+T}}\hat\delta_{q+1,m}(X)\leq c_1\:\: \text{for all } (q,m)\in S_1\right\}\geq1-\epsilon
\end{align*}
for all $N, T$ large. 
\end{lemma}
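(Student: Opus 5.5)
The proof combines the two preceding lemmas, one for each side of the two-sided bound, and uses the finiteness of $S$ to make the statement uniform. Note that $S_1\subseteq S$ is finite because $q_{\max}$ and $m_{\max}$ are fixed. It therefore suffices to produce, for a given $\epsilon>0$, a lower constant $c_0$ and an upper constant $c_1$. Each should hold simultaneously over $S_1$ on an event of probability at least $1-\epsilon/2$. The two events are then intersected.

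For the lower bound, Lemma \ref{small_1_als} gives, for every $(q,m)\in S$ and deterministically,
\[
\hat\delta_{q+1,m}(X)\geq\sigma_{r_0+qm+1}(E).
\]
Since $r_0+qm\leq q_0m_0+q_{\max}m_{\max}\leq k_{\max}$ and singular values are nonincreasing in their index, we get $\sigma_{r_0+qm+1}(E)\geq\sigma_{k_{\max}+1}(E)=\delta_{k_{\max}+1,1}(E)$. This holds uniformly over $(q,m)\in S_1$. Assumption \ref{lower_bound_E}, applied with $\epsilon/2$, supplies $c_0>0$ such that $\delta_{k_{\max}+1,1}(E)/\sqrt{N+T}\geq c_0$ with probability at least $1-\epsilon/2$ for large $N,T$. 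On that event the lower bound holds for all $(q,m)\in S_1$ at once.

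For the upper bound, Lemma \ref{small_2_als} gives
\[
\max_{(q,m)\in S_1}\hat\delta_{q+1,m}(X)/\sqrt{NT}=O_p\bigl(\sqrt{(N+T)/(NT)}\bigr).
\]
Multiplying through by $\sqrt{NT}/\sqrt{N+T}$ yields
\[
\max_{(q,m)\in S_1}\hat\delta_{q+1,m}(X)/\sqrt{N+T}=O_p(1).
\]
One can also read this directly from the chain $\hat\delta_{q+1,m}(X)\leq\|E\|+\|\hat C_{q,m}-C_0\|_F\leq(1+2\sqrt{q_{\max}m_{\max}+r_0})\|E\|$ from the proof of Lemma \ref{small_0_als}, together with $\|E\|=O_p(\sqrt{\max\{N,T\}})\leq O_p(\sqrt{N+T})$ from Assumption \ref{error_norm}. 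By the definition of $O_p(1)$ there is then $c_1<\infty$ such that the maximum is at most $c_1$ with probability at least $1-\epsilon/2$ for large $N,T$. A union bound finishes the argument.

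The only delicate point is uniformity and the index bookkeeping. The upper bound is already stated as a maximum over $S_1$. For the lower bound, the bound $r_0+qm\leq k_{\max}$ reduces every index to the single singular value $\sigma_{k_{\max}+1}(E)$, so Assumption \ref{lower_bound_E} is used only once. No step is a genuine obstacle. The one thing to check is that Lemma \ref{small_0_als} (via Proposition \ref{alt_spec}) needs $C_0$ to be feasible for every $(q,m)\in S_1$; this holds exactly, not merely asymptotically, because $(q,m)\in S_1$ is admissible.
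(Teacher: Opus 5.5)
Your proposal is correct and matches the paper's proof: the upper bound comes from Lemma \ref{small_2_als} rescaled to the $\sqrt{N+T}$ scale, and the lower bound from Lemma \ref{small_1_als} plus monotonicity of singular values down to $\sigma_{k_{\max}+1}(E)$ and Assumption \ref{lower_bound_E}, each at level $\epsilon/2$ and combined by a union bound. Your remark that the upper bound can be read directly from the chain in Lemma \ref{small_0_als} is also worth keeping, because it uses only Assumption \ref{error_norm} and exact feasibility of $C_0$. It therefore shows that Assumption \ref{pd}, which Lemma \ref{small_2_als} formally lists but this lemma does not assume, is not actually needed.
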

\begin{proof}
By Lemma \ref{small_2_als},
\[
\max_{(q,m)\in S_1}
\frac{\hat\delta_{q+1,m}(X)}{\sqrt{N+T}}
=
O_p(1).
\]
Hence, for every $\epsilon>0$, there exists $c_U>0$ such that
\[
\mathbb P\left(
\max_{(q,m)\in S_1}
\frac{\hat\delta_{q+1,m}(X)}{\sqrt{N+T}}
>c_U
\right)
\leq\frac{\epsilon}{2}
\]
for all sufficiently large $N$ and $T$.

Moreover, Lemma \ref{small_1_als} and the ordering of singular values
give, uniformly over $(q,m)\in S_1$,
\[
\hat\delta_{q+1,m}(X)
\geq
\sigma_{r_0+qm+1}(E)
\geq
\sigma_{k_{\max}+1}(E).
\]
Assumption \ref{lower_bound_E} therefore implies that there exists
$c_L>0$ such that
\[
\mathbb P\left(
\min_{(q,m)\in S_1}
\frac{\hat\delta_{q+1,m}(X)}{\sqrt{N+T}}
<c_L
\right)
\leq\frac{\epsilon}{2}
\]
for all sufficiently large $N$ and $T$. Combining the two bounds
proves the result.
\end{proof}

\begin{lemma}\label{large_als}
Suppose Assumption \ref{pd}, \ref{error_norm}, and \ref{lower_bound_common} hold. For every $\epsilon>0$, there exist constants $c_L,c_U>0$ such that
\[
\mathbb P\left(
c_L
\leq
\frac{\hat\delta_{q+1,m}(X)}{\sqrt{NT}}
\leq
c_U
\text{ for every }(q,m)\in S_2\cup S_3
\right)
\geq1-\epsilon
\]
for all sufficiently large $N$ and $T$.
\end{lemma}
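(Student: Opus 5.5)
The plan is to prove the two-sided bound separately for the upper and lower parts, uniformly over the finite set $S_2\cup S_3$, and then combine them with a union bound.

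\emph{Upper bound.} For every $(q,m)\in S$, the zero matrix is feasible (take all loadings equal to zero), so global optimality of $\hat C_{q,m}$ gives $\fnorm{X-\hat C_{q,m}}\leq\fnorm{X}$. Hence
\[
\hat\delta_{q+1,m}(X)\leq\fnorm{X-\hat C_{q,m}}\leq\fnorm{X}\leq\fnorm{G\Gamma'}+\fnorm{E}.
\]
By Assumption \ref{pd}, $\fnorm{G\Gamma'}\leq\fnorm{G}\fnorm{\Gamma}=O_p(\sqrt{NT})$. Since $E$ has rank at most $\min\{N,T\}$,
\[
\fnorm{E}\leq\sqrt{\min\{N,T\}}\,\|E\|=O_p\left(\sqrt{\min\{N,T\}\max\{N,T\}}\right)=O_p(\sqrt{NT})
\]
by Assumption \ref{error_norm}. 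This yields $c_U$ uniformly over all of $S$, and in particular over $S_2\cup S_3$.

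\emph{Lower bound.} Write $X-\hat C_{q,m}=(C_0-\hat C_{q,m})+E$. Since $\hat C_{q,m}\in\mathcal C_{q,m}$, the reverse triangle inequality gives
\[
\hat\delta_{q+1,m}(X)\geq\|C_0-\hat C_{q,m}\|-\|E\|\geq\delta_{q+1,m}(C_0)-\|E\|.
\]
The term $\|E\|/\sqrt{NT}=O_p(\sqrt{(N+T)/(NT)})=o_p(1)$ is negligible, so it suffices to bound $\delta_{q+1,m}(C_0)/\sqrt{NT}$ away from zero for each $(q,m)\in S_2\cup S_3$. The cases split as follows.

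\begin{itemize}
\item Case $(q,m)\in S_3$, or $(q,m)\in S_2$ with $q<q_0$ and $1\leq m\leq m_{\max}$. Assumption \ref{lower_bound_common} applies directly and gives the constant $c_2$.
\item Case $(q,m)\in S_2$ with $q\geq q_0$. Here $qm<r_0$, so \eqref{eq:dynamic_static_singular_bound} gives $\delta_{q+1,m}(C_0)\geq\sigma_{qm+1}(C_0)\geq\sigma_{r_0}(C_0)$. By Assumption \ref{pd}, as computed in the text, $\sigma_{r_0}^2(C_0)/(NT)\geq\lambda_{\min}(G'G/T)\lambda_{\min}(\Gamma'\Gamma/N)$, which is bounded below with probability approaching one.
\item Case $q=0$ or $m=0$. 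Then $\hat C_{q,m}=0$ and $\hat\delta_{q+1,m}(X)=\|X\|\geq\|C_0\|-\|E\|\geq\sigma_{r_0}(C_0)-\|E\|$, which is handled by the same argument.
\end{itemize}

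Since $S$ is finite, I take $c_L$ to be half the minimum of these constants and absorb the $o_p(1)$ error term. A union bound over the finitely many events then gives overall probability at least $1-\epsilon$.

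The main obstacle is the lower bound on $S_3$ and on the underfactored part of $S_2$ with $m>m_0$. There the static-rank argument fails, and the proof has to rely on Assumption \ref{lower_bound_common}. The key step that makes this work is passing from $\hat C_{q,m}$ to the infimum over $\mathcal C_{q,m}$: this is valid because $\hat C_{q,m}$ lies in $\mathcal C_{q,m}$, even though it is only a Frobenius-norm minimizer rather than a spectral-norm one.
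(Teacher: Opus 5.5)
Your proposal is correct and follows essentially the same route as the paper. The upper bound comes from feasibility of the zero matrix together with $\fnorm{X}=O_p(\sqrt{NT})$. The lower bound comes from the reverse triangle inequality combined with $\hat C_{q,m}\in\mathcal C_{q,m}$: the static-rank bound handles $S_2$, and Assumption \ref{lower_bound_common} handles $S_3$.

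The only difference is cosmetic. You route the $q<q_0$ part of $S_2$, and separately the $q=0$ or $m=0$ cases, through other arguments. The paper's bound $\delta_{q+1,m}(C_0)\geq\sigma_{qm+1}(C_0)\geq\sigma_{r_0}(C_0)$ already covers all of $S_2$ at once, so your extra case split is harmless but unnecessary.
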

\begin{proof}
First, because $0$ is feasible in every candidate model and
$\hat C_{q,m}$ is a global least-squares minimizer,
\[
\hat\delta_{q+1,m}(X)
\leq
\|X-\hat C_{q,m}\|_F
\leq
\|X\|_F.
\]
Assumptions \ref{pd} and \ref{error_norm} imply
\[
\frac{\|X\|_F}{\sqrt{NT}}=O_p(1).
\]
Hence,
\[
\max_{(q,m)\in S_2\cup S_3}
\frac{\hat\delta_{q+1,m}(X)}{\sqrt{NT}}
=
O_p(1).
\]

For the lower bound, for every $(q,m)$,
\[
\frac{\hat\delta_{q+1,m}(X)}{\sqrt{NT}}
\geq
\frac{\delta_{q+1,m}(C_0)}{\sqrt{NT}}
-
\frac{\|E\|}{\sqrt{NT}}.
\]
Assumption \ref{error_norm} gives
\[
\frac{\|E\|}{\sqrt{NT}}=o_p(1).
\]

If $(q,m)\in S_2$, then $qm<r_0$, and hence
\[
\delta_{q+1,m}(C_0)
\geq
\sigma_{qm+1}(C_0)
\geq
\sigma_{r_0}(C_0).
\]
Moreover,
\[
\frac{\sigma_{r_0}^2(C_0)}{NT}
\geq
\lambda_{\min}\left(\frac{G'G}{T}\right)
\lambda_{\min}\left(\frac{\Gamma'\Gamma}{N}\right),
\]
which is bounded away from zero in probability by Assumption
\ref{pd}.

If $(q,m)\in S_3$, Assumption \ref{lower_bound_common} implies that
\[
\frac{\delta_{q+1,m}(C_0)}{\sqrt{NT}}
\]
is uniformly bounded away from zero in probability.

Since $S_2\cup S_3$ is finite, the preceding bounds imply that, for
every $\epsilon>0$, there exist constants $c_L,c_U>0$ such that
\[
\mathbb P\left(
c_L
\leq
\frac{\hat\delta_{q+1,m}(X)}{\sqrt{NT}}
\leq
c_U
\text{ for every }(q,m)\in S_2\cup S_3
\right)
\geq1-\epsilon
\]
for all sufficiently large $N$ and $T$.
\end{proof}

\textbf{Proof of Theorem \ref{thm:q_consistency}}
\begin{proof}[Proof of Theorem \ref{thm:q_consistency}]
Fix $m\in\{1,\ldots,m_{\max}\}$ and define
\[
q_m^\star
=
\max\left\{
q_0,
\left\lceil\frac{q_0m_0}{m}\right\rceil
\right\}.
\]
By the definition of $S_1$,
\[
(q,m)\in S_1
\quad\Longleftrightarrow\quad
q\geq q_m^\star.
\]
Thus, for the fixed candidate filter length $m$, structures with
$q<q_m^\star$ belong to $S_2\cup S_3$, whereas structures with
$q\geq q_m^\star$ belong to $S_1$.

Recall that
\[
DR_{q,m}(X)
=
\frac{\hat\delta_{q,m}(X)}
{\hat\delta_{q+1,m}(X)},
\qquad
q=1,\ldots,q_{\max},
\]
where $\hat\delta_{q,m}(X)$ is the residual associated with
$q-1$ fitted dynamic factors.

Because the candidate set is finite, Lemmas \ref{small_3_als} and
\ref{large_als}, together with a union bound, imply that, for every
$\epsilon>0$, there exist constants
\[
0<c_L<c_U<\infty,
\qquad
0<d_L<d_U<\infty,
\]
such that the event
\[
\begin{aligned}
\mathcal E_{NT}
={}&
\left\{\bigcap_{\ell=q_m^\star}^{q_{\max}}
\left\{
c_L
\leq
\frac{\hat\delta_{\ell+1,m}(X)}{\sqrt{N+T}}
\leq
c_U
\right\}\right\}\cap\left\{
\bigcap_{\ell=0}^{q_m^\star-1}
\left\{
d_L
\leq
\frac{\hat\delta_{\ell+1,m}(X)}{\sqrt{NT}}
\leq
d_U
\right\}\right\}
\end{aligned}
\]
satisfies
\[
\mathbb P(\mathcal E_{NT})\geq1-\epsilon
\]
for all sufficiently large $N$ and $T$.

We first consider the ratio at $q=q_m^\star$. Its numerator,
$\hat\delta_{q_m^\star,m}(X)$, corresponds to the structure
$(q_m^\star-1,m)\in S_2\cup S_3$, whereas its denominator,
$\hat\delta_{q_m^\star+1,m}(X)$, corresponds to
$(q_m^\star,m)\in S_1$. Hence, on $\mathcal E_{NT}$,
\begin{align*}
DR_{q_m^\star,m}(X)
&=
\frac{\hat\delta_{q_m^\star,m}(X)}
{\hat\delta_{q_m^\star+1,m}(X)}\\
&=
\sqrt{\frac{NT}{N+T}}\,
\frac{
\hat\delta_{q_m^\star,m}(X)/\sqrt{NT}
}{
\hat\delta_{q_m^\star+1,m}(X)/\sqrt{N+T}
}\\
&\geq
\frac{d_L}{c_U}
\sqrt{\frac{NT}{N+T}}.
\end{align*}
Because $N,T\to\infty$,
\[
\frac{NT}{N+T}
\geq
\frac{1}{2}\min\{N,T\}
\to\infty.
\]
Therefore,
\[
DR_{q_m^\star,m}(X)\to_p\infty.
\]

We next bound the ratios away from $q_m^\star$. If
$1\leq q<q_m^\star$, then both $(q-1,m)$ and $(q,m)$ belong to
$S_2\cup S_3$. Thus, on $\mathcal E_{NT}$,
\[
DR_{q,m}(X)
=
\frac{
\hat\delta_{q,m}(X)/\sqrt{NT}
}{
\hat\delta_{q+1,m}(X)/\sqrt{NT}
}
\leq
\frac{d_U}{d_L}.
\]
If $q_m^\star<q\leq q_{\max}$, then both $(q-1,m)$ and $(q,m)$
belong to $S_1$. Hence, on $\mathcal E_{NT}$,
\[
DR_{q,m}(X)
=
\frac{
\hat\delta_{q,m}(X)/\sqrt{N+T}
}{
\hat\delta_{q+1,m}(X)/\sqrt{N+T}
}
\leq
\frac{c_U}{c_L}.
\]
Consequently, on $\mathcal E_{NT}$,
\[
\max_{\substack{1\leq q\leq q_{\max}\\q\neq q_m^\star}}
DR_{q,m}(X)
\leq
M,
\qquad
M
=
\max\left\{
\frac{d_U}{d_L},
\frac{c_U}{c_L}
\right\}.
\]

Since
\[
\frac{d_L}{c_U}
\sqrt{\frac{NT}{N+T}}
\to\infty,
\]
for all sufficiently large $N$ and $T$,
\[
\frac{d_L}{c_U}
\sqrt{\frac{NT}{N+T}}
>
M.
\]
Therefore, on $\mathcal E_{NT}$ and for all sufficiently large
$N,T$,
\[
DR_{q_m^\star,m}(X)
>
\max_{\substack{1\leq q\leq q_{\max}\\q\neq q_m^\star}}
DR_{q,m}(X).
\]
It follows that the maximizer is uniquely attained at
$q_m^\star$ on $\mathcal E_{NT}$. Hence,
\[
\mathbb P\left\{\hat q=q_m^\star\right\}
\geq
\mathbb P(\mathcal E_{NT})
\geq
1-\epsilon
\]
for all sufficiently large $N$ and $T$. Since $\epsilon>0$ is
arbitrary,
\[
\mathbb P\left\{\hat q=q_m^\star\right\}\to1.
\]

Finally, if $m_0\leq m\leq m_{\max}$, then
\[
\frac{q_0m_0}{m}\leq q_0,
\]
and therefore
\[
q_m^\star=q_0.
\]
Thus,
\[
\mathbb P\{\hat q=q_0\}\to1.
\]

If instead $1\leq m<m_0$, then
\[
\frac{q_0m_0}{m}>q_0,
\]
so that
\[
q_m^\star
=
\left\lceil\frac{q_0m_0}{m}\right\rceil.
\]
Consequently,
\[
\mathbb P\left\{
\hat q
=
\left\lceil\frac{q_0m_0}{m}\right\rceil
\right\}
\to1.
\]
\end{proof}

\textbf{Proof of Theorem \ref{thm:m_consistency}}

\begin{proof}[Proof of Theorem \ref{thm:m_consistency}]
Fix $q\in\{q_0,\ldots,q_{\max}\}$ and let
\[
m_q^\star
=
\left\lceil\frac{q_0m_0}{q}\right\rceil.
\]
Because $q\geq q_0$, the definition of $S_1$ implies
\[
(q,m)\in S_1
\quad\Longleftrightarrow\quad
m\geq m_q^\star.
\]
Hence, $(q,m)\in S_2$ for $m<m_q^\star$ and
$(q,m)\in S_1$ for $m\geq m_q^\star$.

Since the candidate set is finite, Lemmas \ref{small_3_als} and
\ref{large_als} imply that, for every $\epsilon>0$, there exist
constants
\[
0<c_L<c_U<\infty,
\qquad
0<d_L<d_U<\infty,
\]
such that, with probability at least $1-\epsilon$ for all
sufficiently large $N$ and $T$,
\[
c_L
\leq
\frac{\hat\delta_{q+1,m}(X)}{\sqrt{N+T}}
\leq
c_U
\qquad
\text{for all }m=m_q^\star,\ldots,m_{\max},
\]
and
\[
d_L
\leq
\frac{\hat\delta_{q+1,m}(X)}{\sqrt{NT}}
\leq
d_U
\qquad
\text{for all }m=0,\ldots,m_q^\star-1.
\]

At $m=m_q^\star$, the numerator of $MR_{q,m}(X)$ corresponds to
the inadmissible structure $(q,m_q^\star-1)$, whereas the denominator
corresponds to the admissible structure $(q,m_q^\star)$. Therefore,
on the preceding event,
\begin{align*}
MR_{q,m_q^\star}(X)
&=
\frac{\hat\delta_{q+1,m_q^\star-1}(X)}
{\hat\delta_{q+1,m_q^\star}(X)}\\
&=
\sqrt{\frac{NT}{N+T}}\,
\frac{
\hat\delta_{q+1,m_q^\star-1}(X)/\sqrt{NT}
}{
\hat\delta_{q+1,m_q^\star}(X)/\sqrt{N+T}
}\\
&\geq
\frac{d_L}{c_U}
\sqrt{\frac{NT}{N+T}}.
\end{align*}
Since
\[
\frac{NT}{N+T}
\geq
\frac{1}{2}\min\{N,T\}
\to\infty,
\]
we have
\[
MR_{q,m_q^\star}(X)\to_p\infty.
\]

For $m<m_q^\star$, both $(q,m-1)$ and $(q,m)$ belong to $S_2$, so
\[
MR_{q,m}(X)
\leq
\frac{d_U}{d_L}.
\]
For $m>m_q^\star$, both $(q,m-1)$ and $(q,m)$ belong to $S_1$, so
\[
MR_{q,m}(X)
\leq
\frac{c_U}{c_L}.
\]
Thus,
\[
\max_{\substack{1\leq m\leq m_{\max}\\m\neq m_q^\star}}
MR_{q,m}(X)
=
O_p(1),
\]
whereas
\[
MR_{q,m_q^\star}(X)\to_p\infty.
\]
It follows that
\[
\mathbb P\{\hat m=m_q^\star\}\to1.
\]

Finally, when $q=q_0$,
\[
m_q^\star
=
\left\lceil\frac{q_0m_0}{q_0}\right\rceil
=
m_0,
\]
which gives
\[
\mathbb P\{\hat m=m_0\}\to1.
\]
For $q>q_0$, the stated conclusion follows directly from the
definition of $m_q^\star$.
\end{proof}

\subsection{Proof of Theorem \ref{test_IC}}

Let $V(q,m)$ denote the mean squared errors when the structure $(q,m)$ is used to estimate a dynamic factor model, i.e.,
\begin{align*}
V(q,m) 
= \frac{1}{NT}
\fnorm{X - \sum_{k=0}^{m-1} \hat F_{-k}^{(q)} \hat\lambda_k^{(q)\prime}}^2,
\end{align*}
where $(\hat F_{-k}^{(q,m)},\hat\lambda_k^{(q,m)})_{k=0}^{m-1}$ is the least squares estimator obtained under the structure $(q,m)$.

\begin{lemma}[Rates of the minimized mean squared errors]\label{V_rates}
Suppose that $q_0,m_0\geq0$ and that Assumptions \ref{pd},
\ref{error_norm}, and \ref{lower_bound_common} hold. Then
\[
\max_{(q,m)\in S_1}
\left\{
V(q_0,m_0)-V(q,m)
\right\}
=
O_p\left(
\frac{N+T}{NT}
\right).
\]
For every $\epsilon>0$, there exists a constant $c>0$ such that
\[
\mathbb P\left\{
\min_{(q,m)\in S_2\cup S_3}
\left[
V(q,m)-V(q_0,m_0)
\right]
\geq c
\right\}
\geq1-\epsilon
\]
for all sufficiently large $N$ and $T$.
\end{lemma}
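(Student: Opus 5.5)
The plan is to compare every $V(q,m)$ with the infeasible benchmark $\frac{1}{NT}\fnorm{E}^2$ and to reuse the Frobenius-norm arguments from the proof of Lemma \ref{small_0_als}. Write $\Delta_{q,m}=\hat C_{q,m}-C_0$. Then
\[
NT\,V(q,m)=\fnorm{E-\Delta_{q,m}}^2=\fnorm{E}^2-2\operatorname{tr}(E'\Delta_{q,m})+\fnorm{\Delta_{q,m}}^2 .
\]
So $V(q,m)-\frac{1}{NT}\fnorm{E}^2=\frac{1}{NT}\left[\fnorm{\Delta_{q,m}}^2-2\operatorname{tr}(E'\Delta_{q,m})\right]$. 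Both claims reduce to controlling this bracket uniformly over the finite set $S$.

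\textbf{First claim.} For $(q,m)\in S_1$, the argument of Lemma \ref{small_0_als} applies. By spectral–nuclear duality and the rank bound $\operatorname{rank}(\Delta_{q,m})\leq qm+r_0$,
\[
|\operatorname{tr}(E'\Delta_{q,m})|\leq \sqrt{qm+r_0}\,\|E\|\,\fnorm{\Delta_{q,m}} .
\]
Lemma \ref{small_0_als} gives $\fnorm{\Delta_{q,m}}=O_p(\|E\|)$ uniformly over $S_1$. Hence
\[
\left|V(q,m)-\tfrac{1}{NT}\fnorm{E}^2\right|\leq \frac{C}{NT}\|E\|^2=O_p\!\left(\frac{N+T}{NT}\right)
\]
uniformly over $S_1$, by Assumption \ref{error_norm}. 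Since $(q_0,m_0)\in S_1$, this bound holds for $V(q_0,m_0)$ as well. The triangle inequality then gives $\max_{S_1}|V(q_0,m_0)-V(q,m)|=O_p(\frac{N+T}{NT})$, which is stronger than the one-sided claim. The case $q_0=0$ or $m_0=0$ is trivial: then $C_0=0$, $r_0=0$, and the same bound applies.

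\textbf{Second claim.} For $(q,m)\in S_2\cup S_3$, we need a lower bound on $\fnorm{\Delta_{q,m}}$. Since $\hat C_{q,m}\in\mathcal C_{q,m}$,
\[
\fnorm{\Delta_{q,m}}\geq\|C_0-\hat C_{q,m}\|\geq\delta_{q+1,m}(C_0).
\]
For $S_2$, the argument in Lemma \ref{large_als} bounds this below by $\sigma_{r_0}(C_0)\gtrsim\sqrt{NT}$ w.p.a.1, using Assumption \ref{pd}. For $S_3$, Assumption \ref{lower_bound_common} gives the same order directly. So $\fnorm{\Delta_{q,m}}^2/(NT)\geq c_2^2$ with high probability.

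The cross term is then handled as before:
\[
\frac{2}{NT}|\operatorname{tr}(E'\Delta_{q,m})|\leq \frac{2\sqrt{qm+r_0}\,\|E\|}{\sqrt{NT}}\cdot\frac{\fnorm{\Delta_{q,m}}}{\sqrt{NT}} .
\]
The first factor is $o_p(1)$. For the second factor, $\fnorm{\Delta_{q,m}}\leq\fnorm{X-\hat C_{q,m}}+\fnorm{E}\leq\fnorm{X}+\fnorm{E}$, and both terms are $O_p(\sqrt{NT})$. Here $\fnorm{E}\leq\sqrt{\min(N,T)}\|E\|=O_p(\sqrt{NT})$, and $\fnorm{X}$ is bounded the same way as in Lemma \ref{large_als}. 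So the cross term is $o_p(1)$ uniformly.

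Putting these together, $V(q,m)-\frac{1}{NT}\fnorm{E}^2\geq c_2^2-o_p(1)$ uniformly on $S_2\cup S_3$. Subtracting the first-claim bound $V(q_0,m_0)-\frac{1}{NT}\fnorm{E}^2=O_p(\frac{N+T}{NT})=o_p(1)$ gives the result with $c=c_2^2/2$. A union bound over the finite set finishes the argument.

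\textbf{Main obstacle.} The delicate step is the lower bound $\fnorm{\Delta_{q,m}}\gtrsim\sqrt{NT}$ on $S_3$. It relies on Assumption \ref{lower_bound_common} applied at the random fitted matrix $\hat C_{q,m}$. This works because $\delta_{q+1,m}(C_0)$ is an infimum over all of $\mathcal C_{q,m}$, so the bound holds for every element, including $\hat C_{q,m}$. One must check that its $m$-range, $1\leq m\leq m_{\max}$, covers all of $S_3$; the $m=0$ case is trivial since then $\hat C_{q,m}=0$.
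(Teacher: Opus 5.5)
Your proposal is correct and follows essentially the same route as the paper: expand $V(q,m)$ around $\frac{1}{NT}\|E\|_F^2$, control $S_1$ via the rank/duality bound from Lemma \ref{small_0_als}, and bound $S_2\cup S_3$ from below through $\|\hat C_{q,m}-C_0\|\geq\delta_{q+1,m}(C_0)$, using Assumptions \ref{pd} and \ref{lower_bound_common}. The paper's proof invokes Lemma \ref{small_0_als}, which only covers $S_1$, when it calls the cross term $o_p(1)$ on $S_2\cup S_3$; your bound $\|\hat C_{q,m}-C_0\|_F\leq\|X\|_F+\|E\|_F=O_p(\sqrt{NT})$ supplies the estimate that step actually needs.
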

\begin{proof}
For any given $(q,m)$, we have
    \begin{align*}
        V(q,m) &= \frac{1}{NT}\fnorm{\hat C_{q,m}-C_0}^2 + \frac{1}{NT}\fnorm{E}^2- \frac{2}{NT}\text{tr}\left(\left(\hat C_{q,m}-C_0\right)'E\right),
    \end{align*}
    so that
    \begin{equation}
        \begin{split}\label{V_diff}
            V(q_0,m_0)-V(q,m) &= \frac{1}{NT}\fnorm{\hat C_{q_0,m_0}-C_0}^2-\frac{1}{NT}\fnorm{\hat C_{q,m}-C_0}^2\\
        &-\frac{2}{NT}\text{tr}\left(\left(\hat C_{q_0,m_0}-C_0\right)'E\right)+\frac{2}{NT}\text{tr}\left(\left(\hat C_{q,m}-C_0\right)'E\right).
        \end{split}
    \end{equation}
    Also, we have
    \begin{align}\label{cross_diff}
        \frac{1}{NT}\text{tr}\left(\left(\hat C_{q,m}-C_0\right)'E\right)\leq \frac{\sqrt{mq+m_0q_0}}{NT}\fnorm{\hat C_{q,m}-C_0}\|E\|.
    \end{align}
    If $(q,m)\in S_1$, the RHS of equation \eqref{V_diff} is $O_p\left(\frac{N+T}{NT}\right)$ due to Lemma \ref{small_0_als}, equation \eqref{cross_diff}, and Assumption \ref{error_norm}.
    
    Now, if $(q,m)\in S_2\cup S_3$, we have $\frac{1}{NT}\fnorm{\hat C_{q,m}-C_0}^2$ strictly positive in the limit. To see this,
\begin{align*}
    \frac{1}{\sqrt{NT}}\fnorm{\hat C_{q,m}-C_0} &\geq \frac{1}{\sqrt{NT}}\left\|\hat C_{q,m}-C_0\right\|\geq\frac{1}{\sqrt{NT}}\delta_{q+1,m}\left(\sum_{k=0}^{m_0-1}F_{-k}\lambda_{-k}'\right),
\end{align*}
which is bounded away from zero in probability by Lemma \ref{large_als}. Thus, from \eqref{V_diff}, \eqref{cross_diff} and Lemma \ref{small_0_als}, we have
\begin{align*}
    V(q,m)-V(q_0,m_0) &= \frac{1}{NT}\fnorm{\hat C_{q,m}-C_0}^2+o_p(1)
\end{align*}
bounded away from zero for all $(q,m)\in S_2\cup S_3$.
\end{proof}

\textbf{Proof of Theorem \ref{test_IC}}

\begin{proof}[Proof of Theorem \ref{test_IC}]
    We show that 
    \[
    \mathbb{P}\{PC(q,m) < PC(q_0,m_0)\}\to 0
    \]
    as $N,T\to\infty$ for all $(q,m)\in S\setminus\{(q_0,m_0)\}$. 

    We first consider the case $(q,m)\in S_1\setminus\{(q_0,m_0)\}$. Using Lemma \ref{V_rates}, we have
    \[
    \max_{(q,m)\in S_1}
    \left\{
    V(q_0,m_0)-V(q,m)
    \right\}
    =
    O_p\left(
    \frac{N+T}{NT}
    \right).
    \]
    Note that we may write
    \[
    \mathbb{P}\{PC(q,m)<PC(q_0,m_0)\}=\mathbb{P}\{V(q_0,m_0)-V(q,m) >(r+q-(r_0+q_0))g(N,T)\},
    \]
    where the $V(q_0,m_0)-V(q,m)$ vanishes at the order $O_p\left(\frac{N+T}{NT}\right)$ and the $g(N,T)$ vanishes at a slower rate. Note that any $(q,m)\in S_1\setminus \{(q_0,m_0)\}$ implies $r+q>r_0+q_0$ so that $\mathbb{P}\{PC(q,m)<PC(q_0,m_0)\}\to 0$.

    Next, we consider the case $(q,m)\in S_2\cup S_3$. Using Lemma \ref{V_rates}, for every $\epsilon>0$, there exists a constant $c>0$ such that
    \[
    \mathbb P\left\{
    \min_{(q,m)\in S_2\cup S_3}
    \left[
    V(q,m)-V(q_0,m_0)
    \right]
    \geq c
    \right\}
    \geq1-\epsilon
    \]
    for all sufficiently large $N$ and $T$. Note that we have
    \[
    PC(q,m)-PC(q_0,m_0) = V(q,m)-V(q_0,m_0) -(r_0+q_0-(r+q))g(N,T),
    \]
    where the first term $V(q,m)-V(q_0,m_0)$ is bounded away from zero, whereas the second term $(r_0+q_0-(r+q))g(N,T)$ vanishes. Thus, we have $\mathbb{P}\{PC(q,m)<PC(q_0,m_0)\}\to 0$.

    For $DC(q,m)$, note that we have
    \[
    \frac{1}{NT}\hat\delta^2_{q+1,m}(X) = O_p\left(\frac{N+T}{NT}\right)
    \]
    when $(q,m)\in S_1$ due to Lemma \ref{small_2_als}. Also, for all $\epsilon>0$, there exists $c>0$ such that
    \begin{align*}
    \mathbb{P}\left\{\frac{1}{NT}\hat\delta^2_{q+1,m}(X) \geq c\right\}\geq1-\epsilon
    \end{align*}
    when $(q,m)\in S_2\cup S_3$ due to Lemma \ref{large_als}. Since $\frac{1}{NT}\hat\delta^2_{q_0+1,m_0}(X)=O_p\left(\frac{N+T}{NT}\right)$, we may use the same argument as in the case of $PC(q,m)$ to finish.
\end{proof}

\begin{proof}[Proof of Corollary \ref{IC}]
Let
\[
a_{NT}=\frac{N+T}{NT},
\qquad
K(q,m)=qm+q.
\]
By Lemma \ref{V_rates},
\[
\max_{(q,m)\in S_1}
\left|
V(q,m)-\frac{\|E\|_F^2}{NT}
\right|
=
O_p(a_{NT}).
\]
Assumption \ref{average_error_variance} therefore implies
\[
\max_{(q,m)\in S_1}
|V(q,m)-\sigma_\varepsilon^2|
\to_p0.
\]
The mean-value theorem then gives
\[
\max_{(q,m)\in S_1}
\left|
\log V(q,m)-\log V(q_0,m_0)
\right|
=
O_p(a_{NT}).
\]

For every
$(q,m)\in S_1\setminus\{(q_0,m_0)\}$,
\[
K(q,m)-K(q_0,m_0)>0.
\]
Since $g(N,T)/a_{NT}\to\infty$,
\[
\mathbb P\left\{
\min_{(q,m)\in
S_1\setminus\{(q_0,m_0)\}}
[
IC(q,m)-IC(q_0,m_0)
]>0
\right\}
\to1.
\]

For $(q,m)\in S_2\cup S_3$, Lemma \ref{V_rates} gives a constant
$c>0$ such that
\[
\mathbb P\left\{
\min_{(q,m)\in S_2\cup S_3}
[V(q,m)-V(q_0,m_0)]
\ge c
\right\}
\to1.
\]
Because $V(q_0,m_0)\to_p\sigma_\varepsilon^2>0$,
\[
\min_{(q,m)\in S_2\cup S_3}
[
\log V(q,m)-\log V(q_0,m_0)
]
\]
is bounded away from zero in probability. Since $g(N,T)\to0$ and
the candidate set is finite,
\[
\mathbb P\left\{
\min_{(q,m)\in S_2\cup S_3}
[
IC(q,m)-IC(q_0,m_0)
]>0
\right\}
\to1.
\]
Combining the two cases proves
\[
\mathbb P\{(\widehat q,\widehat m)=(q_0,m_0)\}\to1.
\]
\end{proof}

\clearpage
\appendix
\setcounter{page}{1}

\section{Supplemental Materials for "Determining the Structure of Dynamic Factor Models"}
\vspace{0.3cm}
by Sangmyung Ha
\vspace{0.3cm}
\subsection{Proof of Proposition \ref{alt_spec} and Corollary \ref{cor:lower_bounds}}

\begin{proof}[Proof of Proposition \ref{alt_spec}]
Let
\[
r_0=q_0m_0,\qquad r=qm.
\]
We first prove that every pair $(q,m)$ satisfying
\[
q\geq q_0,\qquad qm\geq q_0m_0
\]
admits an exact, possibly redundant, representation of the common component.

\emph{Sufficiency.} Suppose first that $m\geq m_0$. Let $p=q-q_0$ and define
\[
h_t=\begin{pmatrix}f_t\\0_{p\times1}\end{pmatrix},\qquad \widetilde\Gamma_k=\begin{pmatrix}\lambda_k&0_{N\times p}\end{pmatrix}\quad\text{for }k=0,\ldots,m_0-1,
\]
and set $\widetilde\Gamma_k=0$ for $k=m_0,\ldots,m-1$. Then
\[
\sum_{k=0}^{m-1}\widetilde\Gamma_kh_{t-k}=\sum_{k=0}^{m_0-1}\lambda_kf_{t-k}=\chi_t.
\]
Now suppose that $1\leq m<m_0$. Define
\[
p=q-q_0,\qquad d=m_0-m.
\]
Since $qm\geq q_0m_0$, we have
\[
pm=(q-q_0)m=qm-q_0m\geq q_0m_0-q_0m=q_0d.
\]
In particular, $p\geq1$.

Let $\Pi\in\mathbb R^{q_0\times q_0}$ be the cyclic permutation matrix satisfying $\Pi e_a=e_{a+1}$ for $a=1,\ldots,q_0$, where $e_a\in\mathbb{R}^{q_0},a=1,\ldots,q_0$ are $q_0$-dimensional elementary basis and $e_{q_0+1}=e_1$. Define
\[
T=
\begin{pmatrix}
0&0&\cdots&0&\Pi\\
I_{q_0}&0&\cdots&0&0\\
0&I_{q_0}&\ddots&0&0\\
\vdots&\ddots&\ddots&\vdots&\vdots\\
0&0&\cdots&I_{q_0}&0
\end{pmatrix}
\in\mathbb R^{q_0d\times q_0d},
\]
with the convention that $T=\Pi$ when $d=1$. The matrix $T$ is a permutation matrix consisting of a single cycle of length $q_0d$. Indeed, if its canonical basis vectors are indexed by $(a,j)$, where $a=1,\ldots,q_0$ and $j=1,\ldots,d$, then
\[
Te_{a,j}=e_{a,j+1}\quad\text{for }j<d,\qquad Te_{a,d}=e_{a+1,1},
\]
where the first index is interpreted cyclically.

Let
\[
s=\left\lceil\frac{q_0d}{m}\right\rceil.
\]
Because $pm\geq q_0d$, we have $s\leq p$. Fix a canonical row vector $e'\in\mathbb R^{1\times q_0d}$ and define $C\in\mathbb R^{p\times q_0d}$ so that its first $s$ rows are
\[
e',\quad e'T^m,\quad e'T^{2m},\quad\ldots,\quad e'T^{(s-1)m},
\]
and any remaining rows are zero. Consider
\[
\mathcal O_m=
\begin{pmatrix}
C\\
CT\\
\vdots\\
CT^{m-1}
\end{pmatrix}
\in\mathbb R^{pm\times q_0d}.
\]
For $j=0,\ldots,s-1$ and $k=0,\ldots,m-1$, the corresponding row of $\mathcal O_m$ is $e'T^{jm+k}$. Therefore, $\mathcal O_m$ contains the rows
\[
e',\quad e'T,\quad\ldots,\quad e'T^{sm-1}.
\]
Since $sm\geq q_0d$, it contains $e',e'T,\ldots,e'T^{q_0d-1}$. Because $T$ is a single-cycle permutation matrix, these are the $q_0d$ distinct canonical row vectors in some order. Hence
\[
\operatorname{rank}(\mathcal O_m)=q_0d.
\]
Partition $C$ into $d$ blocks of $q_0$ columns:
\[
C=\begin{pmatrix}B_1&B_2&\cdots&B_d\end{pmatrix},\qquad B_j\in\mathbb R^{p\times q_0}.
\]
Define the alternative factor process by
\[
h_t=
\begin{pmatrix}
f_t-\Pi f_{t-d}\\[1mm]
\displaystyle\sum_{j=1}^dB_jf_{t-j}
\end{pmatrix}
\in\mathbb R^{q_0+p}=\mathbb R^q.
\]
Let
\[
g_t=\begin{pmatrix}f_t'&f_{t-1}'&\cdots&f_{t-m_0+1}'\end{pmatrix}'\in\mathbb R^{r_0}
\]
and
\[
\tilde g_t=\begin{pmatrix}h_t'&h_{t-1}'&\cdots&h_{t-m+1}'\end{pmatrix}'\in\mathbb R^r.
\]
By construction, there exists a matrix $\mathcal H\in\mathbb R^{r\times r_0}$ such that
\[
\tilde g_t=\mathcal Hg_t.
\]
We show that $\mathcal H$ has full column rank.

Let
\[
y=\begin{pmatrix}y_0'&y_1'&\cdots&y_{m_0-1}'\end{pmatrix}',\qquad y_j\in\mathbb R^{q_0},
\]
and suppose that $\mathcal Hy=0$. For each $k=0,\ldots,m-1$, the first $q_0$ coordinates of the corresponding equation imply
\[
y_k-\Pi y_{k+d}=0,
\]
and the remaining $p$ coordinates imply
\[
\sum_{j=1}^dB_jy_{k+j}=0.
\]
For $k=0,\ldots,m$, define
\[
z_k=\begin{pmatrix}y_k'&y_{k+1}'&\cdots&y_{k+d-1}'\end{pmatrix}'\in\mathbb R^{q_0d}.
\]
The first set of equations gives
\[
z_k=Tz_{k+1},\qquad k=0,\ldots,m-1,
\]
while the second set gives
\[
Cz_{k+1}=0,\qquad k=0,\ldots,m-1.
\]
Iterating the recursion yields
\[
z_{k+1}=T^{m-k-1}z_m.
\]
It follows that
\[
Cz_m=0,\qquad CTz_m=0,\qquad\ldots,\qquad CT^{m-1}z_m=0,
\]
or equivalently,
\[
\mathcal O_mz_m=0.
\]
Since $\mathcal O_m$ has full column rank, $z_m=0$. The recursion $z_k=Tz_{k+1}$ then gives $z_{m-1}=\cdots=z_0=0$, and therefore
\[
y_0=y_1=\cdots=y_{m_0-1}=0.
\]
Thus,
\[
\ker(\mathcal H)=\{0\},\qquad \operatorname{rank}(\mathcal H)=r_0.
\]
Since $\mathcal H$ has full column rank, it has a left inverse. For example, define
\[
\mathcal H^\ell=(\mathcal H'\mathcal H)^{-1}\mathcal H',\qquad \mathcal H^\ell\mathcal H=I_{r_0}.
\]
Let
\[
\Gamma=\begin{pmatrix}\lambda_0&\lambda_1&\cdots&\lambda_{m_0-1}\end{pmatrix}\in\mathbb R^{N\times r_0}
\]
be the true stacked loading matrix, and define
\[
\widetilde\Gamma=\Gamma\mathcal H^\ell\in\mathbb R^{N\times r}.
\]
Partition $\widetilde\Gamma$ into $m$ blocks of $q$ columns:
\[
\widetilde\Gamma=\begin{pmatrix}\widetilde\Gamma_0&\widetilde\Gamma_1&\cdots&\widetilde\Gamma_{m-1}\end{pmatrix}.
\]
Then
\[
\sum_{k=0}^{m-1}\widetilde\Gamma_kg_{t-k}=\widetilde\Gamma\tilde g_t=\Gamma\mathcal H^\ell\mathcal Hg_t=\Gamma g_t=\sum_{j=0}^{m_0-1}\lambda_jf_{t-j}=\chi_t.
\]
This proves sufficiency.

\emph{Necessity.} By definition, $q_0$ is the smallest dynamic factor dimension among all exact representations of the common component. Therefore, every exact representation must satisfy $q\geq q_0.$ If a structure $(q,m)$ admits an exact representation of the common component, its static representation $(qm,1)$ is also an exact representation. By Assumption \ref{pd}, we have $qm\geq q_0m_0$ with probability approaching one. 
\end{proof}

\begin{proof}[Proof of Corollary \ref{cor:lower_bounds}]
By Proposition \ref{alt_spec}, admissible structures are
asymptotically characterized by
\[
q\geq q_0
\qquad\text{and}\qquad
qm\geq r_0.
\]
For fixed $m\geq1$, these conditions are equivalent to
\[
q\geq
\max\left\{
q_0,
\left\lceil\frac{r_0}{m}\right\rceil
\right\}.
\]
If $m<m_0$, then $r_0/m>q_0$, whereas if $m\geq m_0$, then
$\lceil r_0/m\rceil\leq q_0$, which gives the stated expression for
$q_{\min}(m)$. For fixed $q\geq q_0$, the condition $qm\geq r_0$
is equivalent to
\[
m\geq\left\lceil\frac{r_0}{q}\right\rceil.
\]
The sufficiency and asymptotic sharpness of these bounds follow
directly from Proposition \ref{alt_spec}.
\end{proof}

\subsection{Proof of Proposition \ref{consistency_2}}
\textbf{Proof of Proposition \ref{consistency_2}}
\begin{proof}
Let
\[
r=mq,
\qquad
a_{NT}
=
\frac{1}{\sqrt N}+\frac{1}{\sqrt T},
\]
and define the true and estimated common-component matrices by
\[
C=G\Gamma',
\qquad
\hat C=\hat G\hat\Gamma'.
\]

We divide the proof into four steps.

\medskip
\noindent
\emph{Step 1: Static rotated consistency.}

Since the true structure is used in estimation, $C$ is feasible in
the least-squares problem defining $\hat C$. Hence,
\[
\|X-\hat C\|_F^2
\leq
\|X-C\|_F^2
=
\|E\|_F^2.
\]
Let
\[
\Delta_C=\hat C-C.
\]
Because $X-\hat C=E-\Delta_C$, the preceding inequality implies
\[
\|\Delta_C\|_F^2
\leq
2
\left|
\operatorname{tr}
\left(
E'\Delta_C
\right)
\right|.
\]
Moreover,
\[
\operatorname{rank}(\Delta_C)
\leq
2r.
\]
By spectral--nuclear norm duality,
\[
\left|
\operatorname{tr}
\left(
E'\Delta_C
\right)
\right|
\leq
\|E\|
\|\Delta_C\|_*
\leq
\sqrt{2r}\,
\|E\|
\|\Delta_C\|_F.
\]
Therefore,
\[
\|\Delta_C\|_F
\leq
2\sqrt{2r}\,\|E\|
=
O_p\left(\sqrt{\max\{N,T\}}\right),
\]
and consequently
\[
\frac{\|\hat C-C\|}{\sqrt{NT}}
=
o_p(1).
\]

Assumption \ref{pd} implies that the $r$ nonzero singular values of
$C$ are of order $\sqrt{NT}$. More precisely,
\begin{align*}
\frac{\sigma_r^2(C)}{NT}
&=
\lambda_{\min}
\left[
\left(
\frac{\Gamma'\Gamma}{N}
\right)^{1/2}
\left(
\frac{G'G}{T}
\right)
\left(
\frac{\Gamma'\Gamma}{N}
\right)^{1/2}
\right]\\
&\geq
\lambda_{\min}
\left(
\frac{G'G}{T}
\right)
\lambda_{\min}
\left(
\frac{\Gamma'\Gamma}{N}
\right).
\end{align*}
Thus, for some $c>0$,
\[
\mathbb P\left\{
\sigma_r(C)\geq c\sqrt{NT}
\right\}
\to1.
\]
Weyl's inequality and
$\|\hat C-C\|=o_p(\sqrt{NT})$ then give
\[
\sigma_r(\hat C)\geq\frac{c}{2}\sqrt{NT}
\]
with probability approaching one.

The loading normalization implies
\[
\|\hat\Gamma\|_F^2
=
\sum_{k=0}^{m-1}
\|\hat\lambda_k\|_F^2
=
Nq,
\]
and hence
\[
\|\hat\Gamma\|
\leq
\sqrt{Nq}.
\]
Since
\[
\sigma_r(\hat C)
\leq
\sigma_r(\hat G)
\|\hat\Gamma\|,
\]
it follows that
\[
\sigma_r(\hat G)
\geq
c_G\sqrt T
\]
with probability approaching one for some $c_G>0$. In particular,
$\hat G$ has full column rank with probability approaching one.

Because the true and estimated static factor spaces have the same
dimension, Proposition \ref{consistency_1} implies
\[
\|P_{\hat G}-P_G\|_F
=
O_p(a_{NT}).
\]
Indeed,
\begin{align*}
\|P_{\hat G}-P_G\|_F^2
&=
2\operatorname{tr}(M_{\hat G}P_G)\\
&=
2\operatorname{tr}
\left[
\left(
\frac{G'G}{T}
\right)^{-1}
\left(
\frac{G'M_{\hat G}G}{T}
\right)
\right]\\
&\leq
2
\left\|
\left(
\frac{G'G}{T}
\right)^{-1}
\right\|
\frac{\|M_{\hat G}G\|_F^2}{T}\\
&=
O_p(a_{NT}^2).
\end{align*}

Define
\[
R_{NT}
=
(G'G)^{-1}G'\hat G
\]
and
\[
D_{NT}
=
\hat G-GR_{NT}
=
M_G\hat G.
\]
Since $P_{\hat G}\hat G=\hat G$,
\[
D_{NT}
=
(P_{\hat G}-P_G)\hat G.
\]
The compactness condition on $\hat G$ therefore gives
\[
\frac{1}{\sqrt T}
\|D_{NT}\|_F
=
O_p(a_{NT}).
\]
Assumption \ref{pd} and the compactness condition also imply
\[
\|R_{NT}\|=O_p(1).
\]

Furthermore,
\[
\hat G
=
GR_{NT}+D_{NT},
\qquad
G'D_{NT}=0.
\]
Since $\sigma_r(\hat G)/\sqrt T$ is bounded away from zero and
$\|D_{NT}\|/\sqrt T=o_p(1)$,
\[
\frac{\sigma_r(GR_{NT})}{\sqrt T}
\]
is bounded away from zero in probability. Because
$\|G\|/\sqrt T=O_p(1)$, it follows that
\[
\|R_{NT}^{-1}\|=O_p(1).
\]

We next derive the corresponding loading rate. The least-squares
first-order condition gives
\[
\hat\Gamma
=
X'\hat G
(\hat G'\hat G)^{-1}.
\]
Write
\[
A_{NT}
=
R_{NT}'G'GR_{NT},
\qquad
B_{NT}
=
D_{NT}'D_{NT}.
\]
Since $G'D_{NT}=0$,
\[
\hat G'\hat G=A_{NT}+B_{NT}
\]
and
\[
G'\hat G=G'GR_{NT}.
\]
Therefore,
\begin{align*}
\hat\Gamma
&=
\Gamma G'GR_{NT}
(A_{NT}+B_{NT})^{-1}
+
E'\hat G
(\hat G'\hat G)^{-1}\\
&=
\Gamma R_{NT}^{-T}
-
\Gamma R_{NT}^{-T}
B_{NT}
(A_{NT}+B_{NT})^{-1}\\
&\quad+
E'\hat G
(\hat G'\hat G)^{-1}.
\end{align*}
Because
\[
\frac{\|D_{NT}\|_F}{\sqrt T}
=
O_p(a_{NT}),
\]
we have
\[
\left\|
\frac{B_{NT}}{T}
\right\|
=
O_p(a_{NT}^2).
\]
Moreover,
\[
\left\|
T(A_{NT}+B_{NT})^{-1}
\right\|
=
O_p(1).
\]
It follows that
\[
\frac{1}{\sqrt N}
\left\|
\Gamma R_{NT}^{-T}
B_{NT}
(A_{NT}+B_{NT})^{-1}
\right\|_F
=
O_p(a_{NT}^2).
\]
For the error term,
\begin{align*}
\frac{1}{\sqrt N}
\left\|
E'\hat G
(\hat G'\hat G)^{-1}
\right\|_F
&\leq
\frac{\|E\|}{\sqrt N}
\left\|
\hat G
(\hat G'\hat G)^{-1}
\right\|_F\\
&=
O_p(a_{NT}),
\end{align*}
because
\[
\left\|
\hat G
(\hat G'\hat G)^{-1}
\right\|_F
=
O_p(T^{-1/2}).
\]
Consequently,
\begin{equation}
\label{eq:static_factor_loading_rotation}
\frac{1}{\sqrt T}
\|\hat G-GR_{NT}\|_F
=
O_p(a_{NT}),
\qquad
\frac{1}{\sqrt N}
\|\hat\Gamma-\Gamma R_{NT}^{-T}\|_F
=
O_p(a_{NT}),
\end{equation}
with
\[
\|R_{NT}\|+\|R_{NT}^{-1}\|=O_p(1).
\]

\medskip
\noindent
\emph{Step 2: The static rotation preserves the lag structure.}

Partition $R_{NT}$ into $q\times q$ blocks:
\[
R_{NT}
=
\begin{pmatrix}
R_{11}&\cdots&R_{1m}\\
\vdots&\ddots&\vdots\\
R_{m1}&\cdots&R_{mm}
\end{pmatrix}.
\]
Also partition
\[
D_{NT}
=
\begin{pmatrix}
D_1&\cdots&D_m
\end{pmatrix},
\qquad
D_b\in\mathbb R^{T\times q}.
\]
Then, for $b=1,\ldots,m$,
\[
\hat F_{-(b-1)}
=
\sum_{a=1}^m
F_{-(a-1)}R_{ab}
+
D_b.
\]

For $b=1,\ldots,m-1$, the estimated lag blocks satisfy the exact
shift identity
\[
J_0\hat F_{-(b-1)}
=
J_1\hat F_{-b}.
\]
Using
\[
J_0F_{-(a-1)}
=
J_1F_{-a},
\]
we obtain
\[
\overline G^{(m+1)}K_b
=
J_0D_b-J_1D_{b+1},
\]
where
\[
K_b
=
\begin{pmatrix}
R_{1,b+1}\\
R_{2,b+1}-R_{1b}\\
\vdots\\
R_{m,b+1}-R_{m-1,b}\\
-R_{mb}
\end{pmatrix}
\in\mathbb R^{q(m+1)\times q}.
\]
Equation \eqref{eq:static_factor_loading_rotation} implies
\[
\frac{1}{\sqrt T}
\|J_0D_b-J_1D_{b+1}\|_F
=
O_p(a_{NT}).
\]
Assumption \ref{pd_extra} therefore gives
\[
\|K_b\|_F
=
O_p(a_{NT}),
\qquad
b=1,\ldots,m-1.
\]

Hence,
\[
R_{1,b+1}
=
O_p(a_{NT}),
\qquad
R_{m,b}
=
O_p(a_{NT}),
\]
and
\[
R_{a,b+1}-R_{a-1,b}
=
O_p(a_{NT}),
\qquad
a=2,\ldots,m.
\]
Iterating these relations along the block diagonals yields
\[
R_{ab}
=
O_p(a_{NT})
\qquad
\text{for }a\neq b,
\]
and
\[
R_{aa}-R_{bb}
=
O_p(a_{NT})
\qquad
\text{for all }a,b.
\]
Let
\[
H_{NT}'=R_{11}.
\]
Since $m$ is fixed,
\begin{equation}
\label{eq:block_rotation}
\left\|
R_{NT}
-
I_m\otimes H_{NT}'
\right\|_F
=
O_p(a_{NT}).
\end{equation}
The boundedness of $R_{NT}$ and $R_{NT}^{-1}$, together with
\eqref{eq:block_rotation}, implies
\[
\|H_{NT}\|+\|H_{NT}^{-1}\|
=
O_p(1).
\]
The inverse perturbation identity then gives
\begin{equation}
\label{eq:block_inverse_rotation}
\left\|
R_{NT}^{-T}
-
I_m\otimes H_{NT}^{-1}
\right\|_F
=
O_p(a_{NT}).
\end{equation}

\medskip
\noindent
\emph{Step 3: Blockwise factor and loading consistency.}

Combining \eqref{eq:static_factor_loading_rotation} and
\eqref{eq:block_rotation},
\begin{align*}
\frac{1}{\sqrt T}
\left\|
\hat G
-
G(I_m\otimes H_{NT}')
\right\|_F
&\leq
\frac{1}{\sqrt T}
\|\hat G-GR_{NT}\|_F\\
&\quad+
\frac{\|G\|_F}{\sqrt T}
\left\|
R_{NT}-I_m\otimes H_{NT}'
\right\|\\
&=
O_p(a_{NT}).
\end{align*}
The first $q$-column block therefore satisfies
\begin{equation}
\label{eq:F_HNT_rate}
\frac{1}{\sqrt T}
\|\hat F-FH_{NT}'\|_F
=
O_p(a_{NT}).
\end{equation}

Similarly, equations
\eqref{eq:static_factor_loading_rotation} and
\eqref{eq:block_inverse_rotation} imply
\begin{equation}
\label{eq:Gamma_HNT_rate}
\frac{1}{\sqrt N}
\left\|
\hat\Gamma
-
\Gamma
(I_m\otimes H_{NT}^{-1})
\right\|_F
=
O_p(a_{NT}).
\end{equation}
Hence, for each $k=0,\ldots,m-1$,
\begin{equation}
\label{eq:lambda_HNT_rate}
\frac{1}{\sqrt N}
\left\|
\hat\lambda_k
-
\lambda_kH_{NT}^{-1}
\right\|_F
=
O_p(a_{NT}).
\end{equation}

\medskip
\noindent
\emph{Step 4: Identification of the explicit rotation \(H_f\).}

Using \eqref{eq:lambda_HNT_rate} and the true loading
normalization,
\begin{align*}
\frac{1}{N}
\sum_{k=0}^{m-1}
\hat\lambda_k'
\hat\lambda_k
&=
H_{NT}^{-T}
\left(
\frac{1}{N}
\sum_{k=0}^{m-1}
\lambda_k'\lambda_k
\right)
H_{NT}^{-1}
+
O_p(a_{NT})\\
&=
H_{NT}^{-T}H_{NT}^{-1}
+
O_p(a_{NT}).
\end{align*}
The estimated loading normalization therefore implies
\[
H_{NT}^{-T}H_{NT}^{-1}
=
I_q+O_p(a_{NT}),
\]
and hence
\[
H_{NT}H_{NT}'
=
I_q+O_p(a_{NT}).
\]
It follows that
\begin{equation}
\label{eq:HNT_orthogonal}
H_{NT}^{-1}
=
H_{NT}'
+
O_p(a_{NT}).
\end{equation}

Because
\[
\sum_{k=0}^{m-1}
\hat\lambda_k'
\hat\lambda_k
=
NI_q,
\]
the definition of $H_f$ becomes
\[
H_f'
=
\frac{1}{N}
\sum_{k=0}^{m-1}
\lambda_k'\hat\lambda_k.
\]
Using \eqref{eq:lambda_HNT_rate},
\begin{align*}
H_f'
&=
\left(
\frac{1}{N}
\sum_{k=0}^{m-1}
\lambda_k'\lambda_k
\right)
H_{NT}^{-1}
+
O_p(a_{NT})\\
&=
H_{NT}^{-1}
+
O_p(a_{NT})\\
&=
H_{NT}'
+
O_p(a_{NT}),
\end{align*}
where the last equality follows from
\eqref{eq:HNT_orthogonal}. Thus,
\[
\|H_f-H_{NT}\|
=
O_p(a_{NT}),
\]
and
\[
\|H_f\|+\|H_f^{-1}\|
=
O_p(1).
\]

Finally, equations \eqref{eq:F_HNT_rate} and
\eqref{eq:lambda_HNT_rate}, together with the preceding rotation
bound, give
\[
\frac{1}{\sqrt T}
\|\hat F-FH_f'\|_F
=
O_p(a_{NT})
\]
and, for each $k=0,\ldots,m-1$,
\[
\frac{1}{\sqrt N}
\|\hat\lambda_k-\lambda_kH_f^{-1}\|_F
=
O_p(a_{NT}).
\]
The stacked loading result follows by summing over the fixed number
$m$ of lag blocks.
\end{proof}

\subsection{Proof of Theorem \ref{test_var}}
\textbf{Proof of Theorem \ref{test_var}}
\begin{proof}[Proof of Theorem \ref{test_var}]
Let
\[
n=T-p_{\max},
\qquad
a_{NT}
=
\frac{N+T}{NT}
=
\frac{1}{N}+\frac{1}{T},
\]
Define the rotated true-factor matrices
\[
\widetilde Y=YH_f',
\qquad
\widetilde Z_\ell
=
Z_\ell(I_\ell\otimes H_f'),
\qquad
\ell\geq1.
\]
Because $H_f$ is nonsingular,
\[
\operatorname{col}(\widetilde Z_\ell)
=
\operatorname{col}(Z_\ell).
\]
Let
\[
\widetilde v_\ell
=
\frac{1}{n}
\min_B
\left\|
\widetilde Y-\widetilde Z_\ell B'
\right\|_F^2,
\]
with
\[
\widetilde v_0
=
\frac{1}{n}
\|\widetilde Y\|_F^2.
\]

Set
\[
\Delta_Y
=
\hat Y-\widetilde Y,
\qquad
\Delta_{Z,\ell}
=
\hat Z_\ell-\widetilde Z_\ell.
\]
Since $p_{\max}$ is fixed, Proposition \ref{consistency_2} implies
\begin{equation}
\label{eq:factor_VAR_rate}
\frac{1}{n}\|\Delta_Y\|_F^2
+
\max_{1\leq\ell\leq p_{\max}}
\frac{1}{n}\|\Delta_{Z,\ell}\|_F^2
=
O_p(a_{NT}).
\end{equation}

For a full-column-rank matrix $A$, let
\[
P_A=A(A'A)^{-1}A',
\qquad
M_A=I_n-P_A.
\]
Assumption \ref{var_order_regularity}, the boundedness of
$H_f^{-1}$, and the fact that $p_{\max}$ is fixed imply
\[
\min_{1\leq\ell\leq p_{\max}}
\frac{\sigma_{\min}(\widetilde Z_\ell)}{\sqrt n}
\]
is bounded away from zero in probability. Therefore, a standard
projection perturbation bound and \eqref{eq:factor_VAR_rate} give
\begin{equation}
\label{eq:VAR_projector_rate}
\max_{1\leq\ell\leq p_{\max}}
\left\|
P_{\hat Z_\ell}
-
P_{\widetilde Z_\ell}
\right\|
=
O_p(\sqrt{a_{NT}}).
\end{equation}

We prove separately that underfitted orders have a fixed fit loss and
that the improvement from overfitting is only $O_p(a_{NT})$.

\medskip
\noindent
\emph{Underfitted orders.}

Fix $\ell<p_0$. The residual matrices based on the estimated and
rotated true factors satisfy
\begin{align*}
&
\left\|
M_{\hat Z_\ell}\hat Y
-
M_{\widetilde Z_\ell}\widetilde Y
\right\|_F\\
&\qquad\leq
\|\Delta_Y\|_F
+
\left\|
P_{\hat Z_\ell}
-
P_{\widetilde Z_\ell}
\right\|
\|\widetilde Y\|_F\\
&\qquad=
O_p\left(\sqrt{na_{NT}}\right).
\end{align*}
Since both residual norms are $O_p(\sqrt n)$,
\begin{equation}
\label{eq:underfit_trace_approximation}
\max_{0\leq\ell\leq p_{\max}}
\left|
\hat v_\ell-\widetilde v_\ell
\right|
=
O_p(\sqrt{a_{NT}})
=
o_p(1).
\end{equation}

Let $\Sigma_\ell^\star$ denote the population residual covariance
matrix from the linear projection of $f_t$ on its first $\ell$ lags.
Minimality of the VAR order implies that, for every $\ell<p_0$,
\[
\Sigma_\ell^\star
=
\Sigma_u+D_\ell,
\qquad
D_\ell\succeq0,
\qquad
D_\ell\neq0.
\]
The usual VAR laws of large numbers therefore imply
\[
\widetilde v_\ell-\widetilde v_{p_0}
=
\operatorname{tr}
\left[
H_f
\left(
\hat\Sigma_\ell^F-
\hat\Sigma_{p_0}^F
\right)
H_f'
\right],
\]
where
\[
\hat\Sigma_\ell^F
=
\frac{1}{n}Y'M_{Z_\ell}Y.
\]
Since $\|H_f^{-1}\|=O_p(1)$ and
\[
\hat\Sigma_\ell^F-
\hat\Sigma_{p_0}^F
\to_p D_\ell,
\]
there exists a constant $d_\ell>0$ such that
\[
\mathbb P\left\{
\widetilde v_\ell-\widetilde v_{p_0}
\geq d_\ell
\right\}
\to1.
\]
Equation \eqref{eq:underfit_trace_approximation} then gives
\begin{equation}
\label{eq:underfit_trace_gap}
\mathbb P\left\{
\hat v_\ell-\hat v_{p_0}
\geq\frac{d_\ell}{2}
\right\}
\to1.
\end{equation}

Because $g(N,T)\to0$,
\begin{align*}
&
\operatorname{PIC}_{\mathrm{VAR}}(\ell)
-
\operatorname{PIC}_{\mathrm{VAR}}(p_0)\\
&\qquad=
\hat v_\ell-\hat v_{p_0}
-
\kappa q^2(p_0-\ell)g(N,T)
>0
\end{align*}
with probability approaching one.

\medskip
\noindent
\emph{Overfitted orders.}

Fix $\ell>p_0$. Let
\[
\hat X_\ell
=
\begin{pmatrix}
\hat F_{-(p_0+1)}&
\cdots&
\hat F_{-\ell}
\end{pmatrix}
\]
collect the additional estimated lags, and define
\[
\hat W_\ell
=
M_{\hat Z_{p_0}}\hat X_\ell.
\]
Define $\widetilde X_\ell$ and
\[
\widetilde W_\ell
=
M_{\widetilde Z_{p_0}}\widetilde X_\ell
\]
analogously using the rotated true factors.

The additional-lag covariance condition in Assumption
\ref{var_order_regularity} and the same perturbation argument used in
\eqref{eq:VAR_projector_rate} imply
\begin{equation}
\label{eq:extra_lag_projector_rate}
\left\|
P_{\hat W_\ell}
-
P_{\widetilde W_\ell}
\right\|
=
O_p(\sqrt{a_{NT}}).
\end{equation}

Let
\[
\hat R_{p_0}
=
M_{\hat Z_{p_0}}\hat Y,
\qquad
\widetilde R_{p_0}
=
M_{\widetilde Z_{p_0}}\widetilde Y.
\]
Equation \eqref{eq:factor_VAR_rate} and
\eqref{eq:VAR_projector_rate} give
\begin{equation}
\label{eq:VAR_residual_rate}
\left\|
\hat R_{p_0}
-
\widetilde R_{p_0}
\right\|_F
=
O_p\left(\sqrt{na_{NT}}\right).
\end{equation}

Under the true VAR$(p_0)$ model,
\[
Y=Z_{p_0}B_{p_0}'+U,
\]
where $U$ collects the innovations. Therefore,
\[
\widetilde R_{p_0}
=
M_{Z_{p_0}}UH_f'.
\]
Moreover, the martingale-difference property of $u_t$ and the fixed
dimension of the additional-lag regressors imply
\begin{equation}
\label{eq:true_overfit_projection}
\left\|
P_{\widetilde W_\ell}
\widetilde R_{p_0}
\right\|_F^2
=
O_p(1).
\end{equation}

By the Frisch--Waugh--Lovell theorem,
\[
n
\left(
\hat v_{p_0}-\hat v_\ell
\right)
=
\left\|
P_{\hat W_\ell}
\hat R_{p_0}
\right\|_F^2.
\]
Using \eqref{eq:extra_lag_projector_rate},
\eqref{eq:VAR_residual_rate}, and
\eqref{eq:true_overfit_projection},
\begin{align*}
\left\|
P_{\hat W_\ell}
\hat R_{p_0}
\right\|_F
&\leq
\left\|
P_{\widetilde W_\ell}
\widetilde R_{p_0}
\right\|_F\\
&\quad+
\left\|
P_{\hat W_\ell}
-
P_{\widetilde W_\ell}
\right\|
\|\widetilde R_{p_0}\|_F\\
&\quad+
\left\|
\hat R_{p_0}
-
\widetilde R_{p_0}
\right\|_F\\
&=
O_p(1)
+
O_p\left(\sqrt{na_{NT}}\right).
\end{align*}
Consequently,
\begin{equation}
\label{eq:overfit_trace_rate}
0
\leq
\hat v_{p_0}-\hat v_\ell
=
O_p\left(
\frac{1}{n}+a_{NT}
\right)
=
O_p(a_{NT}),
\end{equation}
because $n/T\to1$ and $T^{-1}\leq a_{NT}$.

It follows that
\begin{align*}
&
\operatorname{PIC}_{\mathrm{VAR}}(\ell)
-
\operatorname{PIC}_{\mathrm{VAR}}(p_0)\\
&\qquad=
-
\left(
\hat v_{p_0}-\hat v_\ell
\right)
+
\kappa q^2(\ell-p_0)g(N,T).
\end{align*}
Dividing by $g(N,T)$ and using
\[
\frac{a_{NT}}{g(N,T)}\to0,
\]
we obtain
\[
\frac{
\operatorname{PIC}_{\mathrm{VAR}}(\ell)
-
\operatorname{PIC}_{\mathrm{VAR}}(p_0)
}{
g(N,T)
}
=
\kappa q^2(\ell-p_0)+o_p(1)>0
\]
with probability approaching one.

The candidate set
\[
\{0,\ldots,p_{\max}\}
\]
is fixed and finite. Hence, the preceding comparisons hold
simultaneously for every $\ell\neq p_0$, and therefore
\[
\mathbb P\{\hat p=p_0\}\to1.
\]
\end{proof}

\subsection{Sufficient Conditions for Assumptions
\ref{pd} and \ref{lower_bound_common}}
\label{apx:discussion_assumption}

This subsection provides sufficient conditions for the factor-side restriction in Assumption \ref{pd} and for the uniform separation condition in Assumption \ref{lower_bound_common}. The loading-side pervasiveness condition in Assumption \ref{pd} is maintained separately.

\paragraph{Factor-side nondegeneracy.}

Recall that
\[
g_t=
\begin{pmatrix}
f_t'&
f_{t-1}'&
\cdots&
f_{t-m_0+1}'
\end{pmatrix}'
\in\mathbb R^{r_0},
\qquad
r_0=q_0m_0,
\]
and that $G$ is the $T\times r_0$ matrix whose $t$-th row is $g_t'$. Suppose that $(f_t)$ is covariance stationary with spectral density matrix $S_f(\omega)$ and that
\[
\frac{1}{T}\sum_{t=1}^Tg_tg_t'
\to_p
\mathbb E(g_tg_t').
\]
We show that
\[
\mathbb E(g_tg_t')>0
\]
whenever $S_f(\omega)>0$ almost everywhere on a measurable set of frequencies with positive Lebesgue measure. Let
\[
a=
\begin{pmatrix}
a_0'&
a_1'&
\cdots&
a_{m_0-1}'
\end{pmatrix}'
\in\mathbb R^{r_0}
\]
be nonzero, where $a_k\in\mathbb R^{q_0}$. Then
\[
a'g_t
=
\sum_{k=0}^{m_0-1}a_k'f_{t-k}.
\]
Moreover,
\[
a'\mathbb E(g_tg_t')a
=
\mathbb E\!\left[(a'g_t)^2\right]
\geq
\operatorname{Var}(a'g_t).
\]
It therefore suffices to show that
\[
\operatorname{Var}(a'g_t)>0
\qquad
\text{for every }a\neq0.
\]
Define the vector-valued trigonometric polynomial
\[
A(\omega)
=
\sum_{k=0}^{m_0-1}a_ke^{ik\omega}
\in\mathbb C^{q_0}.
\]
By the spectral representation of a finite linear filter,
\[
\operatorname{Var}(a'g_t)
=
\frac{1}{2\pi}
\int_{-\pi}^{\pi}
A(\omega)^*S_f(\omega)A(\omega)\,d\omega,
\]
where $^*$ denotes conjugate transpose.

Let
\[
\mathcal B
=
\left\{
\omega\in[-\pi,\pi]:
S_f(\omega)>0
\right\},
\]
and suppose that $\mathcal B$ has positive Lebesgue measure. Because $a\neq0$, at least one component of $A(\omega)$ is a nonzero trigonometric polynomial. Hence,
\[
\mathcal Z
=
\left\{
\omega\in[-\pi,\pi]:
A(\omega)=0
\right\}
\]
has Lebesgue measure zero. It follows that $\mathcal B\setminus\mathcal Z$ has positive measure. For every $\omega\in\mathcal B\setminus\mathcal Z$,
\[
A(\omega)^*S_f(\omega)A(\omega)>0.
\]
Since $S_f(\omega)$ is positive semidefinite almost everywhere, the integrand is nonnegative almost everywhere and strictly positive on a set of positive measure. Therefore,
\[
\operatorname{Var}(a'g_t)>0.
\]
Thus,
\[
a'\mathbb E(g_tg_t')a>0
\qquad
\text{for every }a\neq0,
\]
which establishes
\[
\mathbb E(g_tg_t')>0.
\]
The assumed law of large numbers then gives
\[
\frac{1}{T}G'G
=
\frac{1}{T}\sum_{t=1}^Tg_tg_t'
\to_p
\mathbb E(g_tg_t')
\succ0.
\]
Hence, the factor-side restriction in Assumption \ref{pd} holds.

\paragraph{Uniform separation under underfactoring.}

Let
\[
C_0
=
\sum_{k=0}^{m_0-1}F_{-k}\lambda_k'
\in\mathbb R^{T\times N}
\]
denote the sample matrix of true common components. For
$q=0,\ldots,q_0-1$ and $m=1,\ldots,m_{\max}$, define the optimized sample mean-squared approximation error
\[
Q_{NT}(q,m)
=
\inf_{C\in\mathcal C_{q,m}}
\frac{1}{NT}\|C_0-C\|_F^2,
\]
where $\mathcal C_{q,m}$ is the sample dynamic approximation class defined in Section \ref{sec:test}. In particular, $\mathcal C_{0,m}=\{0\}$.

Let $\mathscr C_{q,m}$ denote the corresponding population class of second-order stationary approximations of the form
\[
\widetilde\chi_t
=
\sum_{k=0}^{m-1}B_kh_{t-k},
\qquad
h_t\in\mathbb R^q,
\]
where $(h_t)$ is jointly second-order stationary with $(\chi_t)$. Define the optimized population mean-squared approximation error by
\[
Q_N(q,m)
=
\inf_{\widetilde\chi\in\mathscr C_{q,m}}
\frac{1}{N}
\mathbb E
\left[
\|\chi_t-\widetilde\chi_t\|^2
\right].
\]

The following high-level condition imposes the required convergence of the optimized sample criterion to its population counterpart.

\begin{assumption}[Convergence of optimized approximation criteria]
\label{optimized_criterion_convergence}
Let $q_0$ and $m_{\max}$ be fixed. Then
\[
\max_{\substack{0\leq q<q_0\\1\leq m\leq m_{\max}}}
\left|
Q_{NT}(q,m)-Q_N(q,m)
\right|
\to_p0.
\]
\end{assumption}

Because the candidate set is finite, Assumption \ref{optimized_criterion_convergence} is equivalent to requiring
\[
Q_{NT}(q,m)-Q_N(q,m)\to_p0
\]
for each fixed pair
\[
q=0,\ldots,q_0-1,
\qquad
m=1,\ldots,m_{\max}.
\]
The assumption directly imposes convergence of the optimized criterion and therefore does not require an interchange of an infimum and a pointwise ergodic limit.

Let $S_{\chi,N}(\omega)$ denote the $N\times N$ spectral density matrix of the true common component, and let
\[
\mu_{1,N}(\omega)
\geq
\mu_{2,N}(\omega)
\geq
\cdots
\geq
\mu_{N,N}(\omega)
\geq0
\]
denote its ordered eigenvalues. Under the true dynamic factor structure, at most the first $q_0$ eigenvalues are nonzero at each frequency.

\begin{proposition}[A sufficient condition for uniform separation]
\label{prop:sufficient_lower_bound_common}
Suppose that Assumption \ref{optimized_criterion_convergence} holds and that $m_{\max}\geq m_0$ is fixed. Suppose further that there exists a constant $c_\mu>0$ such that
\[
\mathbb P\left(
\frac{1}{2\pi N}
\int_{-\pi}^{\pi}
\mu_{q_0,N}(\omega)\,d\omega
<c_\mu
\right)
\to0.
\]
Then Assumption \ref{lower_bound_common} holds.
\end{proposition}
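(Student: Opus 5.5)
We need to show that the $\sqrt{NT}$-scaled dynamic singular value $\delta_{q+1,m}(C_0)/\sqrt{NT}$ is bounded away from zero in probability, uniformly over $0\le q<q_0$ and $1\le m\le m_{\max}$. The plan is to go from the spectral norm to the Frobenius norm, then to the sample criterion $Q_{NT}$, then to the population criterion $Q_N$ via Assumption \ref{optimized_criterion_convergence}, and finally to bound $Q_N$ below by the dynamic eigenvalue $\mu_{q_0,N}$.

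\emph{Step 1: spectral to Frobenius.} For any $C\in\mathcal C_{q,m}$ we have $\operatorname{rank}(C_0-C)\le r_0+qm\le r_0+q_0m_{\max}=:K$, a fixed number. Hence
\[
\|C_0-C\|\ge K^{-1/2}\fnorm{C_0-C}.
\]
Taking infima gives
\[
\frac{\delta_{q+1,m}^2(C_0)}{NT}\ge K^{-1}Q_{NT}(q,m).
\]
By Assumption \ref{optimized_criterion_convergence} and finiteness of the candidate set, it therefore suffices to show that
\[
\min_{q<q_0,\;m\le m_{\max}} Q_N(q,m)\ge c
\]
for some $c>0$, with probability approaching one. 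Since $\mathcal C_{q,m}\subseteq\mathcal C_{q_0-1,m}$ for $q<q_0$ (pad with zero factors), it is enough to treat $q=q_0-1$.

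\emph{Step 2: population bound via the spectrum.} Fix any $\widetilde\chi_t=\sum_{k<m}B_kh_{t-k}$ with $h_t\in\mathbb R^{q}$ and $q\le q_0-1$, jointly stationary with $\chi_t$. Let $S_{\chi,N}$, $S_{\widetilde\chi}$ and the cross spectrum define the spectral density of $\chi_t-\widetilde\chi_t$. Then
\[
\frac1N\mathbb E\|\chi_t-\widetilde\chi_t\|^2=\frac{1}{2\pi N}\int_{-\pi}^{\pi}\operatorname{tr} S_{\chi-\widetilde\chi}(\omega)\,d\omega.
\]
At each frequency, $S_{\widetilde\chi}(\omega)=B(e^{-i\omega})S_h(\omega)B(e^{-i\omega})^*$ has rank at most $q\le q_0-1$, where $B(z)=\sum_k B_kz^k$. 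The main tool is a frequency-wise Eckart--Young argument. The matrix $S_{\chi-\widetilde\chi}(\omega)$ is the covariance spectrum of the difference of two jointly stationary processes, and $\widetilde\chi$ lives in a rank-$q$ subspace at each frequency. Writing the Cramér representations $d Z_\chi$ and $d Z_{\widetilde\chi}$ with $dZ_{\widetilde\chi}(\omega)\in\operatorname{range}B(e^{-i\omega})$, a subspace of dimension at most $q$, we get, with $P_\omega$ the orthogonal projector onto that range,
\[
\operatorname{tr}S_{\chi-\widetilde\chi}(\omega)\ge\operatorname{tr}\big[(I-P_\omega)S_{\chi,N}(\omega)\big]\ge\sum_{j>q}\mu_{j,N}(\omega)\ge\mu_{q_0,N}(\omega).
\]
Integrating and taking the infimum over $\mathscr C_{q,m}$ yields
\[
Q_N(q,m)\ge\frac{1}{2\pi N}\int_{-\pi}^{\pi}\mu_{q_0,N}(\omega)\,d\omega,
\]
which is at least $c_\mu$ with probability approaching one by hypothesis.

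\emph{Step 3: conclusion.} Combining the steps, with probability approaching one,
\[
\min_{q<q_0,\;m\le m_{\max}}\frac{\delta^2_{q+1,m}(C_0)}{NT}\ge K^{-1}\big(c_\mu-o_p(1)\big),
\]
which gives Assumption \ref{lower_bound_common} with $c_2=\sqrt{c_\mu/(2K)}$.

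The main obstacle is making the frequency-wise projection inequality in Step 2 rigorous for general jointly stationary $h_t$, since these need not have absolutely continuous spectra. I would first handle this with spectral measures and Radon--Nikodym derivatives relative to a dominating measure. I would also need to check that the $q=0$ case, where $\widetilde\chi=0$, is covered directly, since then $\operatorname{tr}S_\chi\ge\mu_{q_0,N}$.
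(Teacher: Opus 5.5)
Your proposal is correct and follows essentially the same route as the paper. Both arguments use the rank bound $\|C_0-C\|_F^2\le (r_0+qm)\|C_0-C\|^2$ to pass from $\delta_{q+1,m}(C_0)$ to $Q_{NT}$, then Assumption \ref{optimized_criterion_convergence} to pass to $Q_N$, and then the frequency-wise best rank-$q$ bound $Q_N(q,m)\ge \frac{1}{2\pi N}\int\sum_{j>q}\mu_{j,N}(\omega)\,d\omega\ge\frac{1}{2\pi N}\int\mu_{q_0,N}(\omega)\,d\omega$. Your projection argument onto $\operatorname{range}B(e^{-i\omega})$ supplies a justification for that frequency-domain Eckart--Young step, which the paper only invokes; the paper's slightly smaller constant $R_{\max}=r_0+(q_0-1)m_{\max}$ in place of your $K$ is immaterial.
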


\begin{proof}
Every process in $\mathscr C_{q,m}$ has dynamic rank at most $q$. By the frequency-domain best rank-$q$ approximation bound,
\[
Q_N(q,m)
\geq
\frac{1}{2\pi N}
\int_{-\pi}^{\pi}
\sum_{j=q+1}^{q_0}
\mu_{j,N}(\omega)\,d\omega.
\]
Since $q<q_0$,
\[
Q_N(q,m)
\geq
\frac{1}{2\pi N}
\int_{-\pi}^{\pi}
\mu_{q_0,N}(\omega)\,d\omega.
\]
Therefore,
\[
\mathbb P\left(
\min_{\substack{0\leq q<q_0\\1\leq m\leq m_{\max}}}
Q_N(q,m)
<c_\mu
\right)
\to0.
\]

Assumption \ref{optimized_criterion_convergence} then implies
\[
\mathbb P\left(
\min_{\substack{0\leq q<q_0\\1\leq m\leq m_{\max}}}
Q_{NT}(q,m)
<\frac{c_\mu}{2}
\right)
\to0.
\]

For any $C\in\mathcal C_{q,m}$,
\[
\operatorname{rank}(C_0-C)
\leq
\operatorname{rank}(C_0)+\operatorname{rank}(C)
\leq
r_0+qm.
\]
Consequently,
\[
\|C_0-C\|_F^2
\leq
(r_0+qm)\|C_0-C\|^2.
\]
Taking the infimum over $C\in\mathcal C_{q,m}$ gives
\[
\frac{1}{NT}
\delta_{q+1,m}^2(C_0)
\geq
\frac{Q_{NT}(q,m)}{r_0+qm}.
\]

Let
\[
R_{\max}
=
r_0+(q_0-1)m_{\max}.
\]
For every
\[
0\leq q<q_0,
\qquad
1\leq m\leq m_{\max},
\]
we have
\[
r_0+qm\leq R_{\max}.
\]
It follows that
\[
\mathbb P\left(
\min_{\substack{0\leq q<q_0\\1\leq m\leq m_{\max}}}
\frac{1}{\sqrt{NT}}
\delta_{q+1,m}(C_0)
<
\sqrt{\frac{c_\mu}{2R_{\max}}}
\right)
\to0.
\]
Thus, for every $\epsilon>0$, one may choose any
\[
0<c_2<
\sqrt{\frac{c_\mu}{2R_{\max}}}
\]
so that
\[
\mathbb P\left(
\frac{1}{\sqrt{NT}}
\delta_{q+1,m}(C_0)
<c_2
\right)
\leq\epsilon
\]
for all
\[
q=0,\ldots,q_0-1,
\qquad
m=1,\ldots,m_{\max},
\]
and all sufficiently large $N$ and $T$. This is Assumption
\ref{lower_bound_common}.
\end{proof}

The spectral condition in Proposition
\ref{prop:sufficient_lower_bound_common} follows from the factor and loading conditions used elsewhere in the paper. Define the frequency-domain loading matrix
\[
\Lambda_N(\omega)
=
\sum_{k=0}^{m_0-1}
\lambda_ke^{-ik\omega}
\in\mathbb C^{N\times q_0}.
\]
Then
\[
S_{\chi,N}(\omega)
=
\Lambda_N(\omega)
S_f(\omega)
\Lambda_N(\omega)^*.
\]
Let
\[
V(\omega)
=
\begin{pmatrix}
I_{q_0}\\
e^{-i\omega}I_{q_0}\\
\vdots\\
e^{-i(m_0-1)\omega}I_{q_0}
\end{pmatrix}
\in\mathbb C^{r_0\times q_0}.
\]
Since
\[
\Gamma
=
\begin{pmatrix}
\lambda_0&
\lambda_1&
\cdots&
\lambda_{m_0-1}
\end{pmatrix},
\]
we have
\[
\Lambda_N(\omega)=\Gamma V(\omega)
\]
and
\[
V(\omega)^*V(\omega)=m_0I_{q_0}.
\]
It follows that
\begin{align*}
\lambda_{\min}
\left(
\frac{1}{N}
\Lambda_N(\omega)^*
\Lambda_N(\omega)
\right)
&=
\lambda_{\min}
\left[
V(\omega)^*
\left(
\frac{\Gamma'\Gamma}{N}
\right)
V(\omega)
\right]\\
&\geq
m_0
\lambda_{\min}
\left(
\frac{\Gamma'\Gamma}{N}
\right).
\end{align*}

The nonzero eigenvalues of $S_{\chi,N}(\omega)$ are the eigenvalues of
\[
S_f(\omega)^{1/2}
\Lambda_N(\omega)^*
\Lambda_N(\omega)
S_f(\omega)^{1/2}.
\]
Therefore,
\[
\frac{\mu_{q_0,N}(\omega)}{N}
\geq
m_0
\lambda_{\min}
\left(
\frac{\Gamma'\Gamma}{N}
\right)
\lambda_{\min}\bigl(S_f(\omega)\bigr).
\]

Suppose that $S_f(\omega)\succ0$ almost everywhere on a measurable set $\mathcal B$ of positive Lebesgue measure. Then
\[
c_f
=
\frac{1}{2\pi}
\int_{\mathcal B}
\lambda_{\min}\bigl(S_f(\omega)\bigr)\,d\omega
>0.
\]
Consequently,
\[
\frac{1}{2\pi N}
\int_{-\pi}^{\pi}
\mu_{q_0,N}(\omega)\,d\omega
\geq
m_0c_f
\lambda_{\min}
\left(
\frac{\Gamma'\Gamma}{N}
\right).
\]
By the loading-side condition in Assumption \ref{pd}, the right-hand side is bounded away from zero in probability. Hence, the spectral condition in Proposition
\ref{prop:sufficient_lower_bound_common} holds.

In particular, a stable and invertible VARMA factor process driven by innovations with positive-definite covariance has
\[
S_f(\omega)\succ0
\qquad
\text{for every }\omega\in[-\pi,\pi].
\]
Together with the loading-side pervasiveness condition in Assumption \ref{pd} and Assumption
\ref{optimized_criterion_convergence}, such a process therefore satisfies Assumption \ref{lower_bound_common}.

\subsection{Estimation using Discrete (Fast) Fourier Transform}\label{sec:DFF}
The objective function is given by
\begin{align*}
    \frac{1}{NT} \fnorm{X - \sum_{k=0}^{m-1} F_{-k} \lambda_k'}^2 = \frac{1}{NT}\sum_{t=1}^T\left(x_t-\sum_{k=0}^{m-1} \lambda_k f_{t-k}\right)'\left(x_t-\sum_{k=0}^{m-1} \lambda_k f_{t-k}\right)
\end{align*}
The first order conditions for $f_t$ are given by
\begin{align*}
    \sum_{j=\max(0,1-t)}^{\min(m-1,T-t)}\lambda_j'\left(x_{t+j}-\sum_{k=0}^{m-1}\lambda_kf_{t+j-k}\right) = 0
\end{align*}
for $t=2-m,\ldots,T$. 

Let $(T+m-1)\times N$ dimensional matrix $\tilde X$ and $(T+m-1)\times q$ matrix $\tilde F_{-k}$ be given as
\begin{align*}
    \tilde X = \begin{pmatrix}
        0_{(m-1)\times N}\\
        X
    \end{pmatrix}, \tilde F_{-k} = \begin{pmatrix}
        0_{(m-1)\times q}\\
        F_{-k}
    \end{pmatrix},
\end{align*}
i.e., matrices $X$ and $F_{-k}$ that have been augmented with $m-1$ zero vectors on top. 

Given $a = (a_1,\ldots,a_{T+m-1})'\in\mathbb{R}^{T+m-1}$, we define $P$ as the one-step upward cyclic permutation that acts as
\begin{align*}
    (Pa)_t = a_{t+1(\text{mod }T+m-1)},
\end{align*}
i.e., $P$ sends $(a_1,\ldots,a_{T+m-1})$ to $(a_2,\ldots,a_{T+m-2},a_1)$. Let $P_j=P^{j}$, i.e., $j$-step upward cyclic permutation. Then, the first order conditions for $(f_t)$ may be summarized as
\begin{align*}
    \sum_{j=0}^{m-1}P_j\left(\sum_{k=0}^{m-1}\hat{\tilde F}_{-k}\hat \lambda_k'\right)\hat\lambda_j &= \sum_{j=0}^{m-1}P_j\tilde X\hat\lambda_j,
\end{align*}
where $\hat{\tilde F}_{-k}$ is defined using $\hat F_{-k}$. We restrict the boundary values of $f_t$ to zero, i.e., $f_t = 0$ for $t=2-m,\ldots,0$ and $t=T-m+2,\ldots,T$, with the convention that when $m=1$, $t=1,\ldots,0$ and $t=T+1,\ldots,T$ define an empty set. Then, we have
\begin{align*}
    \sum_{k,j=0}^{m-1}P_{j-k}\hat{\tilde F}\hat \lambda_k'\hat\lambda_j &= \sum_{j=0}^{m-1}P_j\tilde X\hat\lambda_j.
\end{align*}
Estimation proceeds by iterating between
\begin{align}\label{foc_lambda2}
    \sum_{k=0}^{m-1} \hat \lambda_k \hat F_{-k}' \hat F_{-j} = X' \hat F_{-j}, \quad j=0,\ldots,m-1.
\end{align}
and
\begin{align}\label{foc_f2}
    \sum_{k,j=0}^{m-1} P_{j-k} \hat{\tilde F} \hat \lambda_k' \hat \lambda_j = \sum_{j=0}^{m-1} P_j \tilde X \hat \lambda_j.
\end{align}
Given $\hat G$, the equation \ref{foc_lambda2} updates $\hat\Gamma$ using
\begin{align*}
    \hat\Gamma = X'\hat G(\hat G'\hat G)^{-1}
\end{align*}
Given $\hat \Gamma$, the equation \ref{foc_f2} updates $\hat{\tilde{F}}$ by
\begin{align*}
\vecr(\hat{\tilde F})
 = \left(\sum_{k,j=0}^{m-1}\left(P_{j-k} \otimes \hat \lambda_j' \hat \lambda_k\right)\right)^{-1}\sum_{j=0}^{m-1}\left(P_j\otimes \hat \lambda_j'\right)\vecr(\tilde X)
\end{align*}
so that naive update step for $\hat{\tilde F}$ involves inverse of $(T+m_1+m_2)q\times (T+m_1+m_2)q$ dimensional matrix. However, the structure in the Toeplitz matrix $\left(\sum_{k,j=0}^{m-1}\left(P_{j-k} \otimes \hat \lambda_j' \hat \lambda_k\right)\right)$ reduces this step into computing inverse of $q\times q$ dimensional matrices $T+m-1$ times.

We define the matrix $S$ as
\begin{align*}
    S = \sum_{j,k=0}^{m-1}\left(P_{j-k}\otimes \frac{1}{N}\hat \lambda_j'\hat \lambda_k\right)=\sum_{h=-(m-1)}^{m-1}(P_h\otimes B_h).
\end{align*}
where
\begin{align*}
    B_h = \frac{1}{N}\sum_{k=0}^{m-1}\hat \lambda_{k+h}'\hat \lambda_k, \quad h=j-k,
\end{align*}
    with the convention that $\hat\lambda_k = 0$ for $k<0$ and $k>m-1$. Let $a=(a_1',\ldots,a_{T+m-1}')'$ be $(T+m-1)q$ dimensional vector with $q$ dimensional blocks given by $(a_t)$ and $b=(b_{1}',\ldots,b_{T+m-1}')'$ be $(T+m-1)q$ dimensional vector given by
    \begin{align*}
        b = Sa
    \end{align*}
    We may define the blockwise discrete Fourier transform for frequencies $\omega_\ell = 2\pi \ell/(T+m-1), \ell=1,\ldots,T+m-1$ to both input and output sequence
    \begin{align*}
        a(\omega_\ell) = \sum_{t=1}^{T+m-1}a_t e^{-i\omega_\ell t},\quad        b(\omega_\ell) = \sum_{t=1}^{T+m-1}b_t e^{-i\omega_\ell t}
    \end{align*}
    which give a $q$-vector for each $\ell=1,\ldots,T+m-1$. We have
    \begin{align*}
        b(\omega_\ell) = B(e^{i\omega_\ell})a(\omega_\ell)
    \end{align*}
    with 
    \begin{align*}
        B(e^{i\omega}) = \sum_{h=-(m-1)}^{m-1}B_he^{i\omega h}.
    \end{align*}
    Thus, solving for $a$ given $S$ and $b$ in $b=Sa$ reduces to computing
    \begin{align*}
        a(\omega_\ell) = B(e^{i\omega_\ell})^{-1}b(\omega_\ell)
    \end{align*}
    for each $\ell=1,\ldots,T+m-1$. After we obtain $a(\omega_\ell)$, we may obtain $(a_t)$ using inverse Fourier transform.

\subsection{Monetary Policy Shock Identification using Dynamic Factor Model}\label{sec:apd-mp}

Let $(x_t)$ denote an $N$-dimensional time series of macroeconomic variables generated by the dynamic factor model
\begin{align}\label{dfm-shock}
    x_t = \sum_{k=0}^{m-1}\lambda_k f_{t-k} + \varepsilon_t,
\end{align}
where $q$-dimensional factor process $(f_t)$ follows a VAR($p$)
\[
f_t = \sum_{l=1}^p A_l f_{t-l} + u_t,
\qquad 
u_t = f_t - \mathbb{E}[f_t \mid (f_s)_{s<t}],
\]
with innovation covariance matrix $\Sigma_u = \mathbb{E}(u_t u_t')$. Let $\Sigma_u = QQ'$ be a Cholesky decomposition. Then, for any invertible $q\times q$ matrix $H$, the factor dynamics can be written as
\[
f_t = \sum_{l=1}^p A_l f_{t-l} + QH e_t,
\qquad 
e_t = H^{-1} Q^{-1} u_t .
\]
Substituting into the measurement equation yields the structural moving-average representation
\[
x_t
=
\sum_{k=0}^{m-1}
\lambda_k L^k
\left(I - \sum_{l=1}^p A_l L^l \right)^{-1}
QH e_t
+ \varepsilon_t ,
\]
where $L$ is a lag operator so that the impulse response of $x_t$ to the lagged structural shocks $e_{t-h}$ is given by
\[
\sum_{k=0}^{m-1}
\lambda_k L^k
\left(I - \sum_{l=1}^p A_l L^l \right)^{-1}
QH .
\]
The matrix $H$ can be identified using standard identification schemes from the SVAR literature.

For comparison, we also estimate the model as in \cite{Forni-Gambetti-2010}. In this approach, we first estimate the static representation
\[
x_t = \Gamma g_t + \varepsilon_t ,
\]
where $g_t\in\mathbb{R}^{qm}$ and then fit a VAR($\tilde p$) to the static factors, 
\[
g_t = \sum_{l=1}^{\tilde p} \tilde A_l g_{t-l} + \tilde u_t,
\qquad 
\tilde u_t = g_t - \mathbb{E}[g_t \mid (g_s)_{s<t}].
\]
Given the true model in \eqref{dfm-shock}, the projection residual $\tilde u_t$ is given by $\tilde u_t = \begin{pmatrix}
    u_t',0,\ldots,0
\end{pmatrix}'$ and the innovation covariance matrix $\tilde \Sigma_u = \mathbb{E}(\tilde u_t\tilde u_t')$ is $r$ dimensional with rank $q$. Let $R$ be $r\times q$ dimensional matrix satisfying $\tilde \Sigma_u = RR'$. As before, for any invertible $q\times q$ matrix $H$, the factor dynamics can be written as
\[
g_t = \sum_{l=1}^{\tilde p} \tilde A_l g_{t-l} + RH e_t,
\qquad 
e_t = H^{-1} (R'R)^{-1}R'\tilde u_t .
\]
Structural shocks $e_t$ are again identified through restrictions on $H$. The low rank representation $R$ of $\tilde \Sigma_u$ is estimated using the leading eigenvectors of the residual sample covariance matrix as in \cite{Forni-Gambetti-2010}.

We estimate the dynamic effects of monetary policy shocks using both approaches, where identification follows \cite{Eichenbaum-Evans-1995}. In particular, industrial production and CPI are assumed not to respond contemporaneously to a monetary policy shock. 
All other variables, including exchange rates, are allowed to respond contemporaneously to innovations in the federal funds rate\footnote{Additional zero restrictions, other than on contemporaneous response of industrial production and CPI, can be imposed. In that case, the restrictions become over-identifying and must be enforced during estimation. This is straightforward in our alternating least squares framework. For example, in the factor loading update step, one may impose zero restrictions via constrained least squares on $(\hat f_t)$.}.



We use the FRED--MD data set of \cite{McCracken-Ng-2015}. We perform estimation using the sample period from March 1973 to November 2007. The sample excludes the fixed exchange rate regime and the global financial crisis, following \cite{Forni-Gambetti-2010}. The sample contains $T=417$ observations on $N=124$ macroeconomic variables. Stationary transformations are kept minimal: prices enter in log differences, while interest rates are in levels.

The estimated dynamic structure $(q,m)$ equals $(6,2)$, $(4,2)$, and $(2,2)$ under $DC_2$, $PC_2$, and $IC_2$, respectively. We set $(q,m)=(4,2)$ for the subsequent analysis. VAR lag orders are selected by BIC for both the estimated dynamic factors $(\hat f_t)$ and static factors $(\hat g_t)$, yielding $(p,\tilde p)=(3,1)$.

The fraction of variance explained by the model with $(q,m)=(4,2)$ is $0.56$. This exceeds the $0.46$ obtained from a static factor model with the same number of factors, $(q,m)=(4,1)$, but is below the $0.59$ achieved by the static representation of the original model with $(q,m)=(8,1)$. This is to be expected because the model with $(qm,1)$ is the most flexible, while the model with $(q,1)$ is the least flexible among the three. The estimation of the VAR$(\tilde p)$ for $(g_t)$ requires a second-step dimension reduction of the residuals $(\tilde u_t)$, reducing the dimension from $r=8$ to $q=4$. The explained variation in this second step is $0.85$. Accounting for this loss, the effective explained variation of the model $(q,m)=(8,1)$ is $0.50$, which is substantially lower than the explained variation $0.59$ of the model $(q,m) = (4,2)$.

\begin{figure}[H]
    \centering
    \includegraphics[width=1\linewidth]{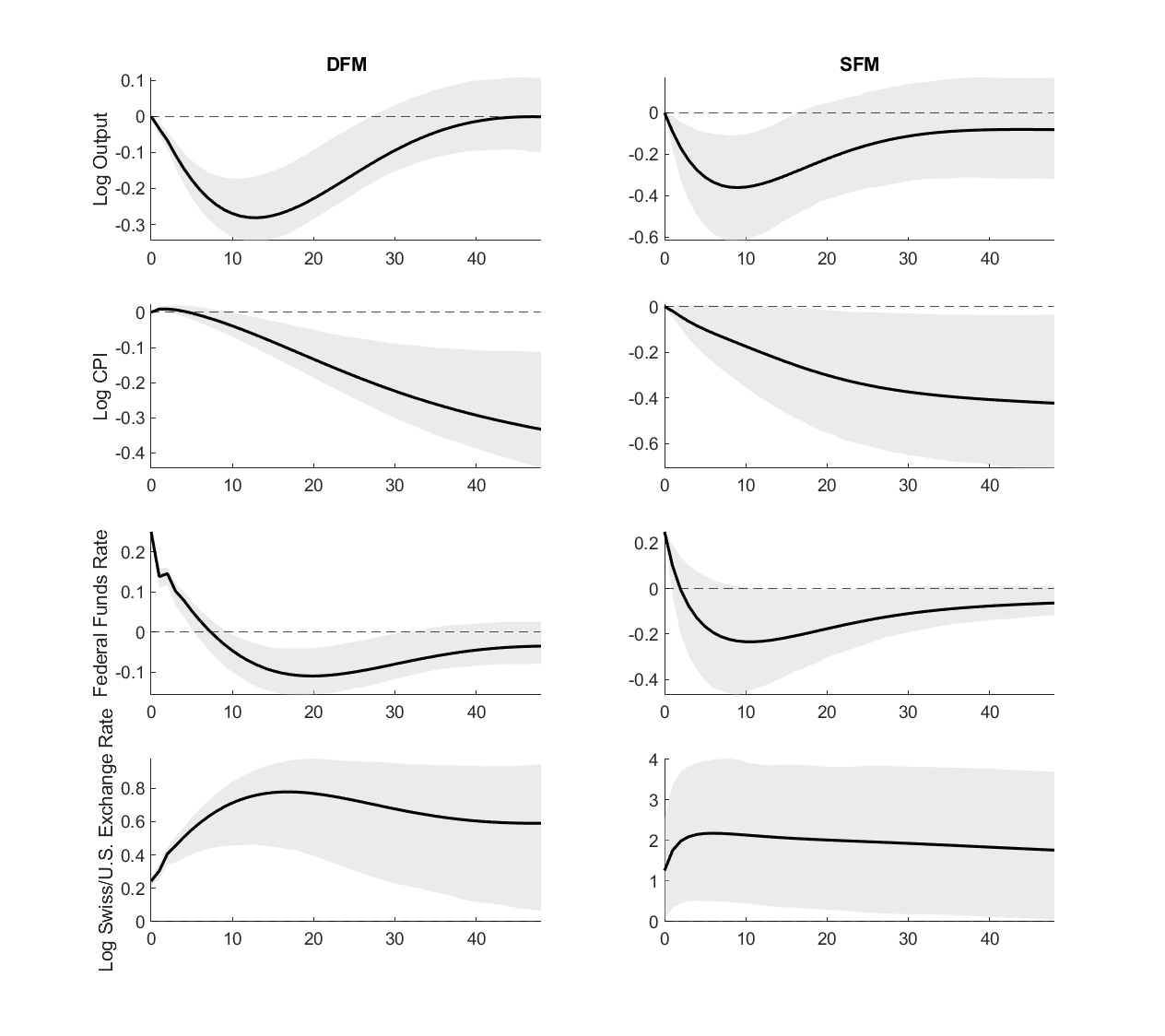}
    \caption{Estimated impulse responses of log output, log CPI, the federal funds rate, and the log exchange rate to a monetary policy shock, based on the dynamic factor model (left) and the static factor model (right).}
    \label{fig:irf_q3}
\end{figure}

Figure~\ref{fig:irf_q3} reports impulse responses of log output, log inflation, the federal funds rate, and the log exchange rate. Confidence bands are constructed using a residual bootstrap at the 90\% level. The point estimates are qualitatively similar across the two methods. However, confidence bands based on the dynamic factor model (left) are narrower, indicating efficiency gains from direct estimation of dynamic factors.

\end{document}